\newcommand{\vect}[1]{ \boldsymbol{#1} }
\newcommand{\matr}[1]{ \boldsymbol{#1} }
\newcommand{\tvect}[1]{{\overline{\boldsymbol{#1}}} }
\newcommand{\hvect}[1]{\widehat{\boldsymbol{#1}}}
\newcommand{\e}[1]{ \mathrm{e}^{#1} }
\DeclareMathOperator*{\diag}{diag}
\DeclareMathOperator*{\argmax}{argmax}
\DeclareMathOperator*{\E}{\mathbb{E}} 
\newtheorem{prop}{Proposition}
\newcommand{\CGaussPDF}{\mathrm{CN}}
\newcommand{\GammaPDF}{\mathrm{Ga}}
\newcommand{\EXP}[1]{\E_{q\left(#1\right)}}
\newcommand{\AlphaInf}{\widehat{\alpha}_{l}^{[\infty]}}
\newcommand{\notl}{-l}
\title{Joint Detection and Super-Resolution Estimation of Multipath Signal Parameter Using Incremental Automatic Relevance Determination}
\author{Dmitriy Shutin and Nicolas Schneckenburger} %
\begin{document}
\maketitle



\begin{abstract}
The presented work investigates a sparse Bayesian incremental automatic relevance determination (IARD) algorithm in the context of multipath parameter estimation in a super-resolution regime.
The corresponding estimation problem is highly nonlinear and, in general, requires an estimation of the number of multipath components.
In the IARD approach individual multipath components are processed sequentially, which permits a tractable convergence analysis of the corresponding inference expressions.
This leads to a simple condition, termed here a pruning condition, that determines if a multipath component is ``sparsified'' or retained in the model, thus realizing  a sparse estimator and permitting a fast and adaptive realization of the estimation algorithm.
Yet previous experiments demonstrated that IARD fails to select the correct number of components when the parameters entering nonlinearly the multipath model are also estimated.
To understand this effect, an analysis of the statistical structure of the pruning condition from the perspective of statistical hypothesis testing is proposed.
It is shown that the corresponding test statistic in the pruning condition follows an extreme value distribution.
As a result, when applied to the problem of multipath estimation, the standard IARD algorithm implements a statistical test with a very high probability of false alarm.
This leads to insertion of estimation artifacts and underestimation of signal sparsity.
Moreover, the probability of false alarm worsens as the number of measured signal samples grows.
Based on the developed statistical interpretation of the IARD, an optimal adjustment of the pruning condition is proposed.
This permits a reliable and efficient removal of estimation artifacts and joint estimation of signal parameters, as well as optimal model order selection within a sparse Bayesian learning framework.
The presented experiments demonstrate the effectiveness of this approach.
\end{abstract}

\begin{keywords}
Super-resolution channel estimation, model order selection, sparse Bayesian learning.
\end{keywords}



\section{Introduction}\label{sec:Intro}
Multipath propagation is known to have a significant impact on the performance of wireless communication or localization systems.
However, when the multipath channel structure is known, it can offer a key to a reliable high-rate data communication or accurate localisation.

Typically, a multipath wireless channel is assumed to consist of a linear combination of a finite number of $L$ discrete propagation paths, which we term multipath components, embedded in a white additive ambient noise and a non-white random process that represents diffuse propagation.
While multipath components can be deterministically described by a set of parameters \--- dispersion parameters that characterize specular waves propagating from the transmitter site to the receiver site, such as a propagation delay, direction of departure, direction of arrival, and a Doppler frequency \--- diffuse components are of a random nature and are characterized statistically \cite{Salmi2009,Richter2005,Jost2012a}.
In this work we are concerned with an estimation of the discrete multipath components as they are a very sought-after characteristic of a wireless propagation channel due to their direct relationship to the geometry of the propagation environment.

Historically, the problem of multipath component parameter estimation has been solved using a combination of two techniques: super-resolution (SR) parameter estimation algorithms (see e.g., \cite{Richter2005,KrimViberg20yearsArray, Fleury_SAGE99} and references therein) and model order selection \cite{Stoica2004,lanterman00schwarz, WaxKailath85}.
Parameter estimation algorithms are used to find the parameters of multipath components given measurement data and a model of a multipath  channel with a known number of superimposed components.
SR property of the estimation algorithm is essential, as an accurate estimation of component parameters beyond bandwidth resolution is often required.
Expectation-Maximization (EM) type of algorithms \cite{Richter2005,KrimViberg20yearsArray,FederWeinsteinEM88, Fleury_SAGE99}
are often used for this purpose.
They allow simplifying the numerical optimization of the objective function with respect to the dispersion parameters that enter the channel model nonlinearly.
Unfortunately, these techniques are applicable only when the order of the model, i.e., the number of specular components is known \--- a requirement that is rarely satisfied in practice.
This has motivated the use of model order selection techniques, such as Bayesian Information criterion or Minimum description length and similar \cite{Jay2005, lanterman00schwarz,Stoica2004, WaxKailath85} to determine the number of components in the model.
These methods select the model order by balancing the model complexity, i.e., a total number of parameters to be estimated, with a norm of the residual error.
Yet for the considered problem these algorithms become computationally very demanding: in order to find the optimal model order, the parameters of models with different number of components have to be estimated first, and then compared using selected criterion.
In practice, the number of components can range from a only a few to several tens of components, making separate parameter estimation and model order selection very inefficient, especially in time-varying scenarios, where the number of components can change \cite{Jost2012,Gentner2013}.

To make estimation more efficient, we propose a variational Bayesian wireless channel estimator that combines model order selection and parameter estimation within a single framework.
The proposed solution is based on merging a variational Bayesian parameter estimation \cite{Beal2003, BishopPR}, which generalizes classical EM-based SR parameter estimation algorithms, and sparse Bayesian learning (SBL) techniques \cite{Tipping2001,WipfRao04,Tzikas2008}.
Sparse reconstruction of a multipath channel can effectively solve the model order selection problem, since irrelevant multipath components will be ``sparsified'' by the algorithm; sparsity, thus, effectively controls the complexity of the estimated models.

Such multipath estimation approaches have been to some extent explored in \cite{ShutinVBSAGE} and \cite{Shut1307:Incremental}.
In \cite{ShutinVBSAGE} the authors casted the Space Alternating Generalized Expectation-Maximization (SAGE) algorithm for multipath  parameter estimation\footnote{See \cite{FesslerSAGE94} and \cite{Fleury_SAGE99} for the details on the SAGE algorithm.} in a variational Bayesian framework.
The new algorithm, termed variational Bayesian SAGE (VB-SAGE), introduces sparsity priors to jointly estimate model order via sparsity penalization and estimate the parameters of multipath components.
The VB-SAGE algorithm makes a typical assumption on the independence of individual components.
In \cite{Shut1307:Incremental} this assumption is relaxed by considering correlations between the gains of propagation paths.
By adopting a special class of SBL algorithms, known as incremental Automatic Relevance Determination (IARD) \cite{Tipping2003, ShutinFastRVM,Shutin_IARD,ShutinBuchgraber2012}, a new algorithm is proposed that, as we will show here, generalizes the VB-SAGE algorithm.
A key feature of both VB-SAGE and IARD algorithms is the structure of variational inference expressions that leads to a simple numerical condition for removing or keeping a component in the model. 
It is this condition that eventually leads to sparse estimate. 
Further in the text we refer to this condition as a pruning condition.
The pruning condition permits the reduction of the model complexity ``on the fly'', while the components are updated.
In this way model order selection and parameter estimation are realized jointly.

It has been observed, however, that some of the estimated multipath components have small, yet non-zero weights \cite{ShutinVBSAGE}.
In other words, the IARD and VB-SAGE estimators compress the measured signal, but overestimate the model order.
To cancel erroneous components an empirical threshold was adopted in \cite{ShutinBuchgraber2012,ShutinVBSAGE,ShutinFastRVM}.
The selection of the threshold exploits the link between the pruning condition and an estimate of the per-component signal-to-noise ratio (SNR).
Yet it remains unclear whether a particular choice of the threshold can be motivated more formally.
A better understanding of these aspects can be exploited not only for improving performance of IARD schemes in the presence of noise and better understanding of the IARD performance in general, but for an accurate and fast extraction of specular multipath components, as we argue in this paper.

Thus, our goals in this work can be formulated as follows: we aim to further the theoretical understanding of IARD within the context of sparse estimation of multipath component and present a more detailed analysis of the pruning condition used in the IARD algorithms.
Specifically, we show that the IARD algorithm generalizes VB-SAGE. 
Also, we demonstrate that the pruning condition used in IARD is equivalent to a statistical hypothesis test applied to a specific multipath component under the assumption that the other multipath components are fixed.
With this new interpretation it becomes possible to show that (i) within the IARD scheme the presence of a component in the model can be determined using a statistical hypothesis test of a desired test size, (ii) the test is a uniformly most powerful (UMP), (iii) probability of false alarm for this test (i.e., the probability of falsely accepting a component in the model) is upper-bounded, with the standard IARD algorithm implementing the test with the highest probability of false alarm. 

Throughout this paper we shall make use of the following notation.
Vectors are represented as boldface lowercase letters, e.g., $\vect{x}$, and matrices as boldface uppercase letters, e.g., $\matr{X}$.
For vectors and matrices $(\cdot)^H$ denotes the Hermitian transpose.
We write $[\vect{X}]_{k,l}$ to denote an element of the matrix $\vect{X}$ at the $k$th row and $l$th column.
The expression $\diag(\vect{x})$ stands for a diagonal matrix with the elements of $\vect{x}$ on the main diagonal.
For some positive-semidefinite matrix $\vect{A}$, notation $\|\vect{x}\|_{\vect{A}}=\sqrt{ \vect{x}^H\vect{A}\vect{x}}$ denotes a weighted $\ell_2$ norm of a vector $\vect{x}$.
We write $\EXP{x} f(x)$ to denote the expectation of the function $f(x)$ under the probability density function $q(x)$.
Finally, for a random vector $\vect{x}$, $\CGaussPDF(\vect{x}|\vect{a},\matr{B})$ denotes a circular complex multivariate Gaussian pdf with mean $\vect{a}$ and covariance matrix $\matr{B}$; similarly, for a random variable $x$, $\GammaPDF(x|a,b)=\frac{b^{a}}{\Gamma(a)}x^{a-1}\exp(-bx)$ denotes a gamma pdf with parameters $a$ and $b$.

\section{Signal model}
In the following sections we outline the used signal model. 
Also, the corresponding probabilistic formulation of the inference problem that builds the foundation for the variational Bayesian parameter estimation adopted here is presented.

\subsection{Multipath channel model}
Consider for simplicity a single-input\---single-output (SISO) wireless channel\footnote{The proposed method can also be extended to MIMO time-variant channels with stationary propagation constellation.
This will, however, lead to a more complicated signal model with additional dispersion parameters, while not adding any new aspect relevant to the understanding of the proposed methods.}.
The received signal $y(t)$ can be represented as a superposition of an unknown number $L$ of specular multipath components $w_l s(t;\vect{\theta}_l)$ contaminated by additive noise $\xi(t)$ (see e.g., \cite{Richter2005, ShutinVBSAGE, WirelessCommRapp}):
\begin{equation}
	y(t)=\sum_{l=1}^{L} w_l s(t;\vect{\theta}_l) + \xi(t).
	   \label{equ:ModelCom}
\end{equation}
In \eqref{equ:ModelCom} $w_l$ is a complex-valued multipath gain and $s(t;\vect{\theta}_l)$ is an altered version of some transmitted signal $x(t)$.
The alteration process is described by a (non-linear) mapping $x(t)\mapsto s(t;\vect{\theta}_l)$, where $\vect{\theta}_l$ is the vector of dispersion parameters, e.g., relative delay, Doppler shift, etc.
For a SISO channel, $s(t;\vect{\theta}_l)$ can be represented as $s(t;\vect{\theta}_l)\equiv s(t;\tau_l,\nu_l)=\mathrm{e}^{j2\pi\nu_l t}x(t-\tau_l)$, where $\vect{\theta}_l=[\tau_l,\nu_l]^T$, $\tau_l$ is a delay of the $l$th multipath component and $\nu_l$ is its Doppler shift.
In general, the nonlinear mapping  $x(t)\mapsto s(t,\vect{\theta}_l)$ also includes the measurement system effects, e.g., signal distortions at the transmitter and the receiver due to analog filtering, RF components, etc.
Additive noise $\xi(t)$ is assumed to be a zero-mean wide-sense stationary Gaussian process.
In addition to white noise, this term will also include effects due to diffuse scattering \cite{Richter2005, Jost2012a}.

In practice the signal $y(t)$ is sampled with the sampling period $T_s$, resulting in $N$ discrete measurement samples.
By stacking the samples in a vector $\vect{y}=[y(0),\ldots,y((N-1)T_s)]^T$, model \eqref{equ:ModelCom} can be rewritten in a more convenient matrix form as
\begin{equation}
        \vect{y}=\sum_{l=1}^{L}w_l\vect{s}(\vect{\theta}_l)+\vect{\xi}=\vect{S}(\vect{\Theta})\vect{w}+\vect{\xi},
        \label{equ:ModelComMatrix}
\end{equation}
where we define
$\vect{s}(\vect{\theta}_l)=[s(0;\vect{\theta}_l),\ldots, s((N-1)T_s;\vect{\theta}_l)]^T$, $\vect{\Theta}=[\vect{\theta}_1,\ldots,\vect{\theta}_L]$, $\vect{w}=[w_1,\ldots,w_L]^T$, and $\vect{S}(\vect{\Theta})=[\vect{s}(\vect{\theta}_1),\ldots,\vect{s}(\vect{\theta}_L)]$.
The term $\vect{\xi}=[\xi(0),\ldots,\xi((N-1)T_s)]^T$ is the additive noise vector that follows a circular complex normal distribution with covariance matrix $\mathbb{E}\left\{\vect{\xi}\vect{\xi}^H\right\}=\vect{\Lambda}^{-1}$.
In the following we will assume that $\vect{\Lambda}$ is known or has been estimated; the estimation of diffuse scattering statistics and white noise statistics we will leave outside the scope of this work.

\subsection{Probabilistic structure of the multipath channel model}
Expression \eqref{equ:ModelComMatrix} is the starting point for the multipath parameter estimation algorithms.
Given \eqref{equ:ModelComMatrix}, the joint model order selection and parameter estimation aims at determining the values of  $L$, $\vect{w}$, and $\vect{\Theta}$.
For fixed $L$ both $\vect{w}$ and $\vect{\Theta}$ can be found using classical maximum a posteriori (or maximum likelihood) approach, which amounts to a numerical maximization of the corresponding probability density function (pdf) $p(\vect{w},\vect{\Theta}|\vect{y})\propto p(\vect{y}|\vect{w},\vect{\Theta})p(\vect{w},\vect{\Theta})$,
where $p(\vect{y}|\vect{w},\vect{\Theta})=\CGaussPDF(\vect{y}|\vect{S}(\vect{\Theta})\vect{w},\vect{\Lambda}^{-1})$ following \eqref{equ:ModelComMatrix}.
Unfortunately, in majority of practical cases the number of multipath components $L$ is not known.
A possible approach to circumvent an explicit specification of the model order consists of imposing sparsity constraints on $\vect{w}$.
The advantage of such approach is a joint model order selection and parameter estimation within a Bayesian inference framework, as will be outlined below.

A classical SBL approach \cite{Tipping2001, WipfRao04, Tzikas2008} assumes a hierarchical factorable prior
$p(\vect{w}|\vect{\alpha})p(\vect{\alpha})=\prod_{l=1}^L p(w_l|\alpha_l)p(\alpha_l)$ for the weights $\vect{w}$, where
$p(w_l|\alpha_l)=\CGaussPDF(w_l|0,\alpha_l^{-1})$.
Parameters $\alpha_l$, also called sparsity parameters, regulate the width of this pdf and must be estimated along with the other model parameters \--- an approach referred to as \emph{empirical Bayes}.

In IARD version of SBL two techniques are combined.
First, the hyperprior $p(\vect{\alpha})$ is assumed to be non-informative by selecting $p(\vect{\alpha})\propto \prod_{l=1}^L \alpha_l^{-1}$. Such choice is known as automatic relevance determination (ARD). 
The resulting inference scheme is then similar to a weighted version of minimum $\ell_1$-norm regression and basis pursuit denoising (see \cite{Candes2008, Wipf2007,CPA:CPA20124}) \---  more traditional ``non-Bayesian'' methods for learning sparse representations.
Second, in the incremental inference approach to the SBL the corresponding objective function is optimized with respect to the parameters of one component per single algorithm iteration.
Such incremental optimization permits a fast estimation of sparsity parameters \cite{Tipping2003, ShutinFastRVM, Shutin_IARD}.
Moreover, it also underlies the EM-based multipath estimation schemes, since it simplifies nonlinear optimizations with respect to dispersion parameters $\vect{\Theta}$.
This motivates a combination of IARD and multipath inference schemes in a single framework.

The joint multipath parameter estimation and model order selection within IARD amounts to inference of the joint posterior pdf
\begin{equation}
    p(\vect{w},\vect{\Theta},\vect{\alpha}|\vect{y})\propto p(\vect{y}|\vect{w},\vect{\Theta})p(\vect{w}|\vect{\alpha})p(\vect{\alpha})p(\vect{\Theta}),
    \label{equ:PosteriorPDF}
\end{equation}
where we explicitly assume that $p(\vect{w},\vect{\Theta},\vect{\alpha})=p(\vect{w}|\vect{\alpha})p(\vect{\alpha})p(\vect{\Theta})$.
Unfortunately, \eqref{equ:PosteriorPDF} cannot be evaluated in closed form, but can be approximated using, e.g., variational Bayesian techniques \cite{BishopPR,Beal2003}.
The latter aims at estimating an approximating pdf $q(\vect{w}, \vect{\Theta}, \vect{\alpha})$ by maximizing the lower bound of the log-evidence $\log p(\vect{y})$ :
\begin{equation}
	\log p(\vect{y})\ge \EXP{\vect{w},\vect{\Theta},\vect{\alpha}}
	\log \frac{p(\vect{w},\vect{\Theta},\vect{\alpha},\vect{y})}{q(\vect{w}, \vect{\Theta}, \vect{\alpha})},
	\label{equ:LowerBoundVB}
\end{equation}
which is equivalent to minimizing the Kullback-Leibler divergence between $q(\vect{w}, \vect{\Theta}, \vect{\alpha})$ and the intractable $p(\vect{w},\vect{\Theta},\vect{\alpha}|\vect{y})$.
The complexity of the inference depends on the choice of $q(\vect{w}, \vect{\Theta}, \vect{\alpha})$.
Here we will assume that
\begin{equation}
    q(\vect{w}, \vect{\Theta}, \vect{\alpha})=q(\vect{w})\prod_{k=1}^L q(\vect{\theta}_k)q(\alpha_k).
    \label{equ:ProxyPDF}
\end{equation}
Let us now specify each factor in \eqref{equ:ProxyPDF}.
First, we will select $q(\vect{\theta}_l)=\delta(\vect{\theta}_l-\hvect{\theta}_l)$.
This assumption results in a point estimate of the dispersion parameters.
This choice simplifies the numerical optimization of the right-hand side of \eqref{equ:LowerBoundVB}.
For the factor $q(\vect{w})$ we will consider two assumptions:
\begin{align}
(\mathrm{A1}):\quad\quad&    q(\vect{w})=\prod_{l=1}^L q(w_l)=\prod_{l=1}^L\CGaussPDF(w_l|\widehat{w}_l,\widehat{\Phi}_l),\label{equ:FactorizationVBSAGE}\\
(\mathrm{A2}):\quad\quad&    q(\vect{w})=\CGaussPDF(\vect{w}|\hvect{w},\hvect{\Phi}).\label{equ:FactorizationIARD}
\end{align}
$\mathrm{A1}$ explicitly enforces a statistical independence between individual multipath components; this assumption underlies the SAGE \cite{Fleury_SAGE99} and the VB-SAGE algorithms \cite{ShutinVBSAGE} for multipath parameter estimation.
Under the assumption $\mathrm{A2}$ the gains of the components are assumed to be correlated.
This formulation is used in a classical SBL and in the IARD algorithm for multipath estimation in \cite{Shut1307:Incremental}.
In the following we will consider both assumptions and investigate their impact on multipath estimation and detection.
Let us mention here that $\mathrm{A1}$ can be obtained as a special case of $\mathrm{A2}$ by constraining $\hvect{\Phi}$ to a diagonal matrix.
The form of the factor $q(\vect{\alpha})$ can be obtained analytically as a maximizer of \eqref{equ:LowerBoundVB} for the chosen form of $q(\vect{\theta}_l)$ and $q(\vect{w})$.
For the IARD case it can be shown \cite{ShutinVBSAGE,ShutinFastRVM} that
$$
    q(\vect{\alpha})=\prod_{l=1}^L q(\alpha_l)=\prod_{l=1}^L \GammaPDF(\alpha_l; 1, \widehat{\alpha}_l^{-1}),
$$
i.e., $q(\alpha_l)$ is parameterized  by a single coefficient $\widehat{\alpha}_l$.

The maximization of the bound in \eqref{equ:LowerBoundVB} then reduces to the estimation of the parameters $\hvect{w}$, $\hvect{\Phi}$, $\widehat{\alpha}_l$, and $\hvect{\theta}_l$, $l=1,\ldots,L$ that parameterize \eqref{equ:ProxyPDF}.
In what follows we describe this in more details.

\subsection{Incremental variational inference of model parameters}\label{sec:IARD}
The IARD algorithm optimizes \eqref {equ:LowerBoundVB} with respect to the parameters of one component per iteration, cycling through the components in a round-robin fashion.
Consider now the variational inference steps for a single component $l$.
We will begin with the estimation of $q(\vect{\theta}_l)$.
To this end we define $\vect{\Theta}_{\notl}=\big[\vect{\theta}_1,\ldots,\vect{\theta}_{l-1},\vect{\theta}_{l+1},\ldots,\vect{\theta}_{L}\big]$ as a set of dispersion parameters obtained by removing $\vect{\theta}_l$ from $\vect{\Theta}$, and assume that the pdfs $q(\vect{w})$, $q(\vect{\alpha})$, and $q(\vect{\Theta}_{\notl})$ are available.\footnote{In other words, we assume that the parameters of the corresponding pdfs are known.}
The bound in \eqref{equ:LowerBoundVB} on $\log p(\vect{y})$ with respect to $q(\vect{\theta}_l)$  can then be expressed as $\log p(\vect{y})\ge\EXP{\vect{\theta}_l} \log \big(\tilde{p}(\vect{\theta}_l)/q(\vect{\theta}_l)\big)$, where
\begin{equation}
    \tilde{p}(\vect{\theta}_l)\propto \mathrm{exp}{\left(\EXP{\vect{w},\vect{\Theta}_{\notl}}\log p(\vect{y}|\vect{w},\vect{\Theta})p(\vect{\Theta})\right)}.
    \label{equ:PTheta}
\end{equation}
This bound is maximized when the Kullback-Leibler divergence between $q(\vect{\theta}_l)$ and $\tilde{p}(\vect{\theta}_l)$ is minimal.
Due to the assumed form of $q(\vect{\theta}_l)$, this is achieved when $\hvect{\theta}_l$ is aligned with the mode of $\tilde{p}(\vect{\theta}_l)$.
By computing the expectation in \eqref{equ:PTheta} it can be shown that
\begin{equation}
\begin{split}
    \hvect{\theta}_l &= \argmax_{\vect{\theta}_l} \Big\{ \log p(\vect{\theta}_l) -\|\vect{r}_l-\widehat{w}_l\vect{s}(\vect{\theta}_l)\|_{\vect{\Lambda}}^2\\
			-&\sum_{k=1,k\neq l}2\Re\left\{ [\hvect{\Phi}]_{k,l}\vect{s}(\hvect{\theta}_k)^H\vect{\Lambda}\vect{s}(\vect{\theta}_l)\right\}	
			-[\hvect{\Phi}]_{l,l}\|\vect{s}(\vect{\theta}_l)\|_{\vect{\Lambda}}^2 \Big\},
\end{split}
	\label{equ:OptimTheta}
\end{equation}
where $\Re\left\{\cdot\right\}$ denotes the real part operator and
\begin{equation}
\begin{split}
    \vect{r}_l = \vect{y} -\sum_{k=1,k\neq l}^{L} \widehat{w}_k\vect{s}(\hvect{\theta}_k)
	\label{equ:InterferenceCancel}
\end{split}
\end{equation}
is a residual signal that cancels the contribution of the other $L-1$ components.
Solving \eqref{equ:OptimTheta} requires in general a numerical optimization.
Let us point out that the last two terms in \eqref{equ:OptimTheta} account for correlations between the elements of $\vect{w}$, acting as penalty factors in the estimator of $\vect{\theta}_l$.
Also, note that under the assumption $\mathrm{A1}$ \eqref{equ:OptimTheta} coincides with the estimation expression used in the VB-SAGE algorithm \cite{ShutinVBSAGE}.

Now, let us consider the  estimation of $q(\alpha_l)$.
The bound in \eqref{equ:LowerBoundVB} with respect to $q(\alpha_l)$ can be expressed as $\log p(\vect{y})\ge\EXP{\alpha_l} \log \tilde{p}(\alpha_l)/q(\alpha_l)$, where
\[
    \tilde{p}(\alpha_l)\propto \mathrm{exp}{\left(\EXP{\vect{w}}\log p(w_l|\alpha_l)p(\alpha_l)\right)}.
\]
It has been demonstrated in  \cite{ShutinVBSAGE} (for the assumption $\mathrm{A1}$) and in \cite{ShutinFastRVM} (for the assumption $\mathrm{A2}$) that the sequence of estimates $\left\{q^{[0]}(\alpha_l), q^{[1]}(\alpha_l), q^{[2]}(\alpha_l),\ldots\right\}$,  obtained by repeated maximization of the right-hand side of \eqref{equ:LowerBoundVB} with respect to the pdfs $q(w_l)$ and $q(\alpha_l)$ (for $\mathrm{A1}$), or $q(\vect{w})$ and $q(\alpha_l)$ (for $\mathrm{A2}$), converges to the pdf $q^{[\infty]}(\alpha_l)=\GammaPDF(\alpha_l|1,(\AlphaInf)^{-1})$ with
\begin{equation}
		\AlphaInf=\left\{		
		\begin{array}{ll}
		(|\mu_l|^2-\varsigma_l)^{-1}, & \frac{|\mu_l|^2}{\varsigma_l}>1 \\
		\infty, & \frac{|\mu_l|^2}{\varsigma_l}\le 1.
		\end{array}
		\right.
		\label{equ:FixedPoint}	
\end{equation}
The parameters $\varsigma_l$ and $\mu_l$ in \eqref{equ:FixedPoint} are computed as follows.
For the assumption $\mathrm{A1}$:
\begin{equation}
\begin{split}
    (\mathrm{A1}):\quad& \varsigma_l=1\big/\|\vect{s}(\hvect{\theta}_l)\|^{2}_{\vect{\Lambda}},\quad
                  \mu_l=\varsigma_l\vect{s}(\hvect{\theta}_l)^H\vect{\Lambda}\tvect{r}_l^{[\mathrm{A1}]},\\
            &\tvect{r}_l^{[\mathrm{A1}]} = \vect{y} -\sum_{k=1,k\neq l}^{L} \widehat{w}_k\vect{s}(\hvect{\theta}_k).
    \label{equ:TestParametersA1}
\end{split}
\end{equation}
For the assumption $\mathrm{A2}$, we first define
\begin{equation}
\begin{split}
    &\hvect{A}_{\notl}=\diag([\widehat{\alpha}_1,\ldots,\widehat{\alpha}_{l-1},\widehat{\alpha}_{l+1},\ldots,\widehat{\alpha}_{L}]), \\
    &\vect{S}(\hvect{\Theta}_{\notl})=[\vect{s}(\hvect{\theta}_1),\ldots,\vect{s}(\hvect{\theta}_{l-1}),\vect{s}(\hvect{\theta}_{l+1}),\ldots,\vect{s}(\hvect{\theta}_{L})],\\ &\hvect{\Phi}_{\notl}=\left(\vect{S}(\hvect{\Theta}_{\notl})^H\vect{\Lambda}\vect{S}(\hvect{\Theta}_{\notl})+\hvect{A}_{\notl}\right)^{-1},\\
  	&\hvect{w}_{\notl}=\hvect{\Phi}_{\notl}\vect{S}(\hvect{\Theta}_{\notl})^H\vect{\Lambda}\vect{y},\, \text{and}\quad \\
    &\tvect{r}_l^{[\mathrm{A2}]}=\vect{y}-\vect{S}(\hvect{\Theta}_{\notl})\hvect{w}_{\notl}.
    \label{equ:Substitutions}
\end{split}
\end{equation}
Then, $\varsigma_l$ and $\mu_l$ for this assumption are evaluated as follows

\begin{gather}
\begin{split}
	(\mathrm{A2}):
    \varsigma_l&=\Big(\vect{s}(\hvect{\theta}_l)^H\vect{\Lambda}\vect{s}(\hvect{\theta}_l)-\\
            &\vect{s}(\hvect{\theta}_l)^H\vect{\Lambda}\vect{S}(\hvect{\Theta}_{\notl})\hvect{\Phi}_{\notl}\vect{S}(\hvect{\Theta}_{\notl})^H\vect{\Lambda}\vect{s}(\hvect{\theta}_l)\Big)^{-1},\\ \mu_l=&\varsigma_l\vect{s}(\hvect{\theta}_l)^H\vect{\Lambda}\vect{y}-\\
            &\varsigma_l\vect{s}(\hvect{\theta}_l)^H\vect{\Lambda}\vect{S}(\hvect{\Theta}_{\notl})\hvect{\Phi}_{\notl}\vect{S}(\hvect{\Theta}_{\notl})^H\vect{\Lambda}\vect{y}\\
            =&\varsigma_l\vect{s}(\hvect{\theta}_l)^H\vect{\Lambda}(\vect{y}
            -\vect{S}(\hvect{\Theta}_{\notl})\hvect{w}_{\notl})\\
            =&\varsigma_l\vect{s}(\hvect{\theta}_l)^H\vect{\Lambda}\tvect{r}_l^{[\mathrm{A2}]}.
            \label{equ:TestParametersA2}
\end{split}
\end{gather}
Let us point out that for both $\mathrm{A1}$ and $\mathrm{A2}$ cases, the weight $\mu_l$ is a projection of $\vect{s}(\hvect{\theta}_l)$ on the corresponding residual signal $\tvect{r}_l^{[\mathrm{A1}]}$ or $\tvect{r}_l^{[\mathrm{A2}]}$, respectively.
The latter are computed by canceling (subtracting) the contribution of the other $L-1$ components.
Note that $\tvect{r}_l^{[\mathrm{A1}]}$ coincides with \eqref{equ:InterferenceCancel}; also, $\tvect{r}_l^{[\mathrm{A2}]}$ and $\tvect{r}_l^{[\mathrm{A1}]}$ are equal when $\hvect{\Phi}_{\notl}$ is diagonal, i.e., for uncorrelated components.
This will be a valid assumption for components that are physically well separated, i.e., when $\vect{s}(\vect{\theta}_l)^H\vect{\Lambda}\vect{s}(\vect{\theta}_k)\approx 0$, $k\neq l$.
Thus, for uncorrelated components the IARD and the VB-SAGE algorithms will lead to the same estimation results. 
Also, when assumption $\mathcal{A1}$ is used with IARD, an instance of the VB-SAGE algorithm is obtained. 
Yet IARD does not require an introduction of any latent variables, as it was done in the VB-SAGE algorithm. 

Finally, we estimate $q(w_l)$ and $q(\vect{w})$.
For the assumption $\mathrm{A1}$ the parameters of $q(w_l)$ are computed as
\begin{equation}	
\begin{split}
    \widehat{\Phi}_l&=\left(\|\vect{s}(\hvect{\theta}_l)\|_{\vect{\Lambda}}^2+\AlphaInf\right)^{-1},
    \widehat{w}_l=\widehat{\Phi}_l\vect{s}(\hvect{\theta}_l)^H\vect{\Lambda}\tvect{r}_l^{[\mathrm{A1}]}.
	\label{equ:PhiDefVBSAGE}
\end{split}
\end{equation}
Similarly, for the assumption $\mathrm{A2}$ we compute
\begin{equation}	
\begin{split}
    \hvect{\Phi}=&\left(\vect{S}(\hvect{\Theta})^H\vect{\Lambda}\vect{S}(\hvect{\Theta})+\diag(\hvect{\alpha})\right)^{-1},\\
    \hvect{w}=&\hvect{\Phi}\vect{S}(\hvect{\Theta})^H\vect{\Lambda}\vect{y},
   	\label{equ:PhiDef}
\end{split}
\end{equation}
where $\hvect{\alpha}=[\widehat{\alpha}_1,\ldots,\widehat{\alpha}_{l-1},\AlphaInf,\widehat{\alpha}_{l+1},\widehat{\alpha}_{L}]^T$.

The key advantages of such incremental component-wise estimation scheme are the expressions \eqref{equ:OptimTheta} and \eqref{equ:FixedPoint}.
The former permits a simpler numerical optimization of the dispersion parameters as the dimensionality of the resulting objective function equals to the dimensionality of $\vect{\theta}_l$, rather than that of $\vect{\Theta}$.
Result \eqref{equ:FixedPoint} gives a simple criterion for model order selection: when $|\mu_l|^2\le \varsigma_l$, we get $\AlphaInf=\infty$, i.e., $w_l\rightarrow 0$ and the component is removed.
This implements an automatic model order selection.
Moreover, the signal model can be constructed from bottom up, i.e., starting with an empty model $\vect{S}(\hvect{\Theta})\hvect{w}=\vect{0}$, and initializing the first component using ``incoherent'' initialization as described in the Algorithm \eqref{alg:AlgorithmInit}.%
\begin{algorithm}[H]
	\caption{Component initialization }
	\label{alg:AlgorithmInit}
	\begin{algorithmic}
	\STATE Compute $\vect{r}_l\gets \vect{y}-\vect{S}(\hvect{\Theta})\hvect{w}$; estimate $\vect{\theta}_l$ using
        \begin{equation}
        \begin{split}
        \hvect{\theta}_l &= \argmax_{\vect{\theta}_l} \Big\{ \log p(\vect{\theta}_l) -\|\vect{r}_l^H\vect{s}(\vect{\theta}_l)\|_{\vect{\Lambda}}^2\Big\},
        \end{split}
	   \label{equ:IncoherentOptimTheta}
    \end{equation}
	\STATE Compute $q(\alpha_l)$ using \eqref{equ:FixedPoint}
    \IF {$\AlphaInf$ is finite}
        \STATE Compute $q(w_l)$ using \eqref{equ:PhiDefVBSAGE} (or $q(\vect{w})$ using \eqref{equ:PhiDef})
    \ELSE
        \STATE Discard the component and abort initialization
    \ENDIF
	\end{algorithmic}
\end{algorithm}
\noindent
If during the initialization the test \eqref{equ:FixedPoint} results in a finite sparsity parameter $\AlphaInf$, a new component is accepted in the model.
The parameters of the components are then updated following the Algorithm \ref{alg:Update}.
\begin{algorithm}[H]
	\caption{Parameter update}
	\label{alg:Update}
	\begin{algorithmic}
    \WHILE{ Not converged }
    \FOR{$l \in \{1,\ldots,L\}$}
    	\STATE Update $q(\vect{\theta}_l)$ from \eqref{equ:OptimTheta} and $q(\alpha_l)$ using \eqref{equ:FixedPoint}
		\IF {$\AlphaInf$ is finite}
            \STATE Update $q(w_l)$ using \eqref{equ:PhiDefVBSAGE} (or $q(\vect{w})$ using \eqref{equ:PhiDef})
		\ELSE
            \STATE Remove the $l$th component from the model
        \ENDIF
        \ENDFOR
    \ENDWHILE
	\end{algorithmic}
\end{algorithm}
After update, the initialization can be repeated again for an updated residual signal.
The algorithm is interrupted when no new components can be added to the model.
Let us also mention at this stage that $\hvect{\Phi}_{\notl}$ can be efficiently computed using rank-one updates (see \cite{ShutinVBSAGE} for more details). 
Thus, $q(\vect{w})$ can be efficiently updated even for large $L$. 

The condition $|\mu_l|^2>\varsigma_l$ in \eqref{equ:FixedPoint} we term a pruning condition since it determines if $\AlphaInf$ is finite.
It forms a basis for a multipath component detector.
In fact, the sparsity of the estimated model is governed by this condition.
To better understand its properties and limitations we consider this condition in more details in the following section.

\section{Analysis of the pruning condition}\label{sec:AdjustedPruning}

Let us now investigate this pruning condition in greater detail for both $\mathrm{A1}$ and $\mathrm{A2}$ assumptions.
To this end we define $\rho_l=|\mu_l|^2/\varsigma_l$.
A closer look at \eqref{equ:TestParametersA1} and \eqref{equ:TestParametersA2} reveals that the parameters $\mu_l$ and $\varsigma_l$ correspond, respectively, to the posterior estimate of the $l$th path weight $w_l$ and its variance when $\widehat{\alpha}_l=0$.
Thus, we can interpret $\rho_l$ as an estimate of the $l$th component SNR after the processing.\footnote{This can also be interpreted as the component SNR after a matched filter processing, with $\vect{s}(\hvect{\theta}_l)$ playing the role of a matched filter.}
Specifically, the pruning condition
\begin{equation}
	\rho_l>1,
	\label{equ:SNRPrune}
\end{equation}
states that an estimate of the approximating pdf $q(\alpha_l)$ has a finite mean if, and only if, an estimate of the $l$th component SNR after subtracting the interference of the other $L-1$ components exceeds $1$ (or equivalently $0$~dB).

Yet in many practical applications a $0$~dB threshold might not represent the desired level of confidence in the estimated component. Moreover, we have empirically observed the condition \eqref{equ:SNRPrune} generally overestimates the model order: some of the detected components were falsely introduced into the model, with the estimated weights having small, yet non-zero weights and the corresponding parameters $\rho_l$  exceeding a $0$~dB threshold.
Empirical adjustment of the threshold to some level $\kappa_l\ge 1$ improves the model order estimate \cite{ShutinBuchgraber2012, ShutinVBSAGE, ShutinFastRVM, Shutin_IARD}.
In what follows we explain why signal sparsity is overestimated with the condition \eqref{equ:SNRPrune} and how to select the threshold $\kappa_l$ such that the conditions $\rho_l>\kappa_l$ is more robust against estimation artifacts.
For this purpose we will explore a connection between the statistical structure of \eqref{equ:SNRPrune} and hypothesis testing.

Consider a single component $l$, and assume that the parameters of the other components are fixed.
Define now two hypotheses $H_0$ and $H_1$ for the ``true'' weight $w_l$ of the $l$th multipath component as follows:
\begin{equation}
	\left\{
		\begin{array}{ll}
			H_0:& w_l=0 \\
	   		H_1:& w_l\neq 0.
		\end{array}
	\right.
	\label{equ:HypothesisTesting}
\end{equation}
Our goal here is to understand how statistics of $\rho_l$ can be utilized to choose between these two hypotheses in the Neyman-Pearson sense.
To this end we will consider the distribution of $\rho_l$ under $H_0$ and $H_1$ hypotheses for both $\mathrm{A1}$ and $\mathrm{A2}$ assumptions.

\subsection{Assumption $\mathrm{A1}$: independent multipath components}
We will begin our analysis with the following proposition:
\begin{prop}\label{thm:1}
Assume that $\hvect{\theta}_l$ is found from \eqref{equ:OptimTheta} and that other factors in \eqref{equ:FactorizationVBSAGE} are fixed.
Then, under hypothesis $H_0$ the statistic $\rho_l$ will follow an extreme value distribution \cite{FisherTippet1928} with the following pdf:
\begin{equation}
    p(\rho_l|H_0)=\left\{
        \begin{array}{lc}
        (\mathrm{e}^{-N/\mathrm{e}})\delta(\rho_l)& 0\le\rho_l\le 1\\
        (1-\mathrm{e}^{-N/\mathrm{e}})\tilde{p}(\rho_l|H_0) & \rho_l> 1
        \end{array}
        \right.
	\label{equ:Theorem1}
\end{equation}
where $\delta(\rho_l)$ is a Dirac delta distribution and
\begin{equation}
\begin{split}
    \tilde{p}(\rho_l|H_0)=\mathrm{e}^{\left(-\rho_l+\log(N)-\mathrm{e}^{-\rho_l+\log(N)}\right)},\quad \rho_l\ge0,
	\label{equ:Theorem1Max}
\end{split}
\end{equation}
is a pdf of the Gumbel distribution \cite{Gumbel1954}.
\end{prop}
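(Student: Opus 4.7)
The plan is to analyze the distribution of $\rho_l = |\mu_l|^2/\varsigma_l$ under $H_0$ by exploiting two facts: first, when $w_l = 0$ and the remaining $L-1$ components are held fixed, the residual $\tvect{r}_l^{[\mathrm{A1}]}$ in \eqref{equ:TestParametersA1} is a pure noise vector $\CGaussPDF(\vect{0},\vect{\Lambda}^{-1})$; second, $\hvect{\theta}_l$ is not a deterministic quantity but the maximizer of \eqref{equ:OptimTheta} over a continuous parameter space, which turns $\rho_l$ into the maximum of a correlated random field. The core of the argument is then a classical extreme-value theory calculation.

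First I would fix any deterministic $\vect{\theta}_l$ and define
\[
Z(\vect{\theta}_l) \;=\; \frac{\vect{s}(\vect{\theta}_l)^H\vect{\Lambda}\tvect{r}_l^{[\mathrm{A1}]}}{\|\vect{s}(\vect{\theta}_l)\|_{\vect{\Lambda}}}.
\]
Under $H_0$ this is a zero-mean circular complex Gaussian with unit variance, so $|Z(\vect{\theta}_l)|^2 \sim \mathrm{Exp}(1)$. Using \eqref{equ:TestParametersA1} one checks that $\rho_l(\vect{\theta}_l)=|Z(\vect{\theta}_l)|^2$. Next I would argue that, under assumption $\mathrm{A1}$ and $H_0$, the penalty terms in \eqref{equ:OptimTheta} involving $\widehat{w}_l$ and $[\hvect{\Phi}]_{l,l}$ are of lower order than the matched-filter term, so that $\hvect{\theta}_l$ effectively maximizes $|Z(\vect{\theta}_l)|^2$ over the admissible parameter space. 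Consequently $\rho_l = \max_{\vect{\theta}_l} |Z(\vect{\theta}_l)|^2$.

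To convert this continuous maximum into a tractable extremum, I would invoke the fact that the family $\{\vect{s}(\vect{\theta}_l)\}$ lives in the $N$-dimensional observation space, which together with a Nyquist-style argument on the ambiguity function $\vect{s}(\vect{\theta}_l)^H\vect{\Lambda}\vect{s}(\vect{\theta}_l')$ limits the number of effectively independent trial cells to $N$. The maximum of $N$ i.i.d.\ unit-mean exponentials has CDF $(1-\mathrm{e}^{-x})^N$, which under the Fisher--Tippett limit \cite{FisherTippet1928} is asymptotically $\exp(-\mathrm{e}^{-(x-\log N)})$. Differentiating yields precisely $\tilde{p}(\rho_l|H_0)$ in \eqref{equ:Theorem1Max}.

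Finally I would account for the pruning rule itself: whenever the extremum falls in $[0,1]$, \eqref{equ:FixedPoint} forces $\AlphaInf=\infty$, i.e.\ $w_l\equiv 0$, and collapses all such outcomes onto $\rho_l=0$. The probability of this event is $\exp(-\mathrm{e}^{-1+\log N}) = \mathrm{e}^{-N/\mathrm{e}}$, giving the Dirac atom with weight $\mathrm{e}^{-N/\mathrm{e}}$ in \eqref{equ:Theorem1} and leaving the Gumbel density on $\rho_l>1$ with complementary mass $1-\mathrm{e}^{-N/\mathrm{e}}$. The main obstacle will be the extreme-value step: justifying quantitatively that the supremum of the Gaussian field $|Z(\vect{\theta}_l)|^2$ over a continuous parameter space behaves like the maximum of $N$ i.i.d.\ exponentials. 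A clean way forward is to bound the rate of decorrelation of $Z(\vect{\theta}_l)$ via the normalized ambiguity function and to use the $N$-dimensional signal subspace to cap the number of resolution cells; a secondary but easier point is to verify that the penalty terms in \eqref{equ:OptimTheta} are indeed negligible under $H_0$ so that the identification $\hvect{\theta}_l \approx \argmax_{\vect{\theta}_l}|Z(\vect{\theta}_l)|^2$ is legitimate.
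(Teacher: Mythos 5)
Your proposal is correct and follows essentially the same route as the paper's own proof: under $H_0$ the matched-filter statistic $\rho_l$ is $\mathrm{Exp}(1)$ for fixed $\vect{\theta}_l$, the nonlinear search over $\vect{\theta}_l$ turns it into the maximum over $N$ effectively independent cells, the Fisher--Tippett limit gives the Gumbel law with location $\log N$, and the pruning rule collapses the mass $F_{\mathrm{max}}(1)=\mathrm{e}^{-N/\mathrm{e}}$ onto the atom at zero. If anything, you are more explicit than the paper about the two heuristic steps it also relies on (negligibility of the penalty terms in \eqref{equ:OptimTheta} and the reduction of the continuous supremum to $N$ independent trials via the ambiguity function), which you correctly flag as the places needing quantitative justification.
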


\begin{proof}
Consider the distribution of $\rho_l$ under the hypothesis $H_0$ for some arbitrary value of $\vect{\theta}_l$ and known noise statistics.
Due to the efficiency of maximum likelihood estimators for linear models \cite{Kay1993}, it is straightforward to show that $\mu_l\thicksim \mathrm{CN}(\mu_l|0,\varsigma_l)$.
Recall now that $\rho_l=|\mu_l|^2/\varsigma_l$.
It is known that the square of a normally distributed zero mean random variable normalized by its variance will follow a $\chi^2$ distribution.
Since the variance of real and imaginary parts of $\mu_l$ is $\varsigma_l/2$, then $\rho_l$ will follow a scaled\footnote{The scaling factor in this case is $1/2$ to compensate for the reduced variance of real and imaginary parts.} $\chi^2$ distribution with two degrees of freedom.
In our case it is an exponential distribution with the pdf
\begin{equation}
    p(\rho_l)=\mathrm{e}^{-\rho_l},\quad \rho_l\ge0.
    \label{equ:H0Independent}
\end{equation}
This distribution arises when for a fixed $\vect{\theta}_l$ different realizations of the residual signal $\tvect{r}_l^{[\mathrm{A1}]}$ are generated.
Alternatively, $\tvect{r}_l^{[\mathrm{A1}]}$ can be fixed and $\vect{\theta}_l$ then drawn at random.
Note that under $H_0$ the residual $\tvect{r}_l^{[\mathrm{A1}]}$ is a realization of an $N$-dimensional Gaussian noise vector.
However, due to maximization \eqref{equ:OptimTheta} we select the ``best'' dispersion parameter $\hvect{\theta}_l$ out of $N$ independent possibilities.\footnote{Note that possible correlations in the residual signal due to diffuse multipath are ``whitened'' by the matrix $\vect{\Lambda}^{-1}$.}
As a results an observed value of $\rho_l$ under $H_0$ will follow the distribution of a maximum out of $N$ values drawn from \eqref{equ:H0Independent}.
Such type of distributions are known as extreme value distributions.

To derive the distribution function $F_{\mathrm{max}}(\rho_l)$ of the corresponding extreme value distribution, we apply the Fisher-Tippett-Gnedenko theorem \cite{FisherTippet1928} to the distribution function $F(\rho_l)=1-\mathrm{e}^{-\rho_l}$ of the exponential pdf \eqref{equ:H0Independent}.
By the theorem, $F_{\mathrm{max}}(\rho_l)$ can be computed as the limit of appropriately shifted and scaled variable $\rho_l$: $F_{\mathrm{max}}(\rho_l)=\lim_{n\rightarrow \infty} \left(F(\frac{\rho_l-b_n}{a_n})\right)^n$ for some real sequences $a_n>0$ and $b_n>0$ that are independent of $\rho_l$.
In our case, it can be demonstrated that for $a_n=1$ and $b_n=log(n)$, the maximum of out of $N$ exponentially distributed values will follow a Gumbel distribution \cite{Gumbel1954} $F_{\mathrm{max}}(\rho_l)$ with the distribution function
\[
    F_{\mathrm{max}}(\rho_l)=\exp\left(-\mathrm{e}^{-(\rho_l-\log(N))}\right)
\]
and the corresponding pdf
\begin{equation}
    \tilde{p}(\rho_l|H_0)=\mathrm{e}^{\left(-\rho_l+\log(N)-\mathrm{e}^{-\rho_l+\log(N)}\right)},\quad \rho_l\ge0.
    \label{equ:H0Gumbel}
\end{equation}
Note, however, that for $\rho_l\le 1$ the sparsity parameter $\AlphaInf=\infty$.
In this case the hypothesis $H_0$ is automatically accepted.
Taking this into consideration, the pdf $p(\rho_l|H_0)$ can be specified as
\begin{equation}
    p(\rho_l|H_0)=\left\{
        \begin{array}{lc}
        F_{\mathrm{max}}(1)\delta(\rho_l)& 0\le\rho_l\le 1\\
        (1-F_{\mathrm{max}}(1))\tilde{p}(\rho_l|H_0) & \rho_l> 1,
        \end{array}
        \right.
\end{equation}
which completes the proof.
\end{proof}

The next proposition defines the distribution of $\rho_l$ under hypothesis $H_1$.
\begin{prop}\label{thm:1}
Under hypothesis $H_1$ the statistic $\rho_l$ will follow a scaled non-central chi-square distribution
\begin{equation}
    p(\rho_l|H_1)=\left\{
        \begin{array}{lc}
        0& 0\le\rho_l\le 1\\
       \frac{1}{Z} \tilde{p}(\rho_l|H_0) & \rho_l> 1
        \end{array}
        \right.
	\label{equ:Theorem2}
\end{equation}
where
\begin{equation}
\begin{split}
    \tilde{p}_{H_1}(\rho_l)=&\mathrm{e}^{-\left(\rho_l+\frac{\eta_l}{2}\right)}
    \mathrm{I}_{0}\left(\sqrt{2\eta_l\rho_l}\right),\\
    \text{and}\quad  Z=&\int_{1}^{\infty}\tilde{p}_{H_1}(\rho_l)d\rho_l.
	\label{equ:Theorem2Max}
\end{split}
\end{equation}
\end{prop}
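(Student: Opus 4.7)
The plan is to exploit the same linear-model structure used in Proposition~1, but now with a non-zero mean. Under $H_1$ we have $w_l\neq 0$. Assume that $\hvect{\theta}_l$ from \eqref{equ:OptimTheta} is locked onto the true dispersion parameter of the present component (a reasonable assumption once the component is actually supported by the data; I will come back to this point). Then $\tvect{r}_l^{[\mathrm{A1}]}=w_l\vect{s}(\vect{\theta}_l)+\vect{\xi}'$, where $\vect{\xi}'$ is a zero-mean complex Gaussian with covariance $\vect{\Lambda}^{-1}$ (the true noise plus the residuals of the other components, which are regarded as fixed). Substituting into \eqref{equ:TestParametersA1} and using $\varsigma_l=1/\|\vect{s}(\hvect{\theta}_l)\|_{\vect{\Lambda}}^2$, I would show that $\mu_l=\varsigma_l\vect{s}(\hvect{\theta}_l)^H\vect{\Lambda}\tvect{r}_l^{[\mathrm{A1}]}\sim\CGaussPDF(w_l,\varsigma_l)$, i.e., the same complex Gaussian as under $H_0$ but shifted by $w_l$.

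The next step is a standard transformation argument. Writing $\mu_l=\mu_{l,R}+j\mu_{l,I}$, each part is a real Gaussian with variance $\varsigma_l/2$. Rescale by $\sqrt{2/\varsigma_l}$ to get unit-variance Gaussians whose squared magnitudes sum to $2\rho_l$. This identifies $2\rho_l$ as a non-central $\chi^2$ with two degrees of freedom and non-centrality $\eta_l=2|w_l|^2/\varsigma_l$, whose density is $\tfrac{1}{2}\mathrm{e}^{-(2\rho_l+\eta_l)/2}\mathrm{I}_0(\sqrt{2\eta_l\rho_l})$. A Jacobian factor of $2$ yields exactly
\[
    \tilde{p}_{H_1}(\rho_l)=\mathrm{e}^{-\rho_l-\eta_l/2}\mathrm{I}_0(\sqrt{2\eta_l\rho_l}),\qquad \rho_l\ge 0,
\]
which matches \eqref{equ:Theorem2Max}.

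Finally, I would account for the truncation. The statistic $\rho_l$ is only observed \emph{as a retained component} when the pruning condition $\rho_l>1$ is passed; otherwise $\AlphaInf=\infty$ and the component is discarded, so under $H_1$ the relevant object is the conditional density given retention. This immediately assigns probability zero to $[0,1]$ and renormalizes the surviving mass on $(1,\infty)$ by $Z=\int_1^\infty \tilde{p}_{H_1}(\rho_l)\,d\rho_l$, reproducing \eqref{equ:Theorem2}. Note that, unlike the $H_0$ case, no Dirac mass at $\rho_l=0$ needs to be introduced, since under $H_1$ a retained component cannot satisfy $\rho_l\le 1$ by definition.

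The main obstacle is the implicit assumption that $\hvect{\theta}_l$ estimated from \eqref{equ:OptimTheta} coincides (or nearly coincides) with the true $\vect{\theta}_l$. In principle the maximization over $\vect{\theta}_l$ could introduce an extreme-value correction analogous to the Gumbel distribution that appears in Proposition~1. However, under $H_1$ the objective in \eqref{equ:OptimTheta} is deterministically concentrated around the true $\vect{\theta}_l$ by the signal term $w_l\vect{s}(\vect{\theta}_l)$, so the ``random-peak over $N$ independent bins'' effect that drove the Gumbel derivation is dominated by the signal and can be neglected to leading order in $\eta_l$. Making this approximation precise (e.g., quantifying the neglected terms or arguing that the derivation holds conditionally on correct parameter localization) is the step that needs the most care.
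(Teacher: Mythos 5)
Your proposal is correct and follows essentially the same route as the paper: $\mu_l\sim\CGaussPDF(w_l,\varsigma_l)$ under $H_1$, hence $\rho_l$ is a scaled non-central $\chi'^2_2(\eta_l)$ with $\eta_l=2|w_l|^2/\varsigma_l$, truncated and renormalized on $(1,\infty)$ because components with $\rho_l\le 1$ are automatically pruned. Your closing caveat about the maximization over $\vect{\theta}_l$ is exactly the qualification the paper makes in the remark immediately after its proof (the result holds for sufficiently high per-component SNR, with a mixture of the $H_0$ and $H_1$ distributions appearing as $w_l$ decreases), so you have if anything been more explicit than the paper itself.
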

\begin{proof}
The distribution of $\rho_l$ under hypothesis $H_1$ can be studied in a similar fashion.
The weight $\mu_l$ will follow a Gaussian distribution with the true (unknown) mean $w_l\neq 0$ and a variance $\varsigma_l$.
Following the same line of arguments as for the $H_0$ case, it can be shown that $\rho_l$ will follow a scaled non-central chi-square distribution ${\chi'}^{2}_2(\eta_l)$ with two degree of freedom and a non-centrality parameter $\eta_l=2|w_l|^2/\varsigma_l$:
\begin{equation}
    \tilde{p}(\rho_l|H_1)=\mathrm{e}^{-\left(\rho_l+\frac{\eta_l}{2}\right)}
    \mathrm{I}_{0}\left(\sqrt{2\eta_l\rho_l}\right),
    \quad \rho_l>0,
    \label{equ:H1Independent}
\end{equation}
where $\mathrm{I}_{0}(x)$ is a modified Bessel function of the first kind.
Since for $\rho_l\le 1$ the $H_1$ hypothesis is automatically rejected, the support of $p(\rho_l|H_1)$ is restricted to the interval $(1,\infty)$.
Taking this into account leads to result \eqref{equ:Theorem2}, which finalizes the proof.
\end{proof}
Let us note that, strictly speaking, \eqref{equ:Theorem2} will hold for components with a sufficiently high ``true'' SNR $|w_l|^2/\varsigma_l$.
In high SNR regime optimization \eqref{equ:OptimTheta} will consistently result in the same value of $\hvect{\theta}_l$.
Yet as $w_l$ decreases, the corresponding residual signal $\tvect{r}_l^{[\mathrm{A1}]}$ becomes dominated by the additive noise $\vect{\xi}$ and a mixture of \eqref{equ:Theorem2} and \eqref{equ:H0Gumbel} will be observed.

Now, we can select between $H_0$ and $H_1$ using the following test function $T\left(\rho_l\right)$:
\begin{equation}
\begin{split}
	T\left(\rho_l\right)&=\left\{
		\begin{array}{ll}
			0, & \rho_l\le \kappa_l \\
	   		1, & \rho_l>\kappa_l,
		\end{array}\right.,\\
	\emph{\text{s.t.}}~~\kappa_l>0&,\quad \E_{p(\rho_l|H_0)}\left\{T(\rho_l)\right\}=\epsilon_l,
	\label{equ:TestFunction}
\end{split}
\end{equation}
where $\epsilon_l$ is the size of the test.
Let us now indicate some important properties of $T\left(\rho_l\right)$.
\begin{enumerate}
    \item The test \eqref{equ:TestFunction} is uniformly most powerful (UMP) test of size $\epsilon_l$ to choose between $H_0$ and $H_1$ specified by pdfs \eqref{equ:Theorem1} and \eqref{equ:Theorem2}, respectively.
        This follows from the fact that the rejection region of the test function $T(\rho_l)$, given by the interval $[\log(1/\log(1-\epsilon_l)^{-1/N}),\, \infty]$ is independent of $\eta_l$ \cite{PoorDetectionEstimation}.
    \item Under assumption $\mathrm{A1}$ the standard IARD algorithm implements the test \eqref{equ:TestFunction} with $\kappa_l=1$, as seen from \eqref{equ:FixedPoint}.
    \item Since for $\rho_l\le 1$ the corresponding component is automatically removed, the size $\epsilon_l$ of the test \eqref{equ:TestFunction} must be upper bounded. The upper bound is given by by $(1-F_{\mathrm{max}}(1))$.
\end{enumerate}
It is important to stress that for a standard threshold $\kappa_l=1$, the size of the hypothesis test $\epsilon_l$ will be quite large for typical values of $N$ (see Fig. \ref{fig:HypothesisSize}).
\begin{figure}[!htb]
\includegraphics[width=0.95\linewidth]{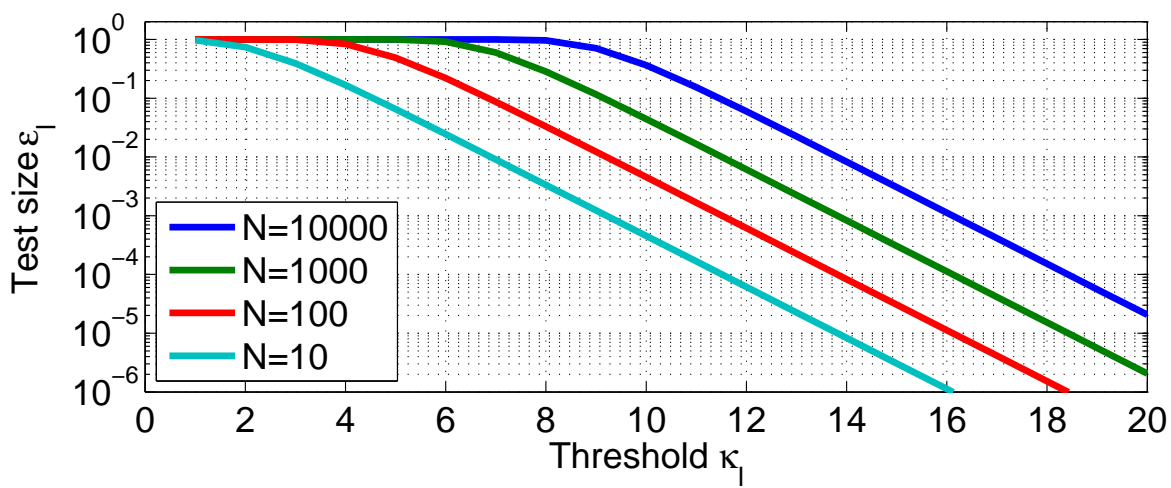}
\caption{Size $\epsilon_l$ of the test \eqref{equ:TestFunction} versus threshold $\kappa_l$ for different values of $N$.}
\label{fig:HypothesisSize}
\end{figure}
In other words the standard IARD will implement the test \eqref{equ:TestFunction} with a very high probability of false alarm.
As a result, $H_0$ will be falsely accepted more often, leading to estimation artifacts.
Moreover, as the number of samples $N$ increases, the probability of generating artifacts grows as well, making it more difficult to distinguish ``true'' components from noise.
The reason for this is the optimization \eqref{equ:OptimTheta}, which leads to the emergence of the extreme value distribution \eqref{equ:Theorem1}.
As $N$ increases, this distribution shifts further away from the standard threshold $\kappa_l=1$, making the correct rejection of artifacts less probable.
Naturally, by increasing the threshold $\kappa_l$ we can control the probability of false detection at some desired level $\epsilon_l$.

\subsection{Assumption $\mathrm{A2}$: correlated multipath components}
Under the assumption $\mathrm{A2}$ the pruning condition \eqref{equ:SNRPrune} has a similar interpretation.
However, due to the correlations between the elements of $\vect{w}$, the corresponding analysis becomes significantly more involved.
Let us begin by considering the marginal posterior of $w_l$ for the case when $\alpha_l=0$.
This is again a Gaussian pdf with the mean $\mu_l$ and the variance $\varsigma_l$ given by \eqref{equ:TestParametersA2}.
Consider now the expectation $\E\{\mu_l\}=\E\{\varsigma_l\vect{s}_l(\hvect{\theta}_l)^H\vect{\Lambda}\tvect{r}_l^{[A2]}\}$ in \eqref{equ:TestParametersA2}.
It can be shown that
\begin{equation}
\begin{split}
    \E\{\mu_l\}=&\varsigma_l\vect{s}_l(\vect{\theta}_l)^H\left(\vect{\Lambda}^{-1}+\vect{S}(\hvect {\Theta}_{\notl})\hvect{A}_{\notl}^{-1}\vect{S}(\hvect {\Theta}_{\notl})^H\right)^{-1}\\
    \times&\vect{S}(\hvect {\Theta}_{\notl})\vect{w}_{\notl} + w_l.
    \label{equ:MeanCorrelatedComp}
\end{split}
\end{equation}
where we re-used definitions \eqref{equ:Substitutions} to simplify notation.
By inspecting \eqref{equ:MeanCorrelatedComp} we see that the bias $\E\{\mu_l\}$ does not vanish under hypothesis $H_0$, i.e., when $w_l=0$.
Due to the correlations between the components, this bias is proportional to the ``true'' weights $\vect{w}_{\notl}$, which are generally unknown.
In other words, in order to decide between $H_0$ and $H_1$ within the incremental estimation apporach, i.e., for a particular component $l$, we need to known the weights of the other multipath component.
This in general prohibits a computation of the pdf $p_{H_0}(\rho_l)$ or $p_{H_0}(\rho_l)$ for the case $\mathrm{A2}$ unless some assumptions about the true weights $\vect{w}_{\notl}$ can be made.

Nonetheless, our simulations show that the test \eqref{equ:TestFunction} applied to the case $\mathrm{A2}$ performs quite well.

\section{Simulations results}

In the following we will investigate the performance of the proposed joint estimator and component detector for synthetic channels.

\subsection{One component in noise}
We will begin with a single synthetic multipath component in white noise, i.e., $L=1$.
For that we generate a channel response according to \eqref{equ:ModelComMatrix} with the following assumptions.
We restrict the set of dispersion parameters to a single delay $\tau$, so that $\vect{s}(\vect{\theta})\equiv\vect{s}(\tau)$.
The vector $\vect{s}(\tau)$ is constructed as $\vect{s}(\tau)=\big[s[-\tau/T_s],\ldots,s[(N-1)-\tau/T_s]\big]^T$, where $N=128$ and $T_s=1$s.
The signal $s[n]$ is an OFDM signal with $K$ subcarriers located at discrete frequencies $2\pi k/K$, $k=0,\ldots,K-1$.
Each subcarrier is generated with a constant unit magnitude and random phase uniformly drawn from the interval $[0,2\pi]$.
The delay $\tau$ of the synthetic component is set to $\tau=0$.
The weight $w$ has a unit magnitude and a random phase drawn from the interval $[0,2\pi]$.

Our goal in this experiment is to validate the derived distributions of the decision statistic $\rho_l$ for both $H_0$ and $H_1$ hypothesis.
To this end we restrict the values of estimated component delays to the sampling instances.
The estimation algorithm is then initialized with only $2$ components: one with the delay set to the true delay $\tau$ to approximate the $H_1$ hypothesis, and the other one set to the neighboring sampling instance to approximate the hypothesis $H_0$.
To collect the corresponding statistics, we run the algorithm and collect the values of $\rho_1$ and $\rho_2$ over  $10000$ independent runs of the algorithm.
The obtained empirical distributions of both statistics are then compared to the derived theoretical distributions $p(\rho_l|H_0)$ and $p(\rho_l|H_1)$.
For both components a pruning threshold of $\kappa_1=\kappa_2=1$ is used, which corresponds to the standard IARD pruning condition.
The analysis is performed for different input SNRs that we compute as $10\log_{10}\|w_1\vect{s}(\tau_1)\|^2/\|\vect{\xi}\|^2+10\log_{10}(N)$; here, $10\log_{10}(N)$ is the processing gain of the estimator.

We begin our tests for the assumption $\mathrm{A1}$.
For that we use $K=N$, which corresponds to the correlation coefficient of $0.007$ between the components with delays located at two neighboring sampling instances.
In Fig. \ref{fig:EmpiricalVSTheory} we plot the resulting distributions for $9$dB, $13$dB, $17$dB, and $21$dB SNR.
As we see, there is a very good fit between the empirical and theoretical distributions under the $H_0$ hypothesis.
\begin{figure}[!htb]
\psfrag{ph0rho}[bl][bl][0.9][0]{$p(\rho_l|H_0)$}
\psfrag{ph1rho}[bl][bl][0.9][0]{$p(\rho_l|H_1)$}
\psfrag{EmpiricalpH0rho}[bl][bl][0.9][0]{Est. $p(\rho_l|H_0)$}
\psfrag{EmpiricalpH1rho}[bl][bl][0.9][0]{Est. $p(\rho_l|H_1)$}

\centerline{
\subfigure[\label{fig:EmpiricalVSTheory-12dB}]{\includegraphics[width=0.5\linewidth]{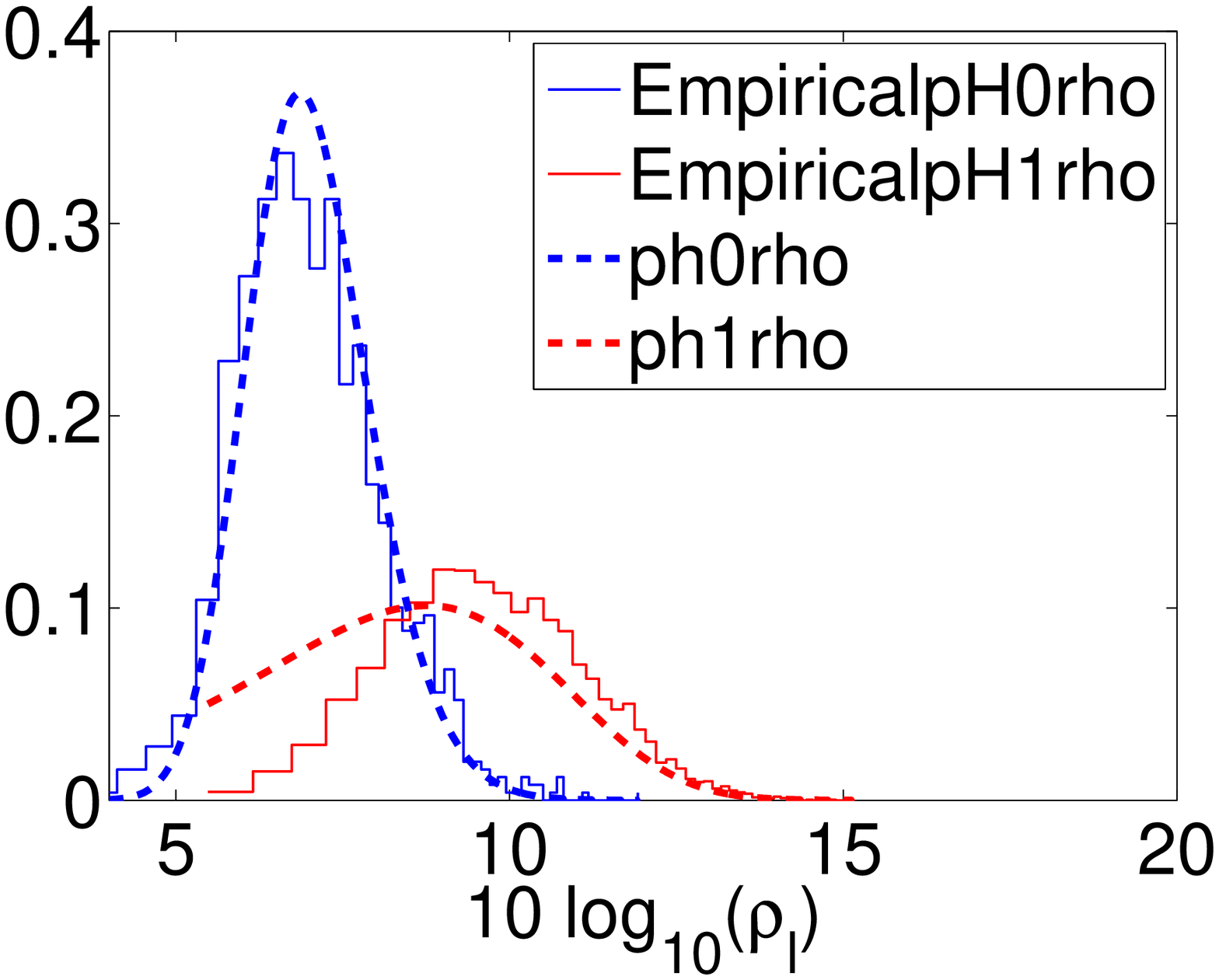}}
\subfigure[\label{fig:EmpiricalVSTheory-8dB}]{\includegraphics[width=0.5\linewidth]{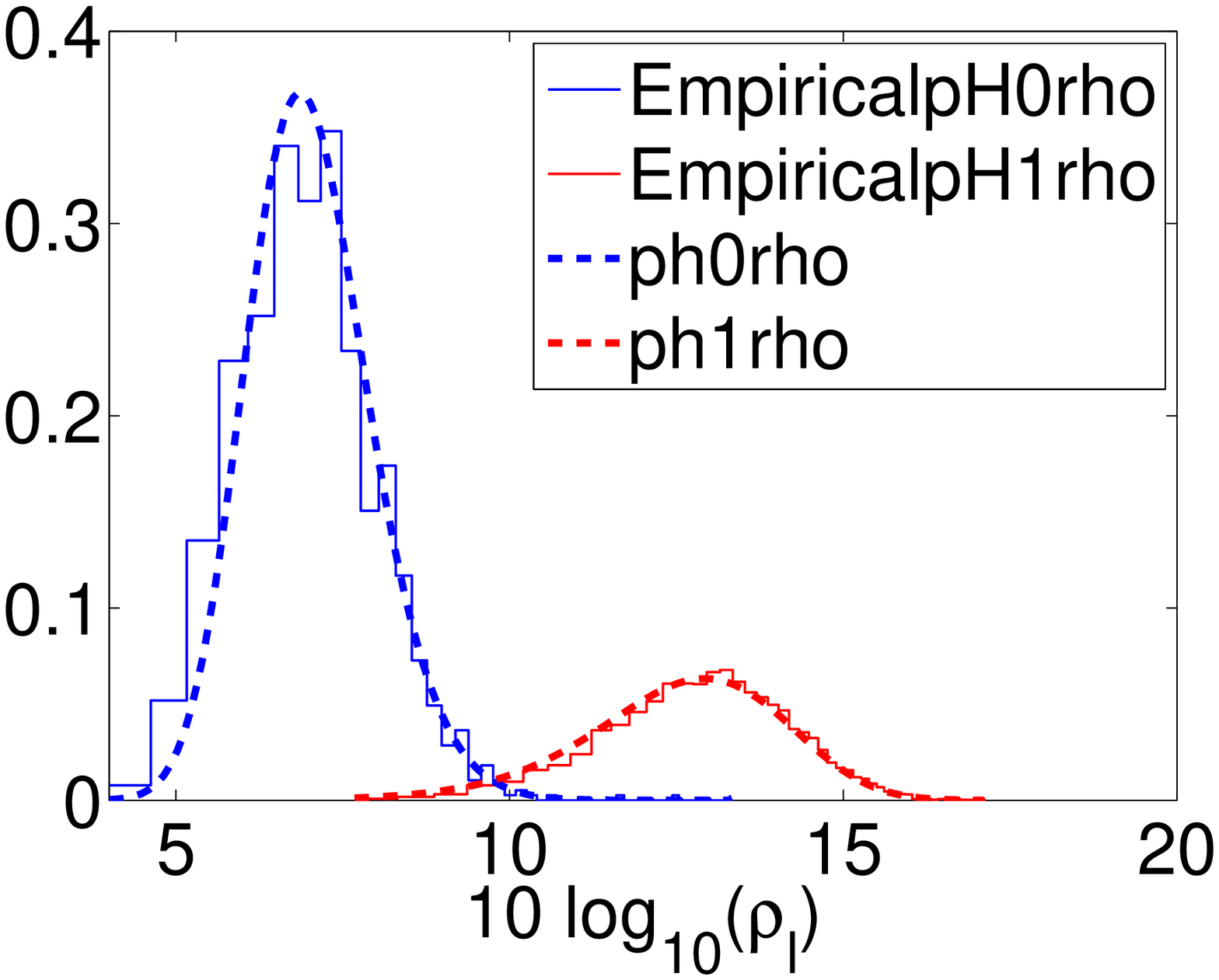}}
}
\centerline{
\subfigure[\label{fig:EmpiricalVSTheory-4dB}]{\includegraphics[width=0.5\linewidth]{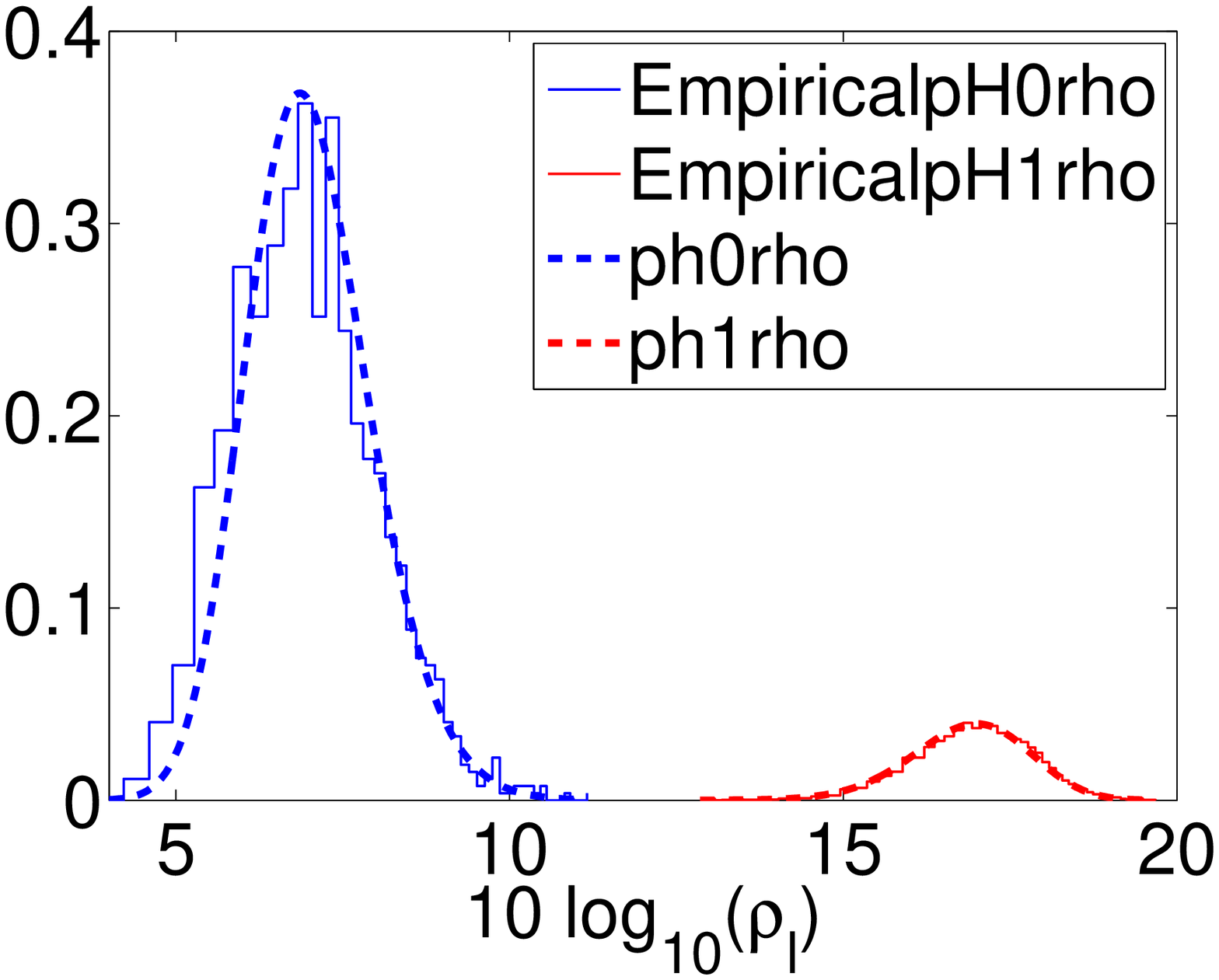}}
\subfigure[\label{fig:EmpiricalVSTheory0dB}]{\includegraphics[width=0.5\linewidth]{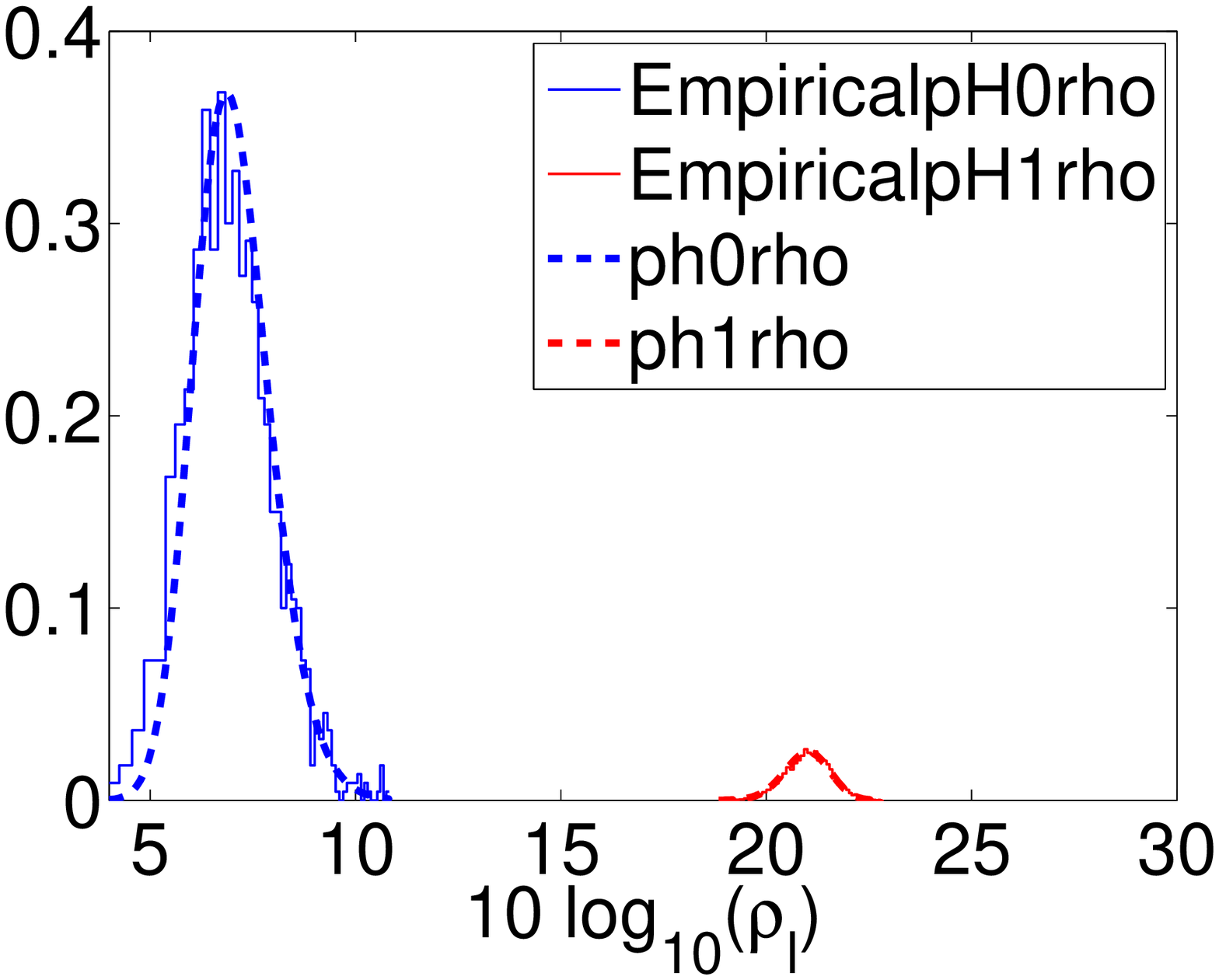}}
}

\caption{Comparison of empirical and derived distributions of the test statistic $\rho_l$ under $H_0$ and $H_1$ hypotheses for a) $\mathrm{SNR}=9$dB, b)$\mathrm{SNR}=13$dB, c)$\mathrm{SNR}=17$dB, and d) $\mathrm{SNR}=21$dB.}
\label{fig:EmpiricalVSTheory}
\end{figure}
Also, as expected, for low SNR the derived pdf $p(\rho_l|H_1)$ deviates slightly from the observed empirical distribution.

Now, let us consider the same scenario, yet for correlated components.
To increase the correlation between the components we select $K=N/2$, $K=N/4$, $K=N/8$, and $K=N/16$, which is equivalent to keeping the sampling rate fixed while reducing the bandwidth of the signals.
This leads to increased correlation between closely spaced components.
The correlation coefficients between two signals located at two consecutive delays for the above chosen values of $K$ are $0.62$, $0.89$, $0.97$, and $0.99$, respectively.

In Fig. \ref{fig:EmpiricalVSTheoryCorrelated} we show the empirical distributions of the decision statistic for $17$dB SNR and the corresponding pdfs $p(\rho_l|H_0)$ and $p(\rho_l|H_1)$.
Note that the latter are computed under the assumption $A1$.
\begin{figure}[!htb]
\psfrag{ph0rho}[bl][bl][0.9][0]{$p(\rho_l|H_0)$}
\psfrag{ph1rho}[bl][bl][0.9][0]{$p(\rho_l|H_1)$}
\psfrag{EmpiricalpH0rho}[bl][bl][0.9][0]{Est. $p(\rho_l|H_0)$}
\psfrag{EmpiricalpH1rho}[bl][bl][0.9][0]{Est. $p(\rho_l|H_1)$}

\centerline{
\subfigure[\label{fig:EmpiricalVSTheoryCorr1}]{\includegraphics[width=0.5\linewidth]{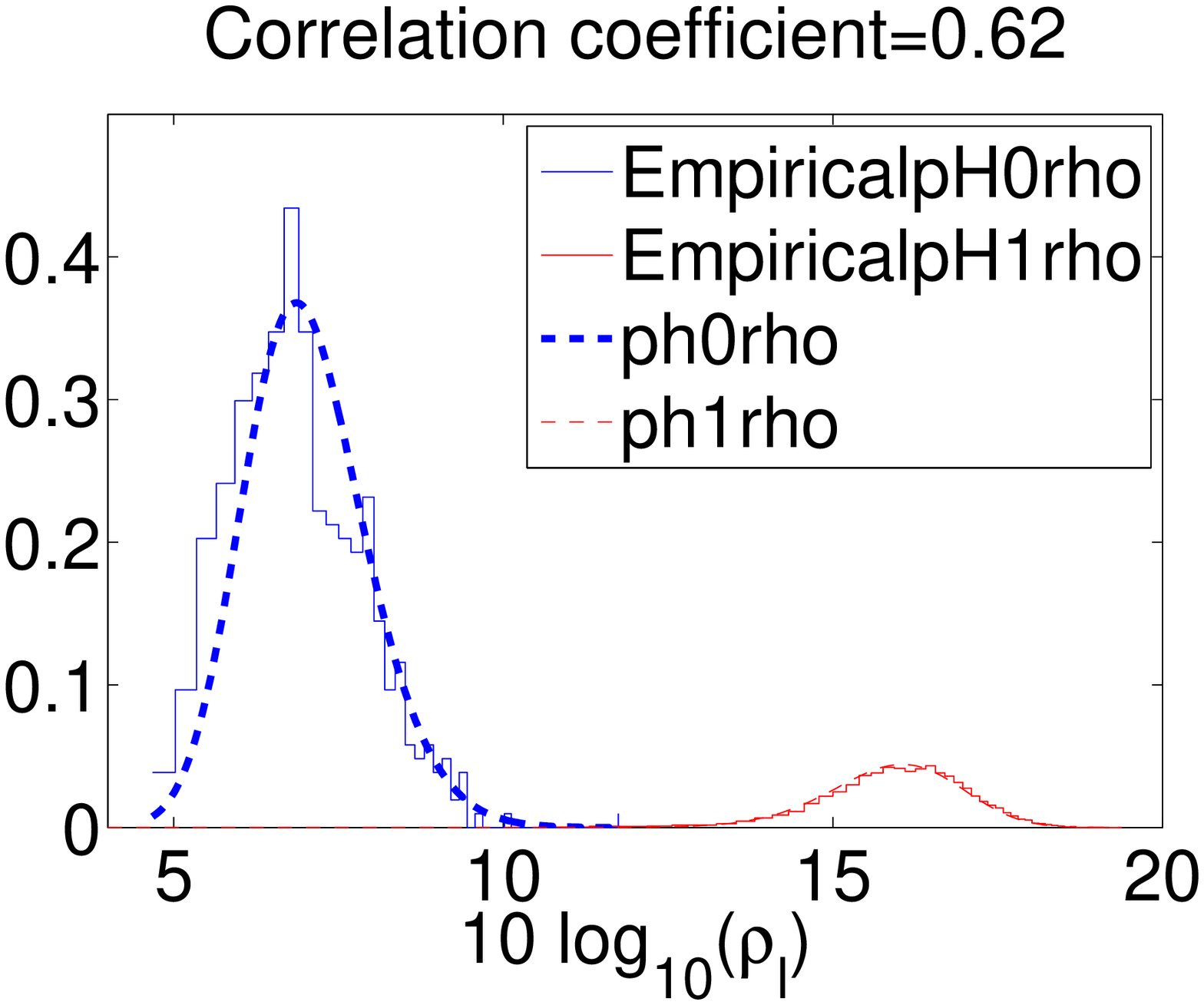}}
\subfigure[\label{fig:EmpiricalVSTheoryCorr2}]{\includegraphics[width=0.5\linewidth]{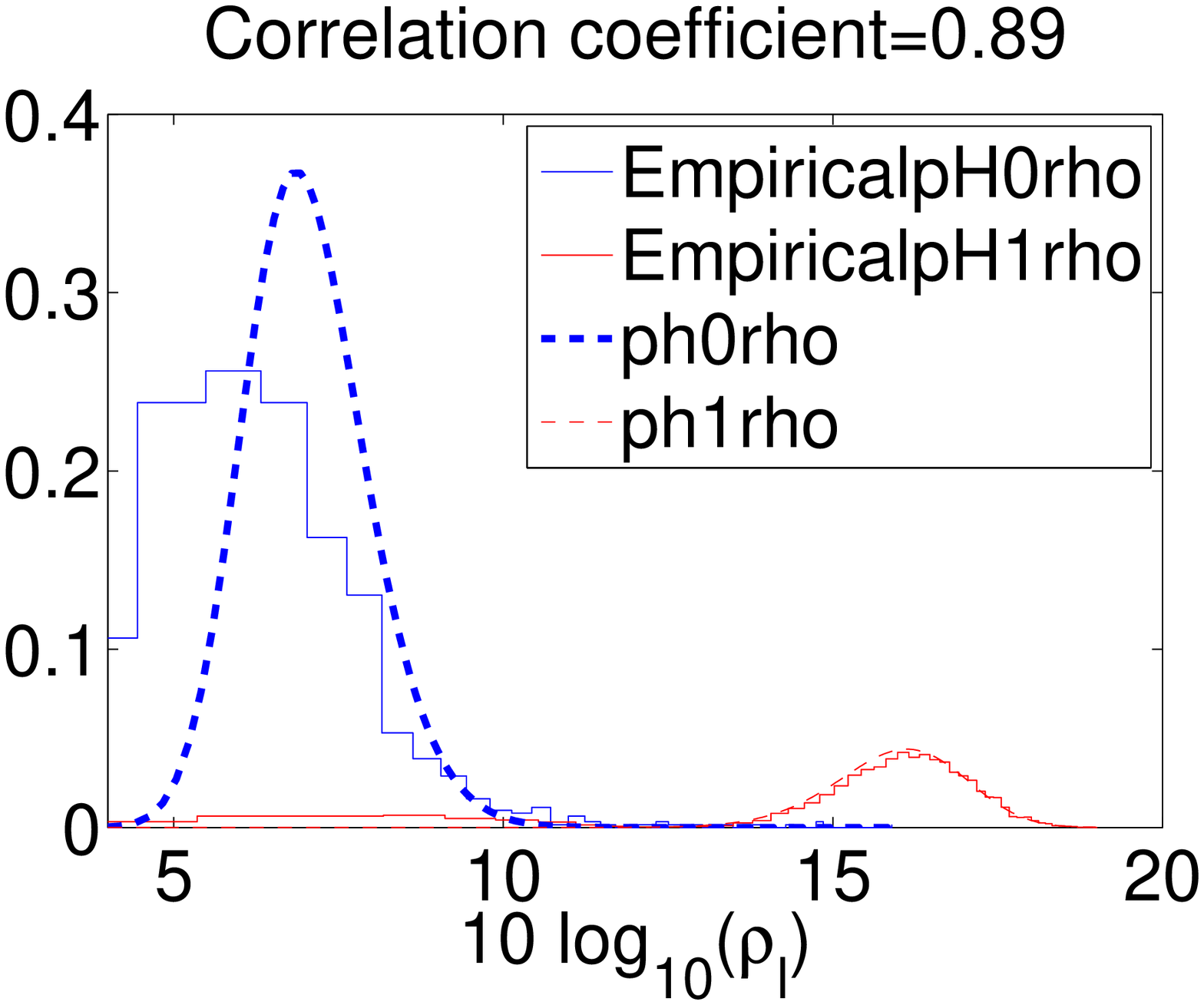}}
}
\centerline{
\subfigure[\label{fig:EmpiricalVSTheoryCorr3}]{\includegraphics[width=0.5\linewidth]{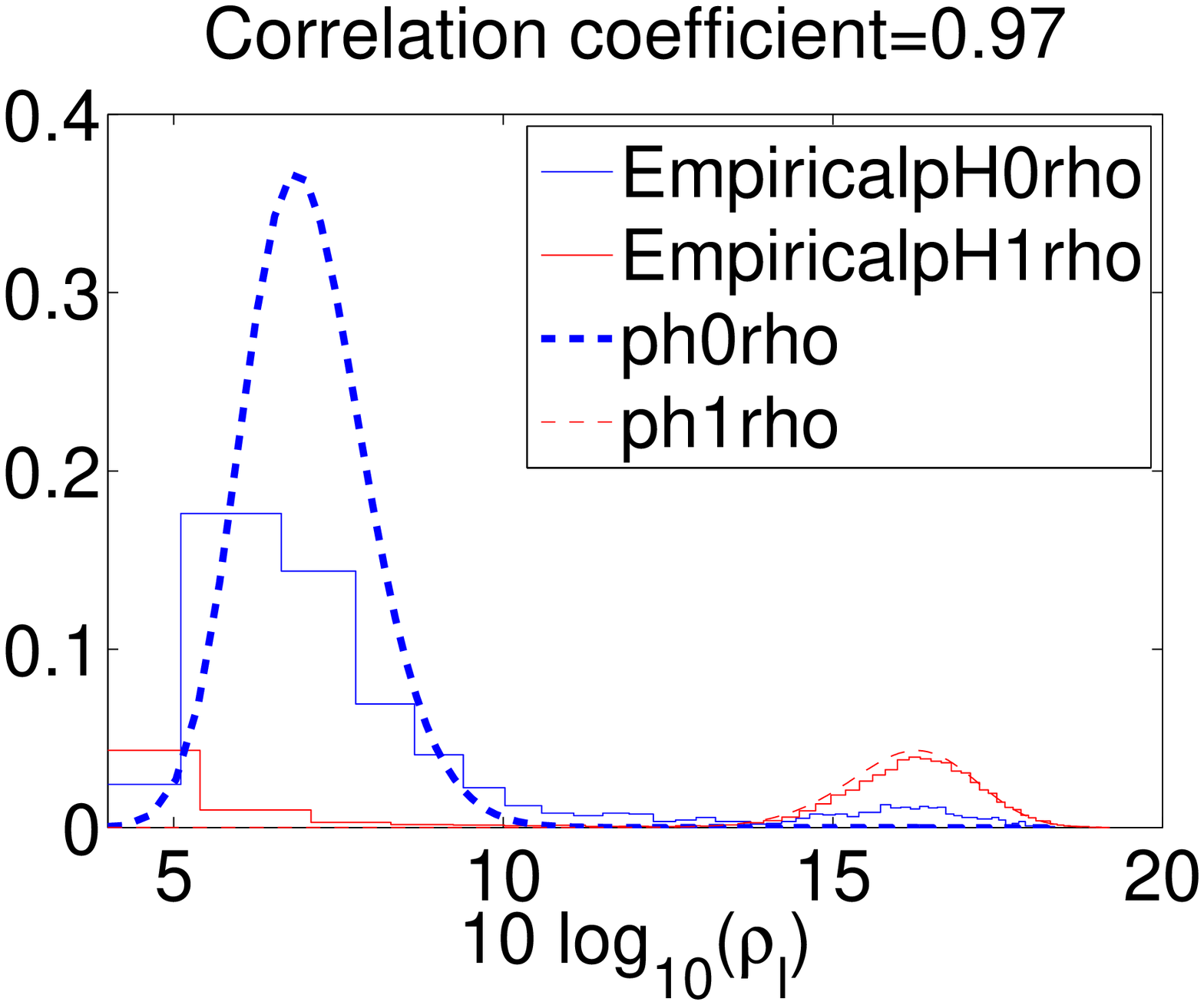}}
\subfigure[\label{fig:EmpiricalVSTheoryCorr4}]{\includegraphics[width=0.5\linewidth]{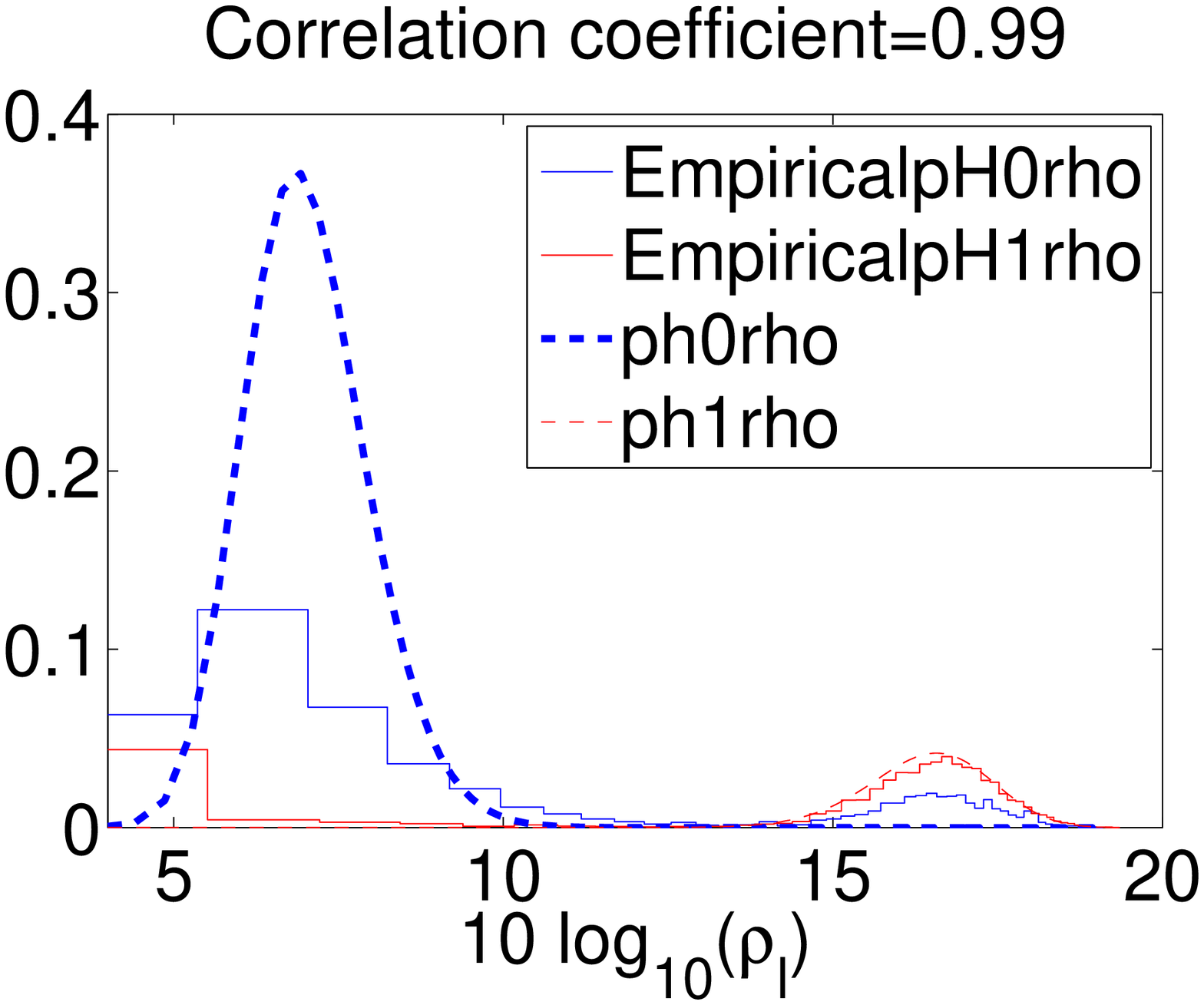}}
}

\caption{Comparison of empirical and derived distributions of the test statistic $\rho_l$ under $H_0$ and $H_1$ hypotheses for SNR$=17$dB and a)$K=N/2$, b)$K=N/4$, c)$K=N/8$, and d)$K=N/16$.}
\label{fig:EmpiricalVSTheoryCorrelated}
\end{figure}
As expected, for low correlations the pdfs derived for the assumption $A1$ provide a close approximation for the $A2$ case, both for $H_0$ and $H_1$ hypotheses.
As the correlation increases, the pdfs of both hypotheses exhibit a second mode at the location of the alternative hypothesis.
This is direct consequence of the high correlation between the components: depending on the noise realization, a component that is ``marked'' as an $H_0$ hypothesis fits the synthetic signal better then the one ``marked'' as an $H_1$.
Practically, it is, however, not important which component is eventually selected, as long as the artifacts are removed with an appropriately selected threshold $\kappa_l$.
Considering the tails of the pdf $p(\rho_l|H_0)$ we can conclude that in the $A2$ case the threshold $\kappa_l$ computed for the $A1$ assumption seem to be a reasonable practical approximation.

Let us now test the performance of the proposed detector with the adjusted threshold $\kappa_l$.
For that we use the same simulation parameters: we generate a single component with $\tau=0$, $N=128$, and $T_s=1$.
As the performance measure we look at the number of estimated components and the empirical distribution of the estimated delay values versus SNR for the threshold $\kappa_l=1$, i.e., no adjustment, and adjusted threshold 
\begin{equation}  
    \kappa_l=\log(1/\log(1-\epsilon_l)^{-1/N})
    \label{equ:AdjustedKappa}
\end{equation} 
with $\epsilon_l=0.001$.
The latter is selected according to \eqref{equ:TestFunction}.
Also, we will consider cases $K=N$, $K=N/2$, $K=N/4$, and $K=N/16$.
The corresponding plots are summarized in Fig. \ref{fig:EstL}.
\begin{figure*}
\centerline{
\subfigure[\label{fig:LEst}]{\includegraphics[width=0.235\linewidth]{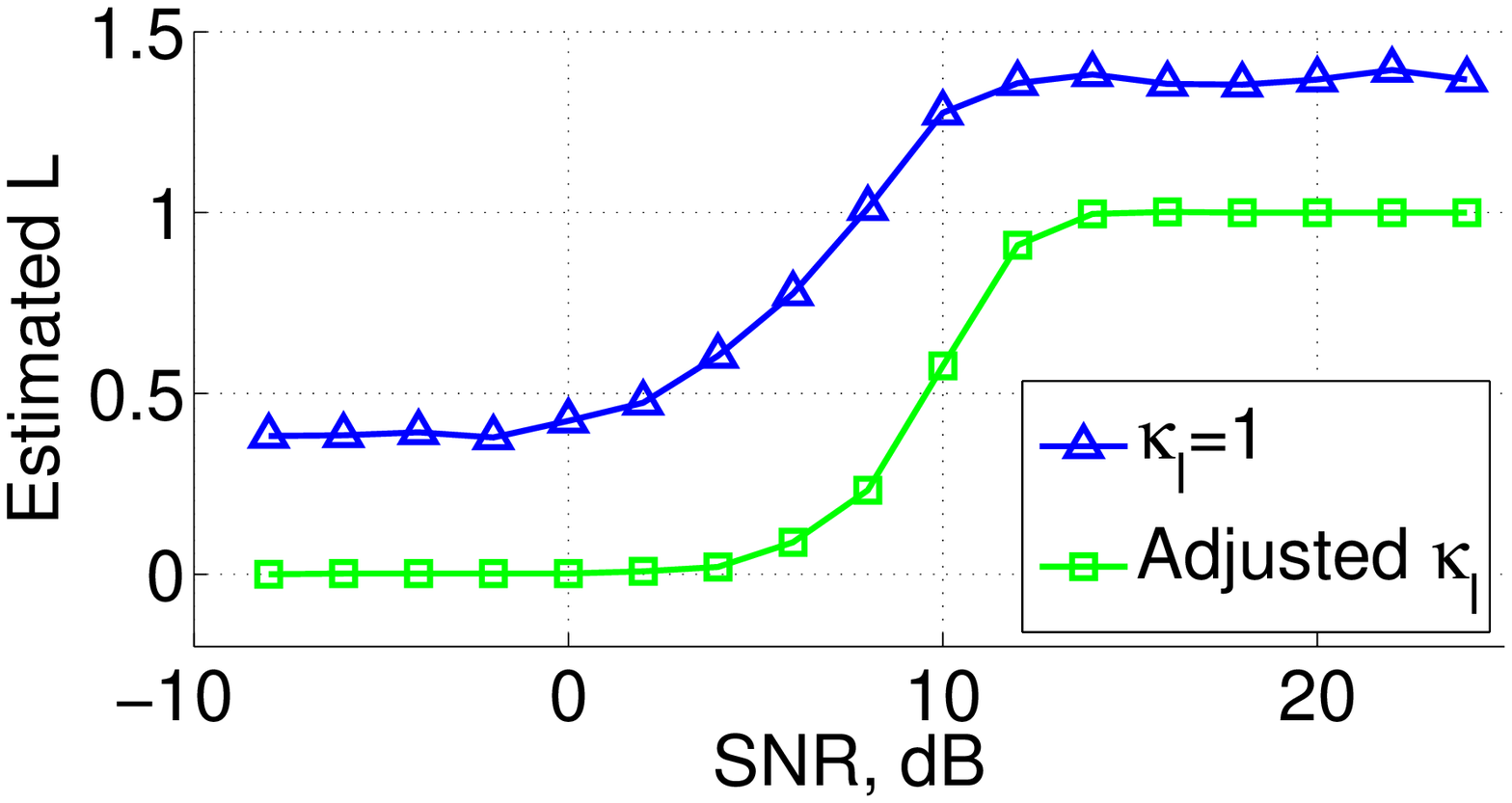}}
\subfigure[\label{fig:LEst_Corr062}]{\includegraphics[width=0.235\linewidth]{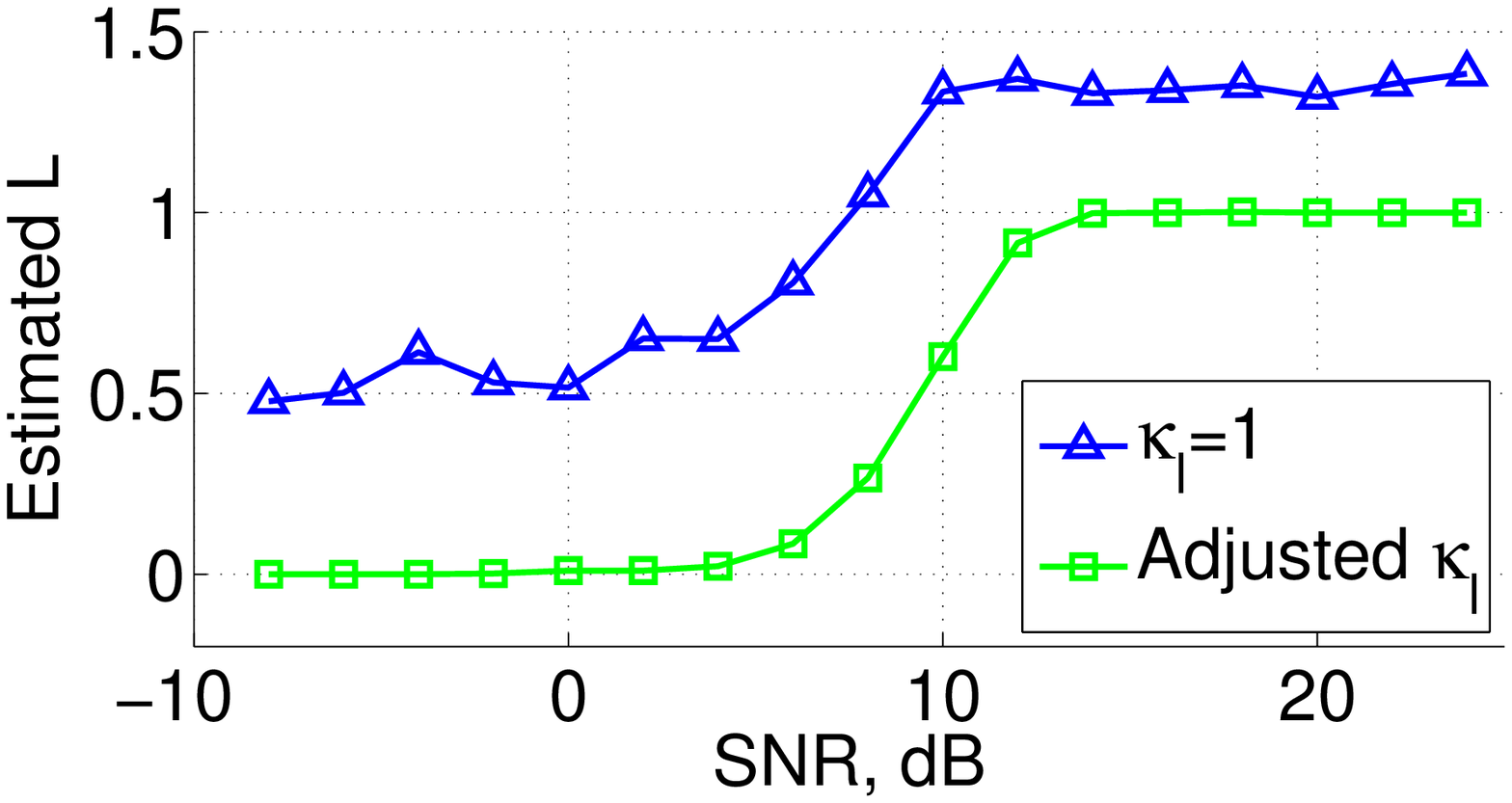}}
\subfigure[\label{fig:LEst_Corr089}]{\includegraphics[width=0.235\linewidth]{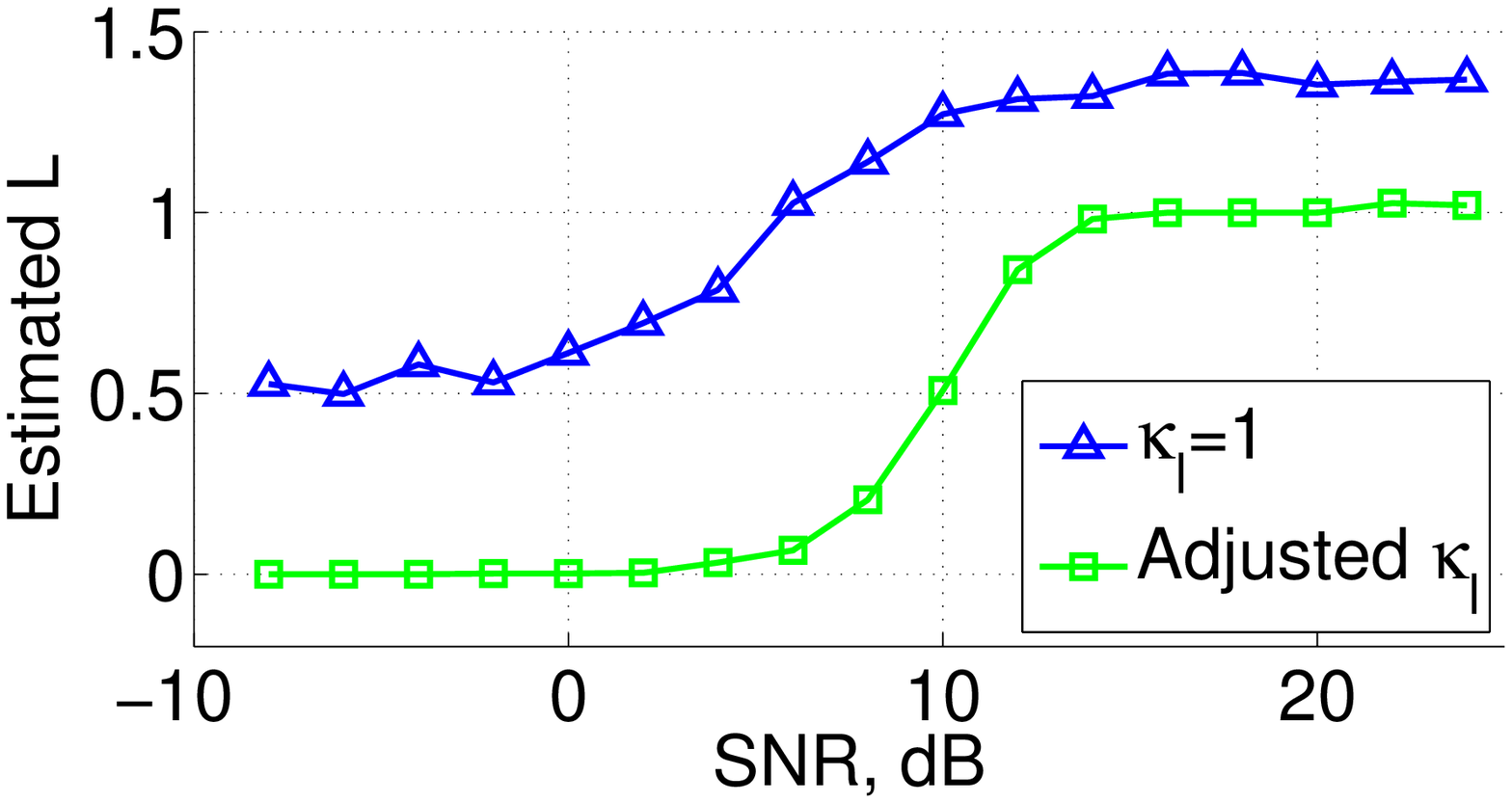}}
\subfigure[\label{fig:LEst_Corr099}]{\includegraphics[width=0.235\linewidth]{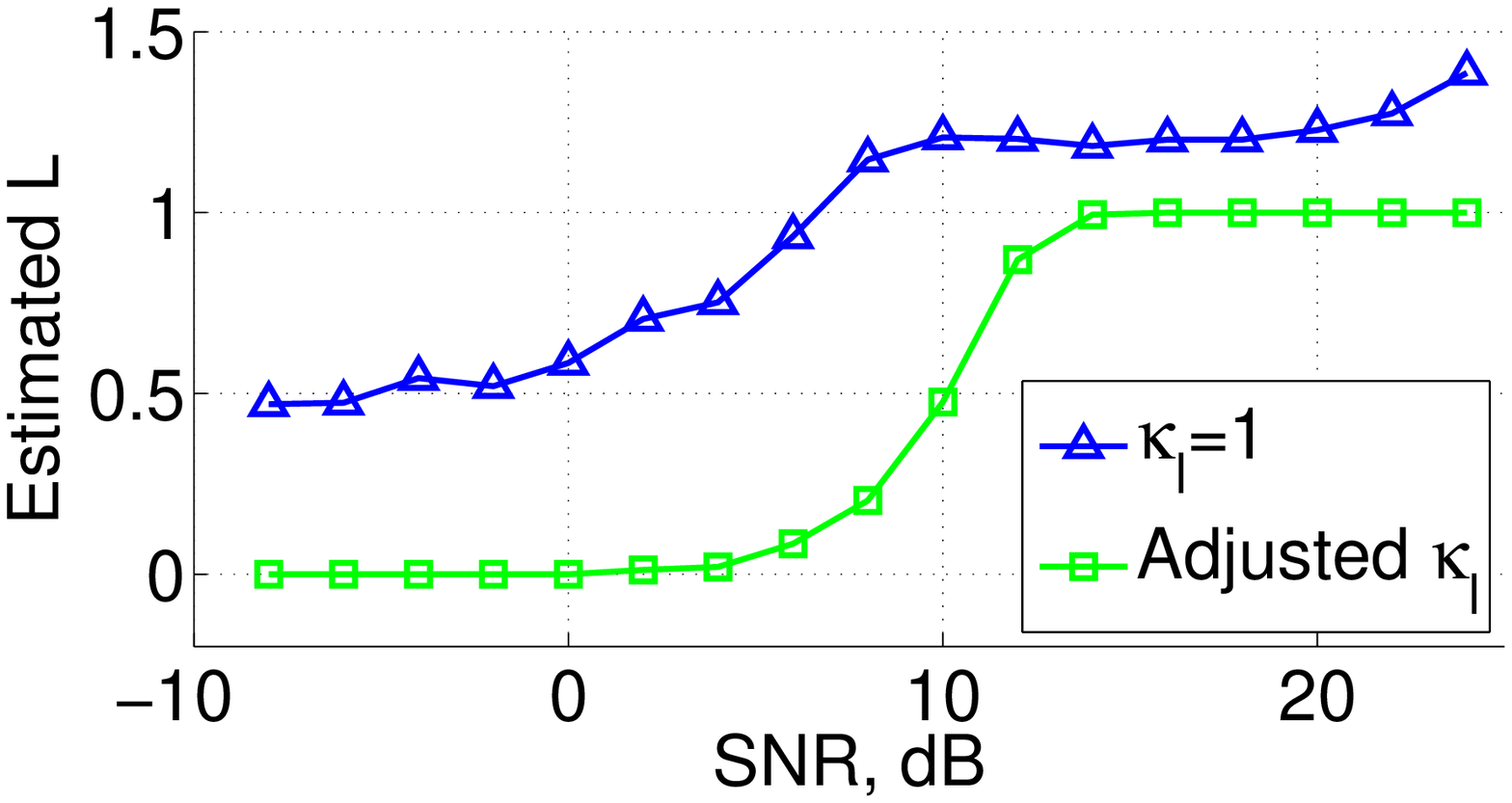}}
}
\centerline{
\subfigure[\label{fig:DelayHist_Kappa1}]{\includegraphics[width=0.235\linewidth]{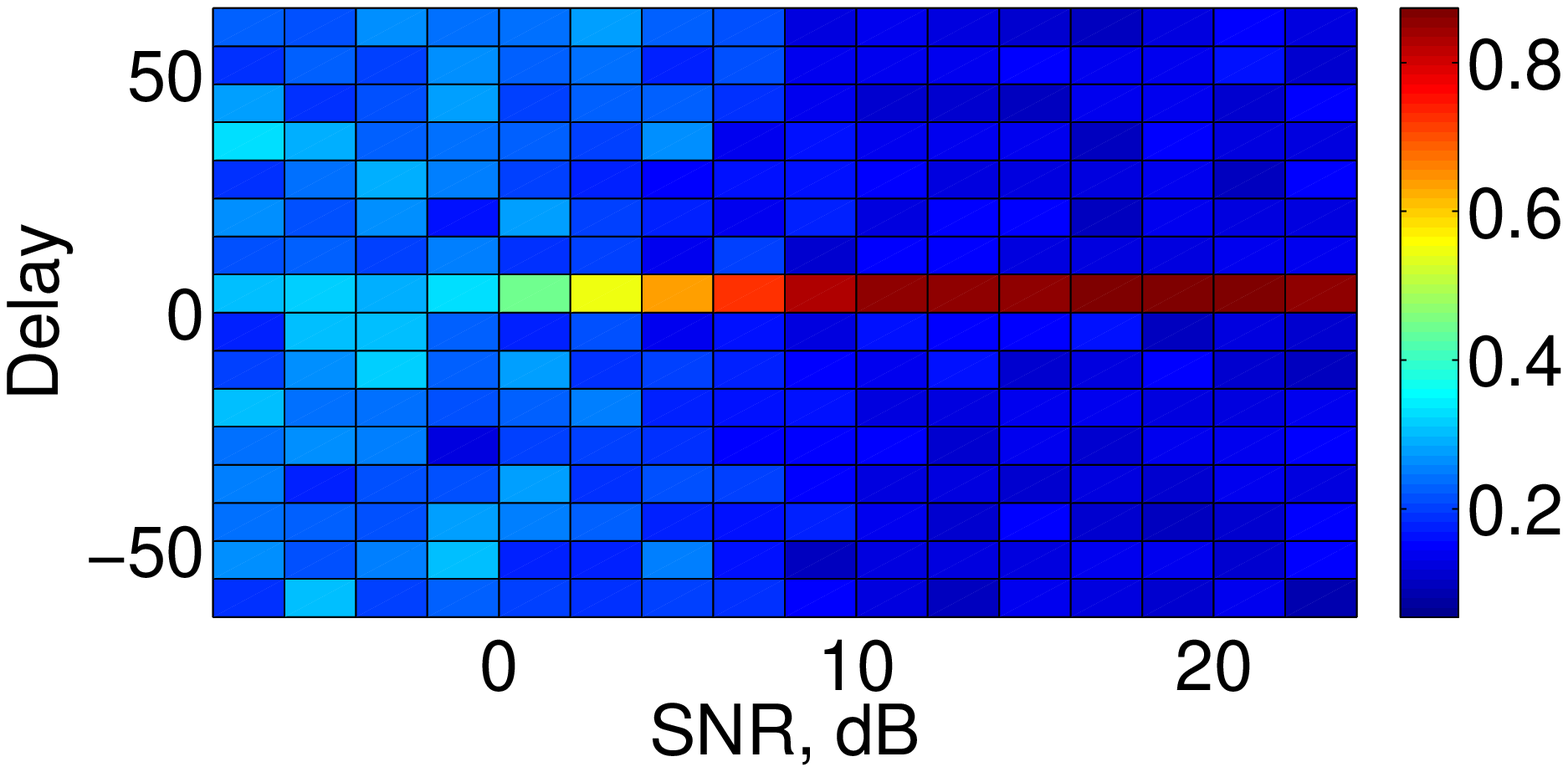}}
\subfigure[\label{fig:DelayHist_Kappa1_Corr062}]{\includegraphics[width=0.235\linewidth]{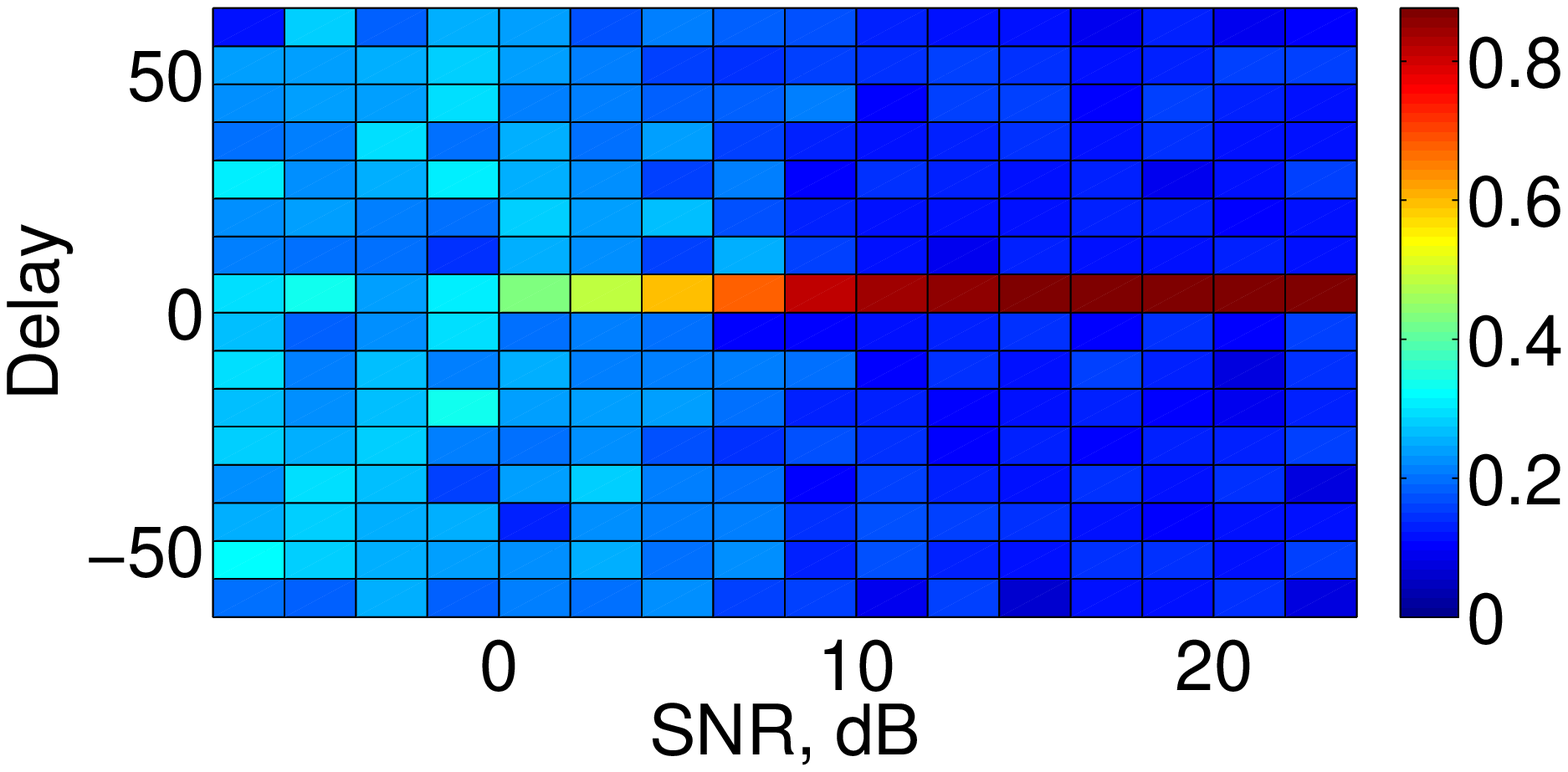}}
\subfigure[\label{fig:DelayHist_Kappa1_Corr089}]{\includegraphics[width=0.235\linewidth]{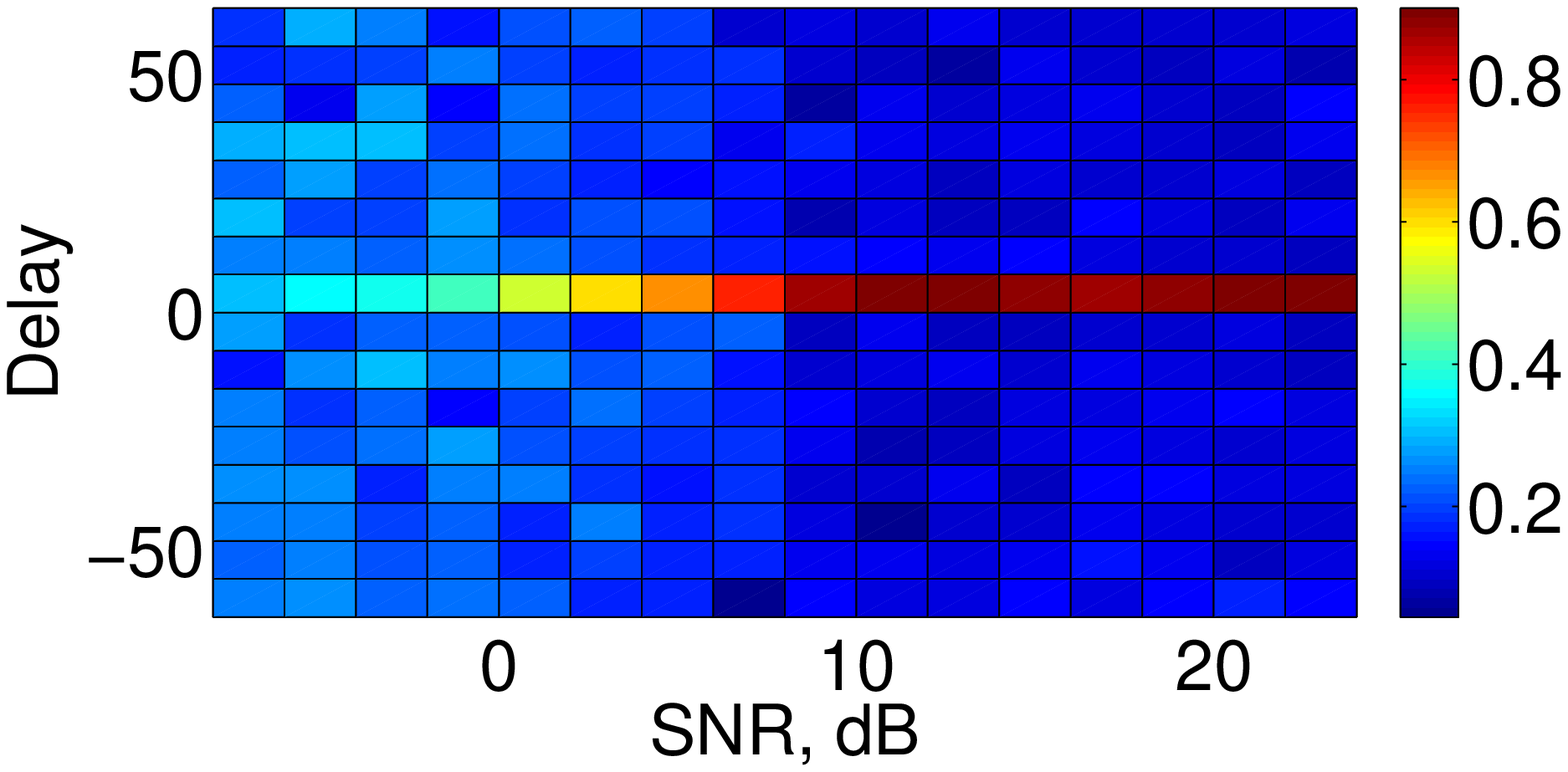}}
\subfigure[\label{fig:DelayHist_Kappa1_Corr099}]{\includegraphics[width=0.235\linewidth]{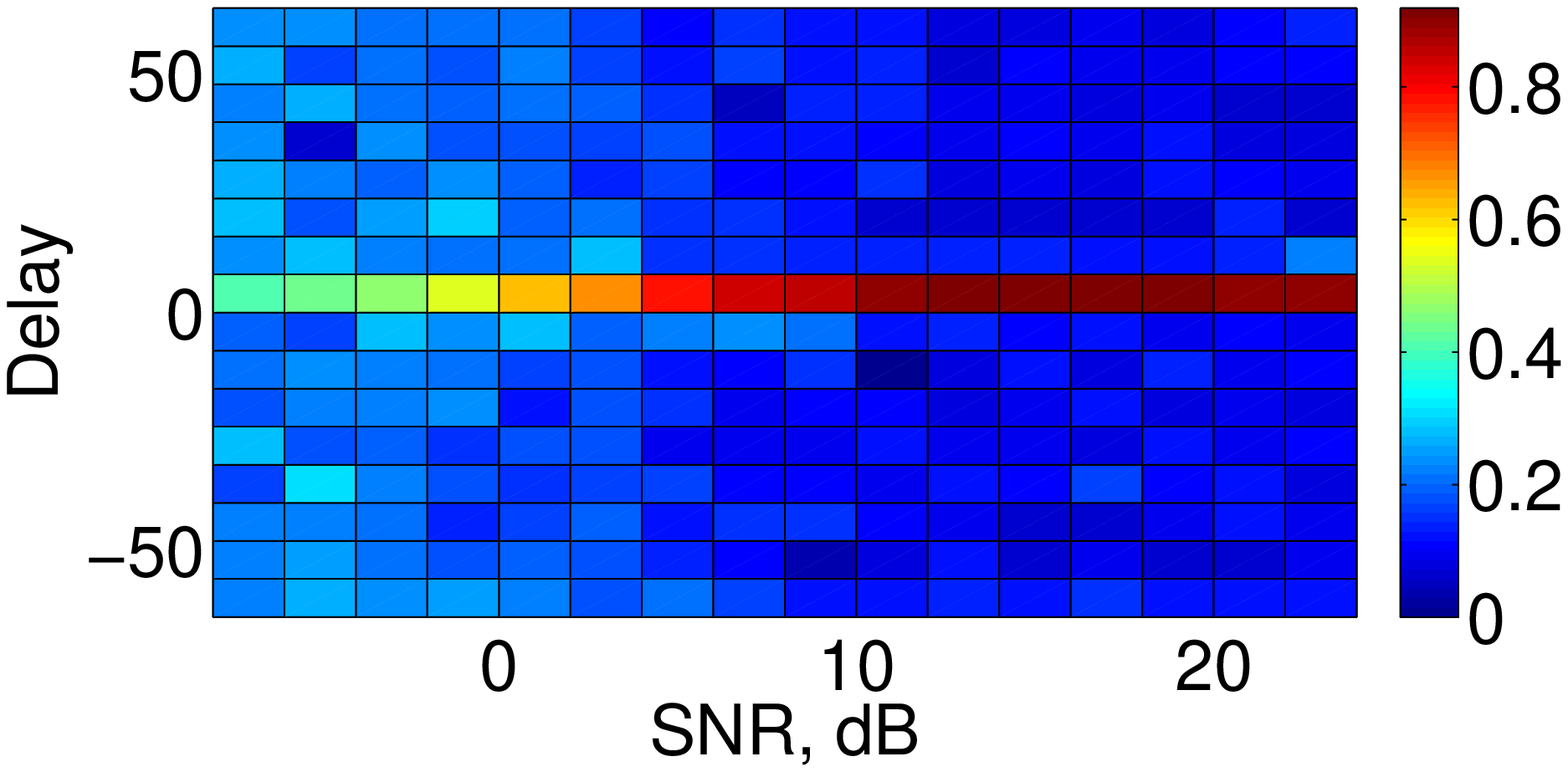}}
}
\centerline{
\subfigure[\label{fig:DelayHist_Gumb}]{\includegraphics[width=0.235\linewidth]{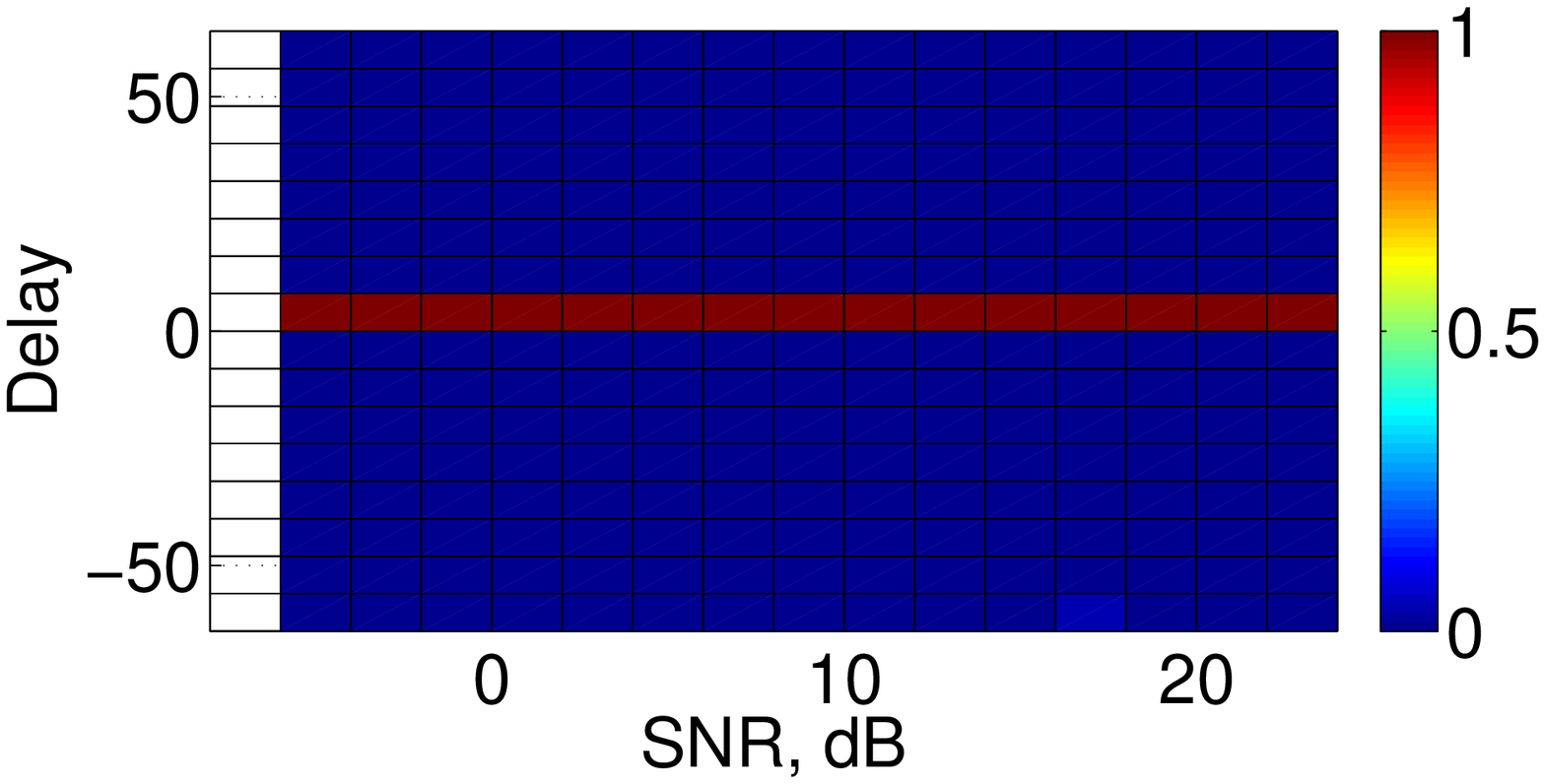}}
\subfigure[\label{fig:DelayHist_Gumb_Corr062}]{\includegraphics[width=0.235\linewidth]{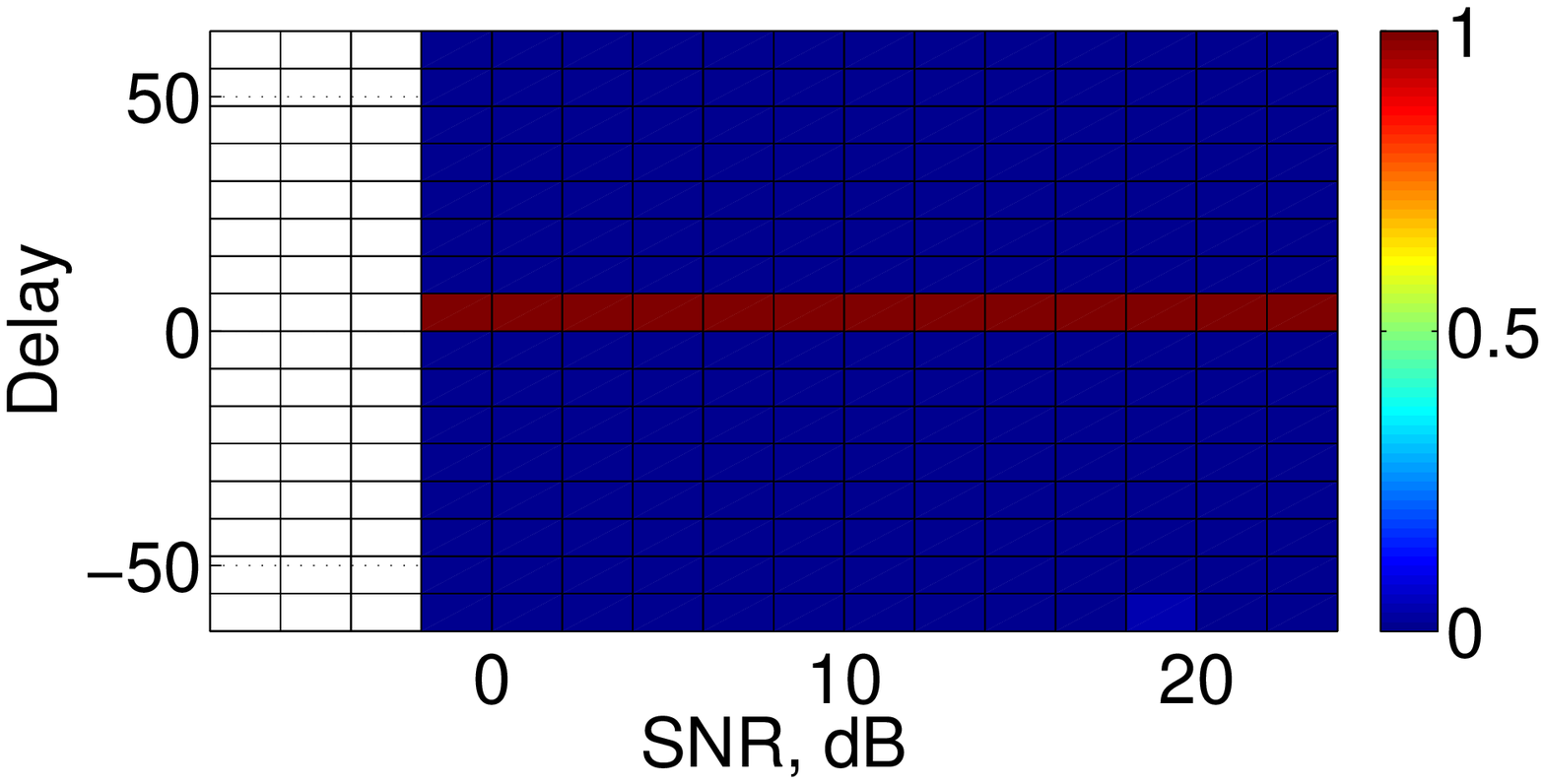}}
\subfigure[\label{fig:DelayHist_Gumb_Corr089}]{\includegraphics[width=0.235\linewidth]{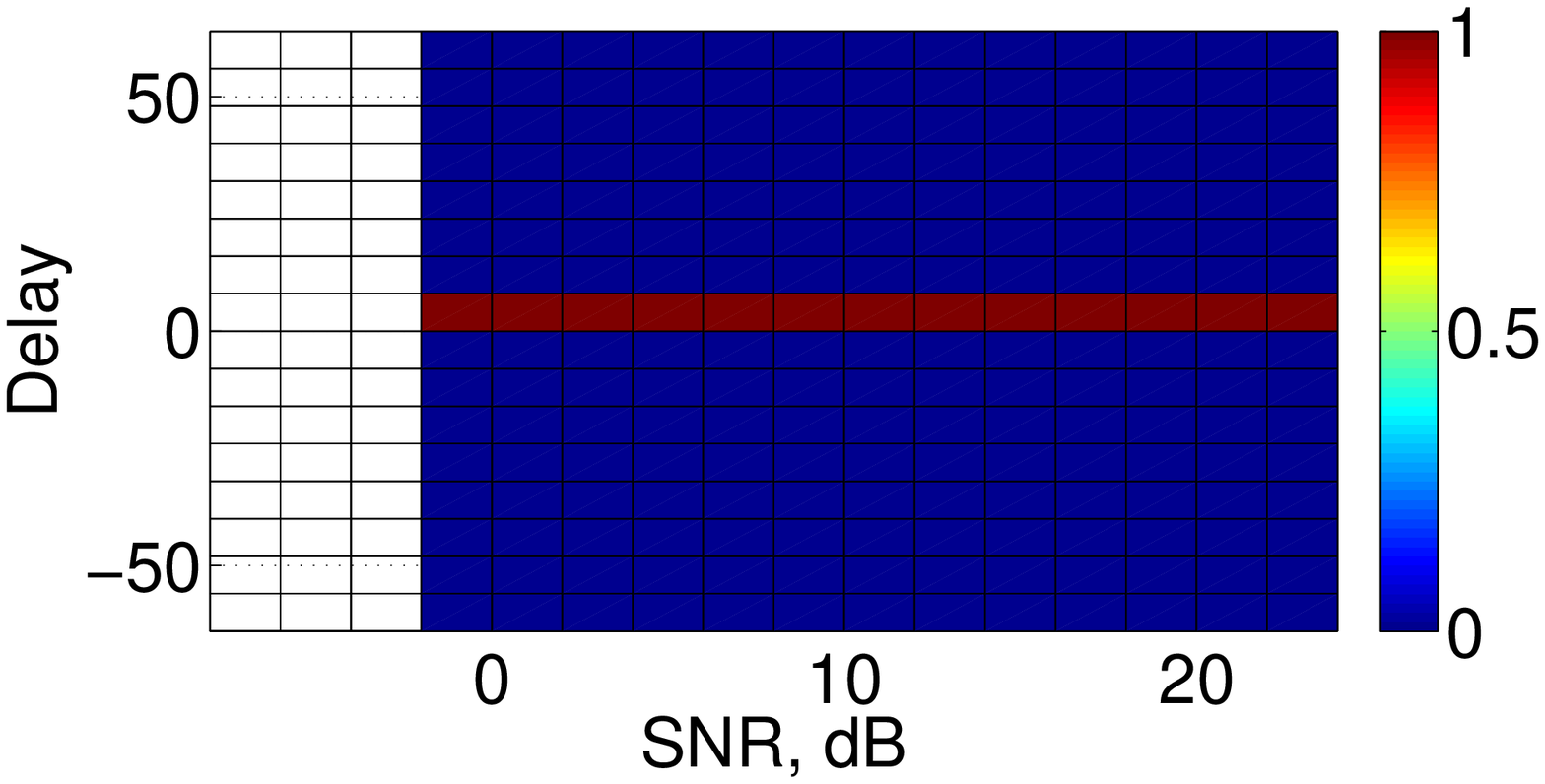}}
\subfigure[\label{fig:DelayHist_Gumb_Corr099}]{\includegraphics[width=0.235\linewidth]{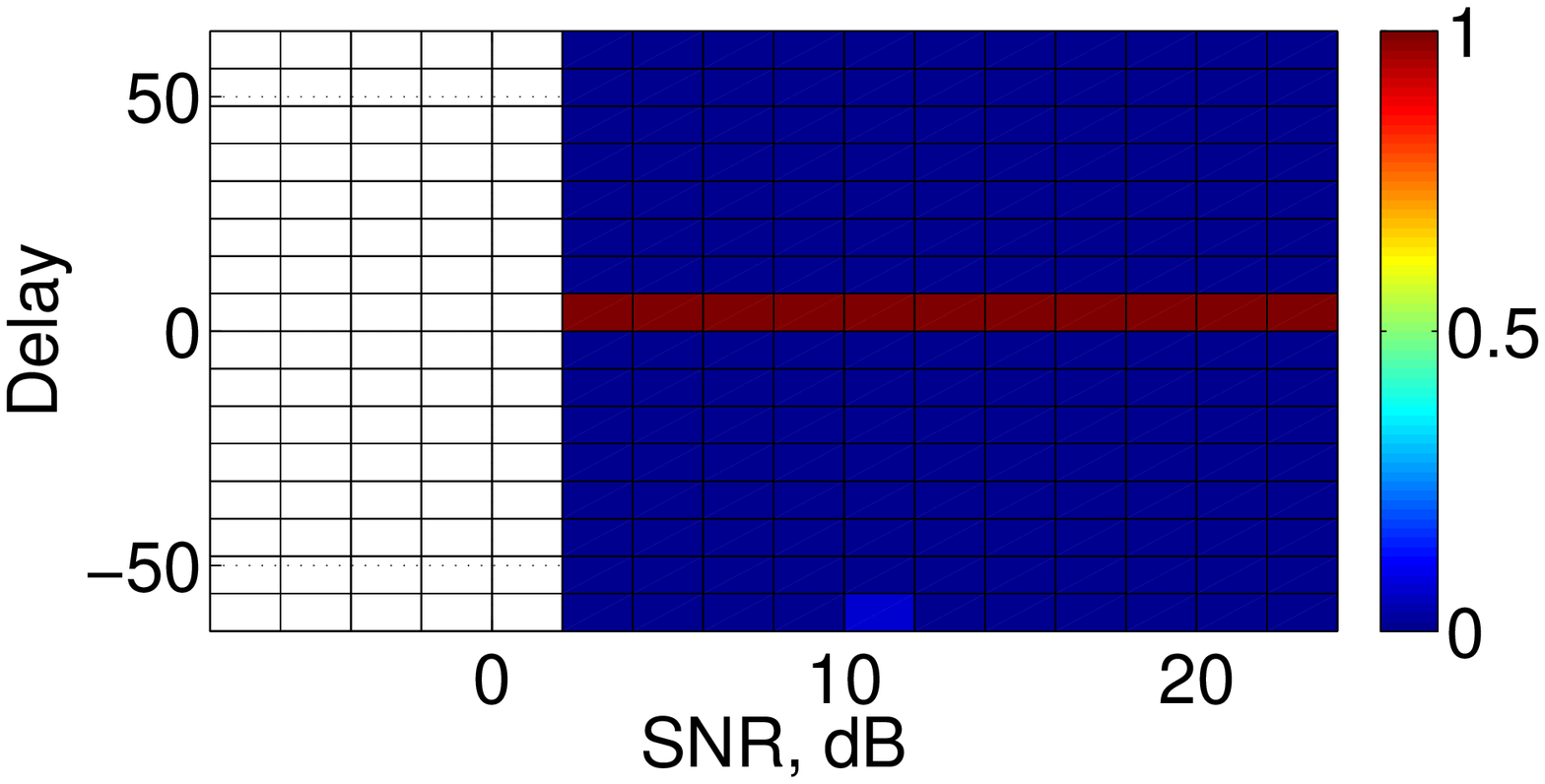}}
}
\caption{Estimated number of components versus SNR for (a,e,i) $K=N$, (b,f,j) $K=N/2$, (c,g,k) $K=N/4$, and (d,h,l) $K=N/16$. In figures (e)-(h) $\kappa_l=1$; in figures (i)-(l) $\kappa_l=\log(1/\log(1-\epsilon_l)^{-1/N})$ with $\epsilon_l=0.001$.}
\label{fig:EstL}
\end{figure*}
As we see, with $\kappa_l=1$ setting, the algorithm mainly detects noise in low SNR (Fig. \ref{fig:DelayHist_Kappa1} \-- \ref{fig:DelayHist_Kappa1_Corr099}) and overestimates the number of components in high SNR regime.
With the adjusted threshold, the number of detections at low SNR is almost zero, yet when a component is detected, it corresponds to the actual multipath component with high probability.

\subsection{Superresolution properties of the algorithm}
In the next simulation we investigate the resolution ability of the proposed IARD algorithm for both the assumptions $\mathrm{A1}$ (IARD-A1) and the assumption $\mathrm{A2}$ (IARD-A2).
Here we will consider the case $L=2$, with component delays $\tau_l$ no longer restricted to a sampling grid.
Additionally, we will consider a Doppler shift $\nu_l$ for each component.
This setting will correspond to a time-varying SISO channel model with stationary parameters.
To estimate Doppler frequency we will consider $M=25$ consecutive channel measurements, so that the model of a single component $\vect{s}(\vect{\theta}_l)$ is represented as $\vect{s}(\tau_l,\nu_l)=\mathrm{vec}\{\vect{X}\}$, where $\vect{X}$ is an $R\times M$ matrix and $[\vect{X}]_{r,m}=s[r-\tau_l/T_s]\mathrm{e}^{j2\pi\nu_l r m T_s}$, $r=0,\ldots,R-1$, $m=0,\ldots,M-1$.
The signal $s[n]$ is a downsampled version of the actual $10$MHz-wide calibration signal used in the aeronautical channel measurement campaign \cite{Schneckenburger2014}.
The used sampling period is $T_s=4\mu\mathrm{s}$, which results in $R=128$ samples per single channel measurement.
The synthetic delays $\vect{\tau}=[\tau_1,\tau_2]^T$ of the components are generated as follows: $\tau_1$ is uniformly drawn from the interval $[0, T_s]$ and  $\tau_2=\tau_1+\Delta\cdot T_s$, with $\Delta$ being a simulation parameter.
The Doppler frequency $\nu_1$ of the first component is drawn uniformly from the interval $[-200,200]$Hz; for the second component we select $\nu_2=\nu_1+\epsilon_{\nu}$, where $\epsilon_{\nu}$ is a random jitter in the interval $[-2,2]$Hz.
The weights of both components have unit magnitude and uniformly distributed phase drawn from the interval $[0,2\pi]$.
For both IARD-A1 and IARD-A2 we will select the threshold according to \eqref{equ:AdjustedKappa}.

For comparison purposes we will also consider a classical SAGE algorithm \cite{Fleury_SAGE99} that employs Bayesian Information Criterium (BIC) \cite{lanterman00schwarz} to select the model order.
Two different implementations of the SAGE algorithm with BIC criterion are compared.
The first implementation (SAGE-BIC-1) exploits the signal detection method based on the eigenstructure of the estimated signal covariance matrix \cite{WaxKailath85}.
This algorithm first estimates the correlation matrix of the input signal $\vect{y}$ using $N=R\times M$ data samples;
then, the information-theoretic criterion is applied to the eigenvalues of the correlation matrix following the scheme described in \cite{WaxKailath85}.
This gives an estimate of the number of signals, which is then plugged in the SAGE algorithm to estimate signal parameters.
The second implementation (SAGE-BIC-2) estimates several models with different number of components $L$ using the SAGE algorithm as follows: it starts with the model order $L=0$ and sequentially increases the model order until the minimum of the BIC criterium is achieved, each time fitting the model anew.
The BIC criterion is evaluated as $$\mathrm{BIC}(L)=-\log\left(p(\vect{y}|\hvect{\Theta},\hvect{w})\big|_{L}\right)+\frac{8}{2}L\log(N)$$ for each possible value of $L$.
Here $\log\left(p(\vect{y}|\hvect{\Theta},\hvect{w})\big|_{L}\right)$ is the value of the log-likelihood function evaluated at maximum, under assumption that the model order is $L$.
The penalty factor $\frac{8}{2}L\log(N)$ arises as follows: penalization per single complex amplitude is $\log(N)$, and per additional unknown time/frequency shift is $\frac{3}{2}\log(N)$ (see \cite{Djuric1996} and \cite{Stoica2004} for more details).
Note that in this realisation SAGE-BIC-2 requires fitting multiple models to find the minimum of the BIC criterion.
It is thus computationally very inefficient for realistic channels, where $L$ might range up to several tens of components and number of samples $N$ is on the order $\thicksim 10^{3}\--10^{5}$.

As the performance criteria we compute the averaged number of detected components $\widehat{L}$, the probability of detecting exactly two components $P_D^{(L=2)}$, the averaged delay root median squared error (RMeSE) $\mathrm{RMeSE}(\hvect{\tau})$ normalized by the sampling period $T_s$, and  Doppler RMeSE $\mathrm{RMeSE}(\hvect{\nu})$, normalized by the Doppler resolution $1/NT_s$.
The latter two quantities are computed only for the cases when a correct number of components is detected.
Note that at low SNR the component detection rate will also be low, which is why the median squared error is used instead of mean squared error.
Additionally, we evaluate the averaged computation time per single algorithm run.
\begin{figure*}
\psfrag{Lnum}[bc][bc][0.7][0]{$\widehat{L}$}
\psfrag{Distance}[tc][tc][0.7][0]{$100\%\times\Delta$}
\psfrag{Probability}[bc][bc][0.7][0]{$P_{D}^{(L=2)}$}
\psfrag{DelayError}[bc][bc][0.7][0]{$\mathrm{RMeSE}(\hvect{\tau})/T_s$}
\psfrag{DoppError}[bc][bc][0.7][0]{$\mathrm{RMeSE}(\hvect{\nu})\times NMT_s$}
\centerline{
\subfigure[\label{fig:SNR5_LEst}]{\includegraphics[width=0.245\linewidth]{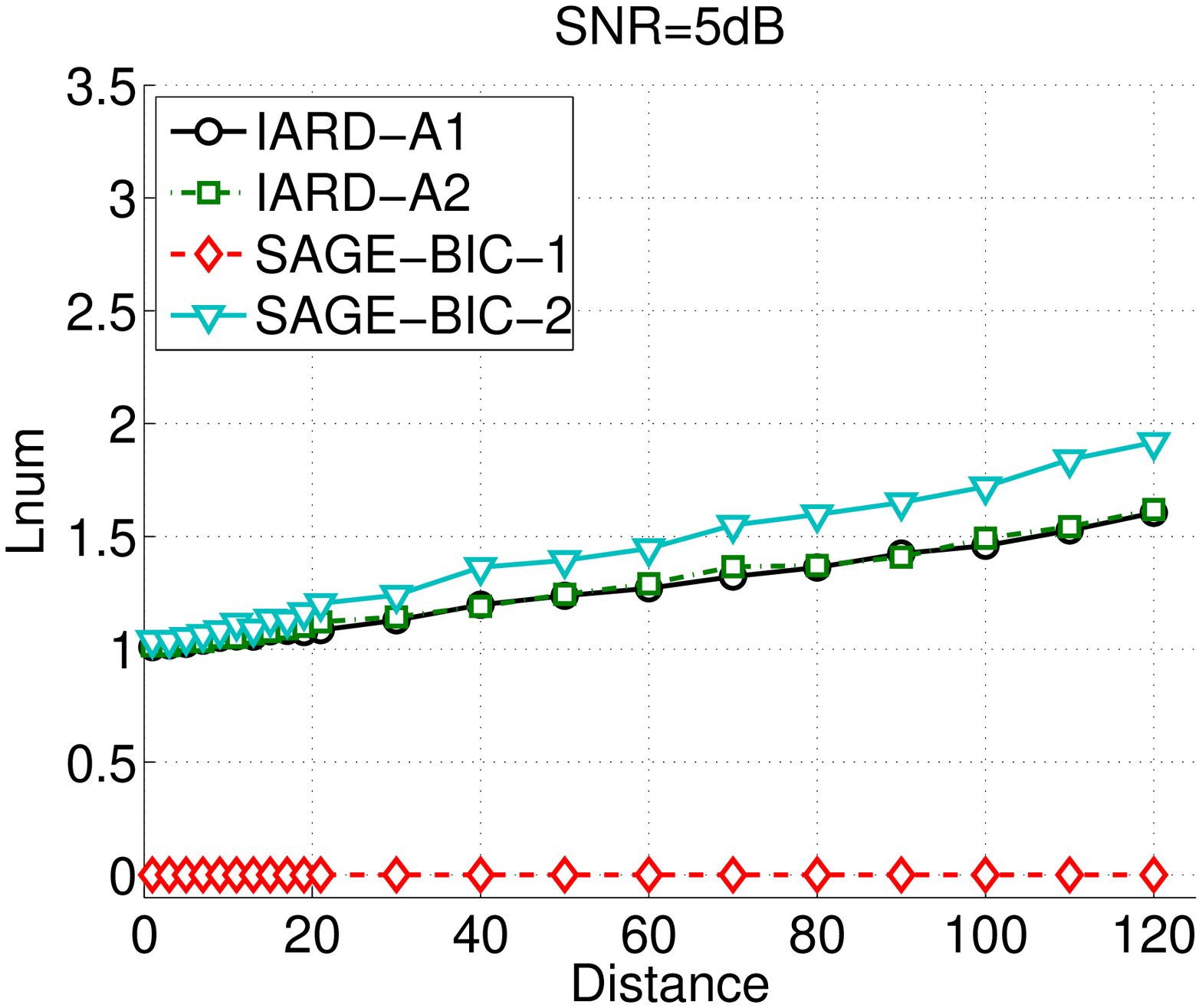}}
\subfigure[\label{fig:SNR10_LEst}]{\includegraphics[width=0.245\linewidth]{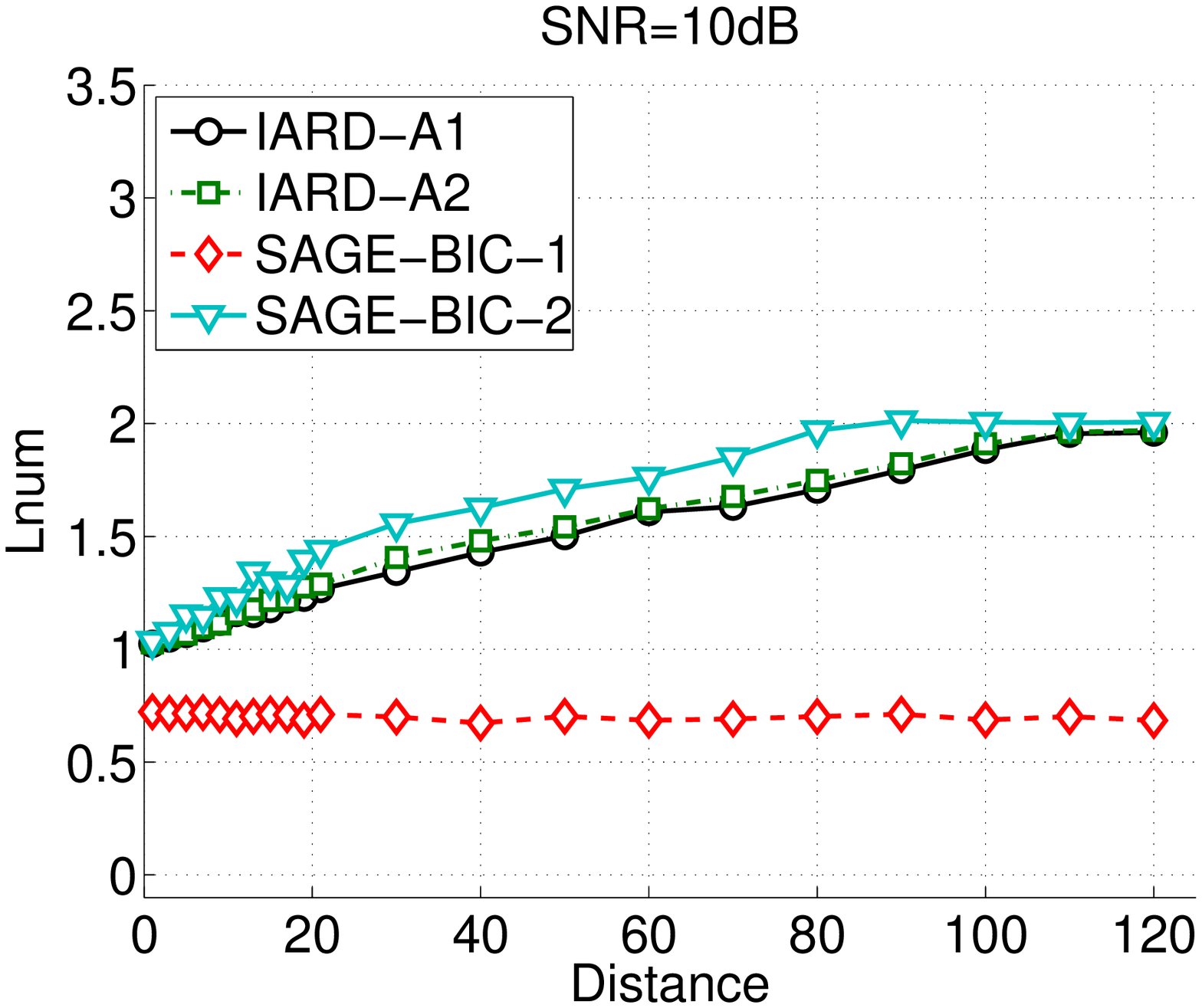}}
\subfigure[\label{fig:SNR20_LEst}]{\includegraphics[width=0.245\linewidth]{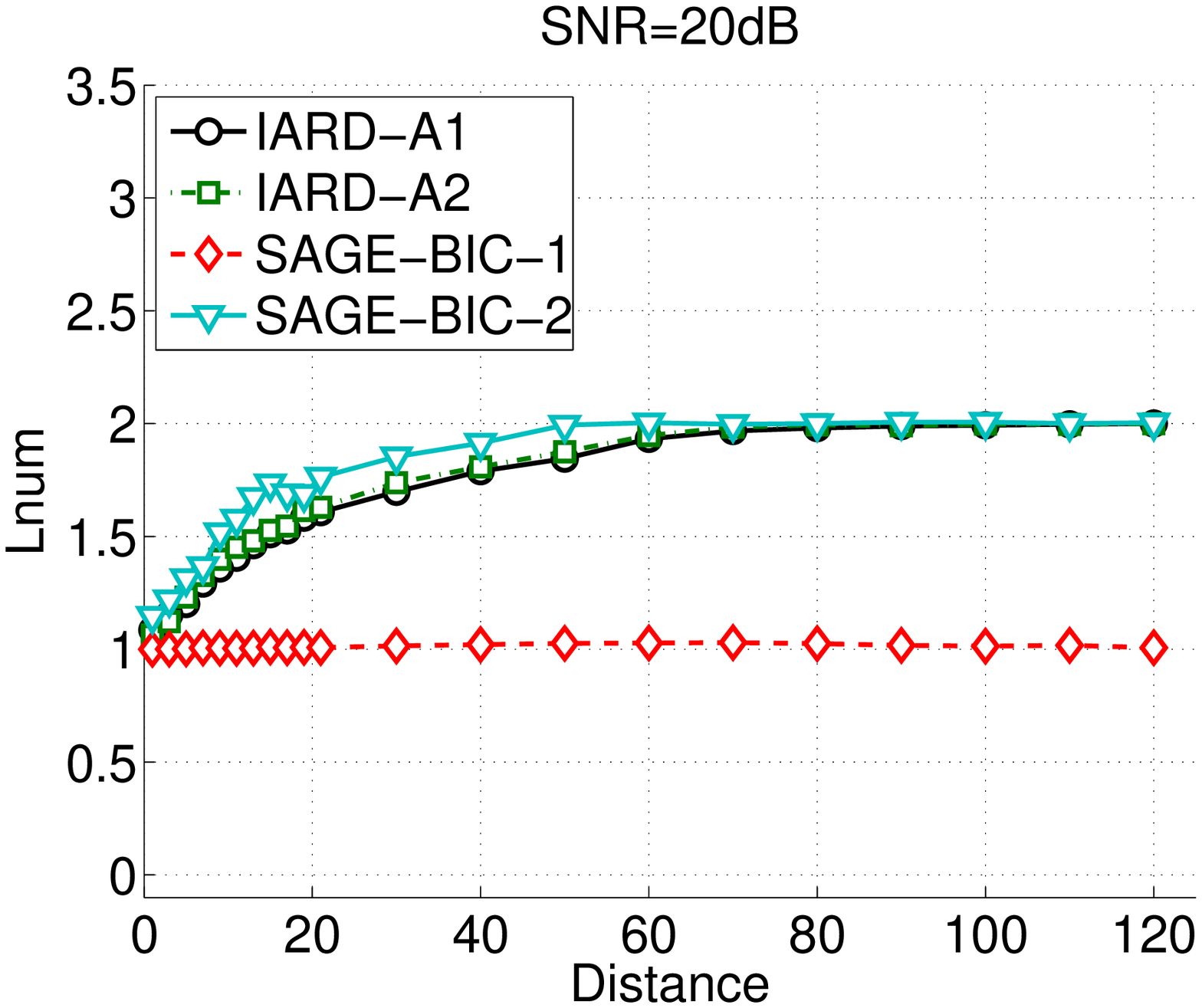}}
\subfigure[\label{fig:SNR30_LEst}]{\includegraphics[width=0.245\linewidth]{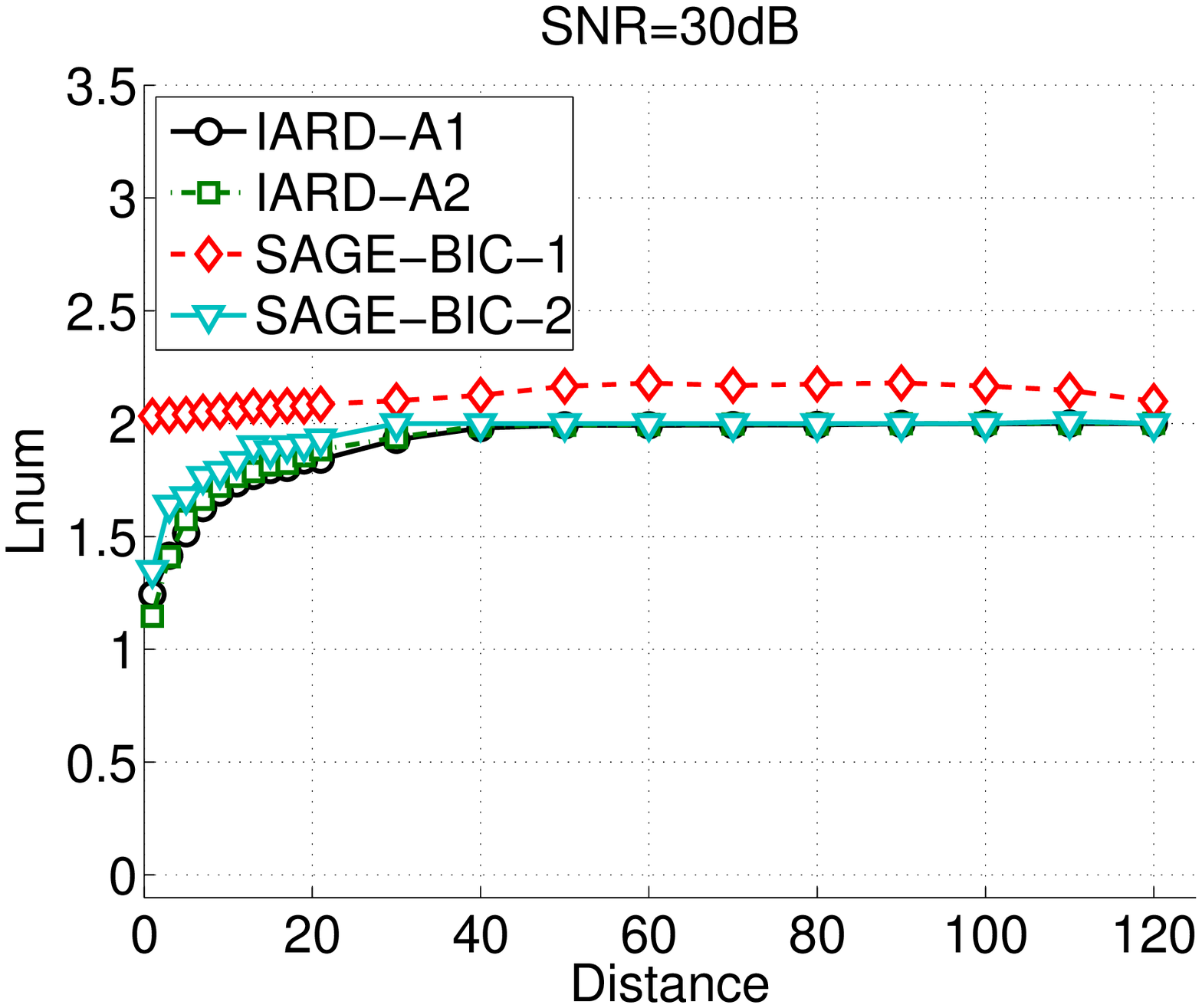}}
}
\centerline{
\subfigure[\label{fig:SNR5_DProb}]{\includegraphics[width=0.245\linewidth]{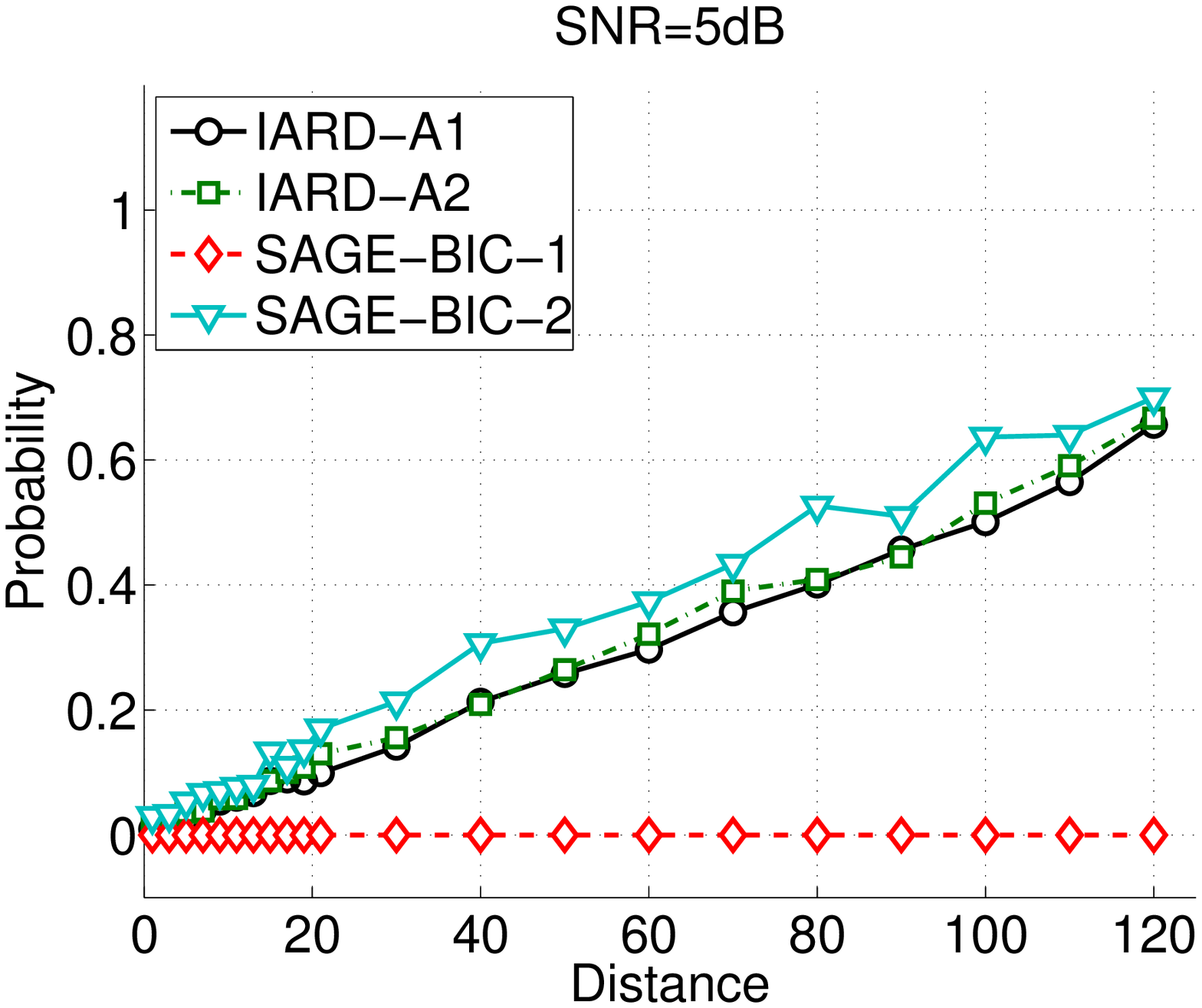}}
\subfigure[\label{fig:SNR10_DProb}]{\includegraphics[width=0.245\linewidth]{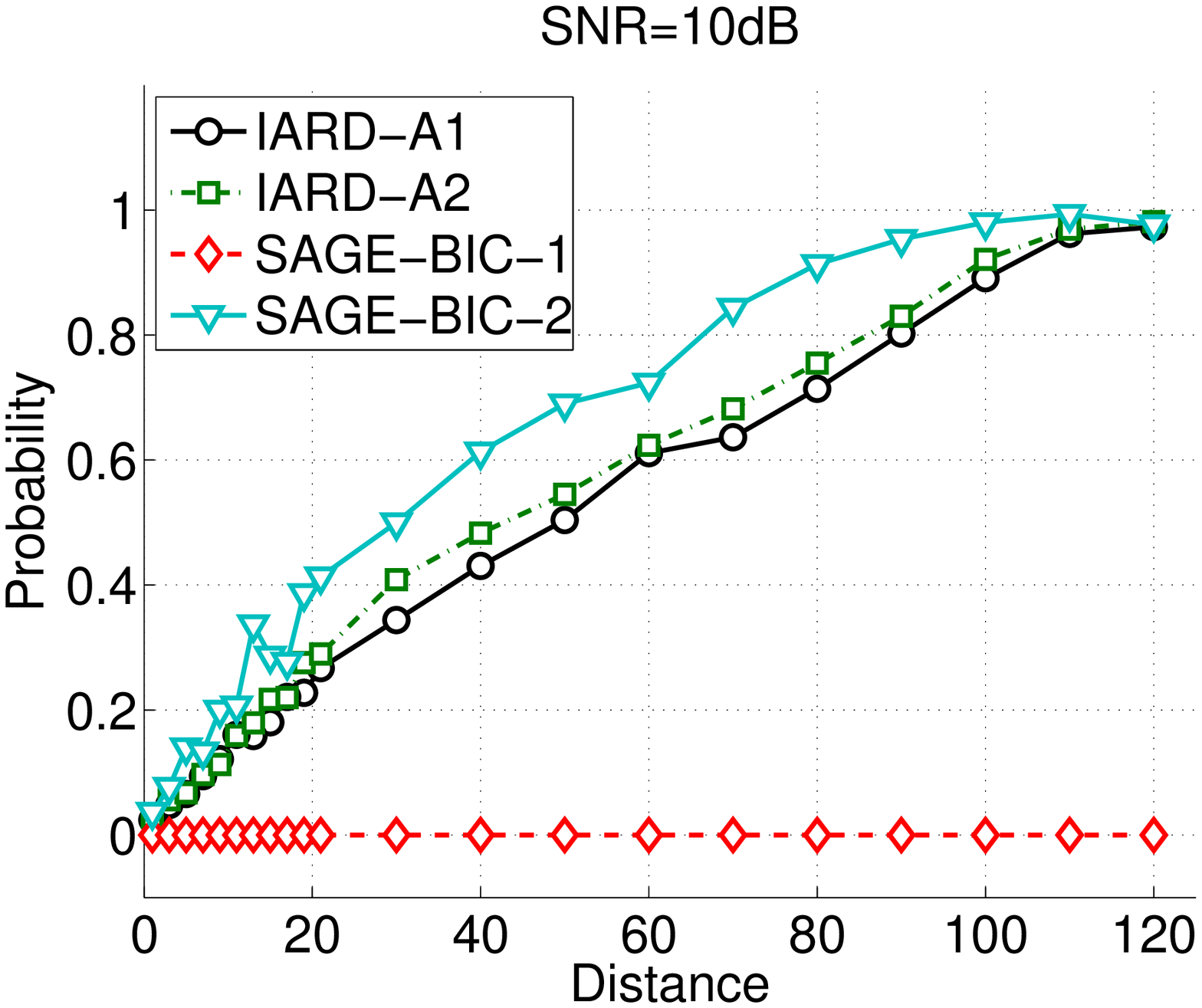}}
\subfigure[\label{fig:SNR20_DProb}]{\includegraphics[width=0.245\linewidth]{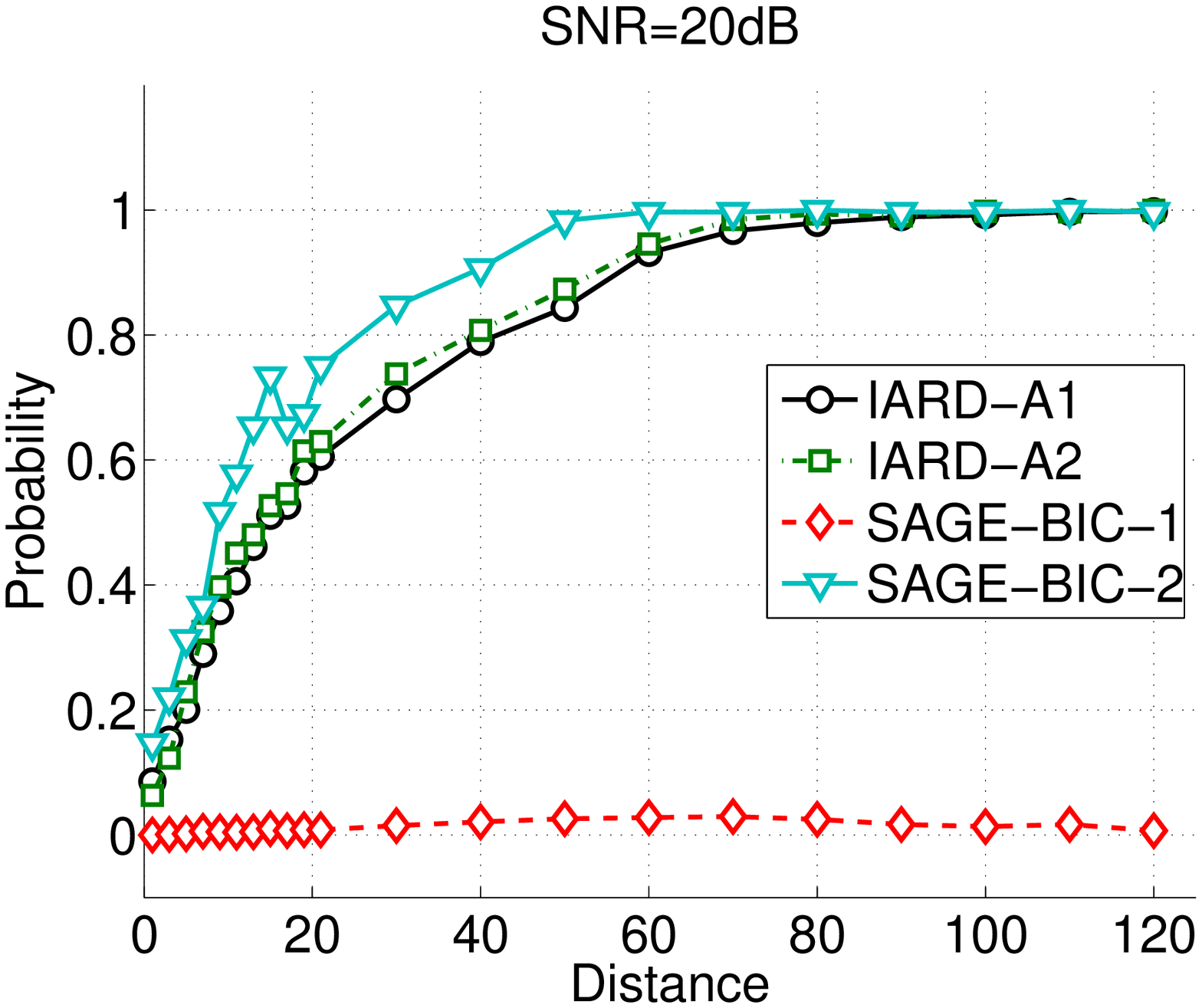}}
\subfigure[\label{fig:SNR30_DProb}]{\includegraphics[width=0.245\linewidth]{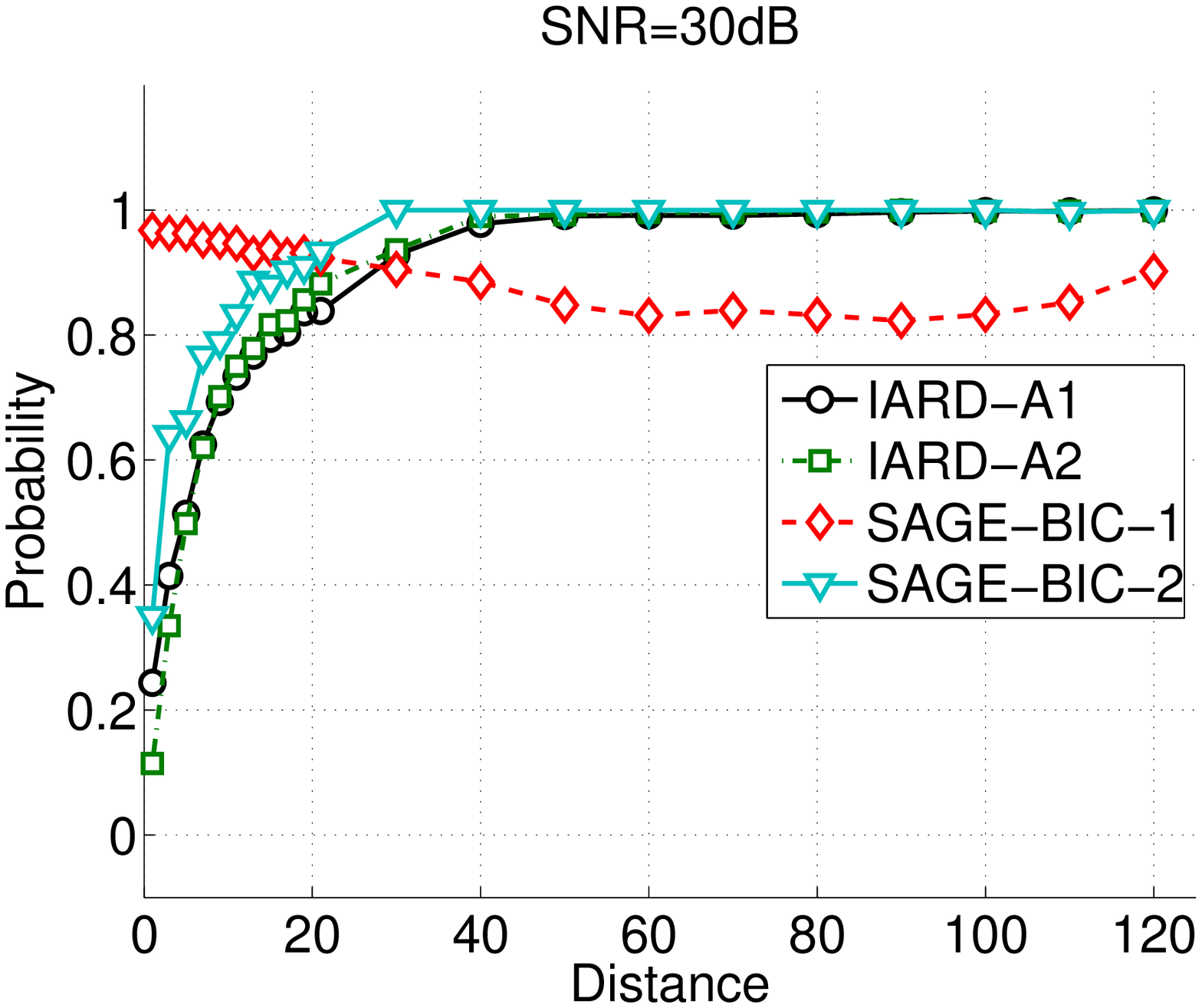}}
}
\centerline{
\subfigure[\label{fig:SNR5_Delay}]{\includegraphics[width=0.245\linewidth]{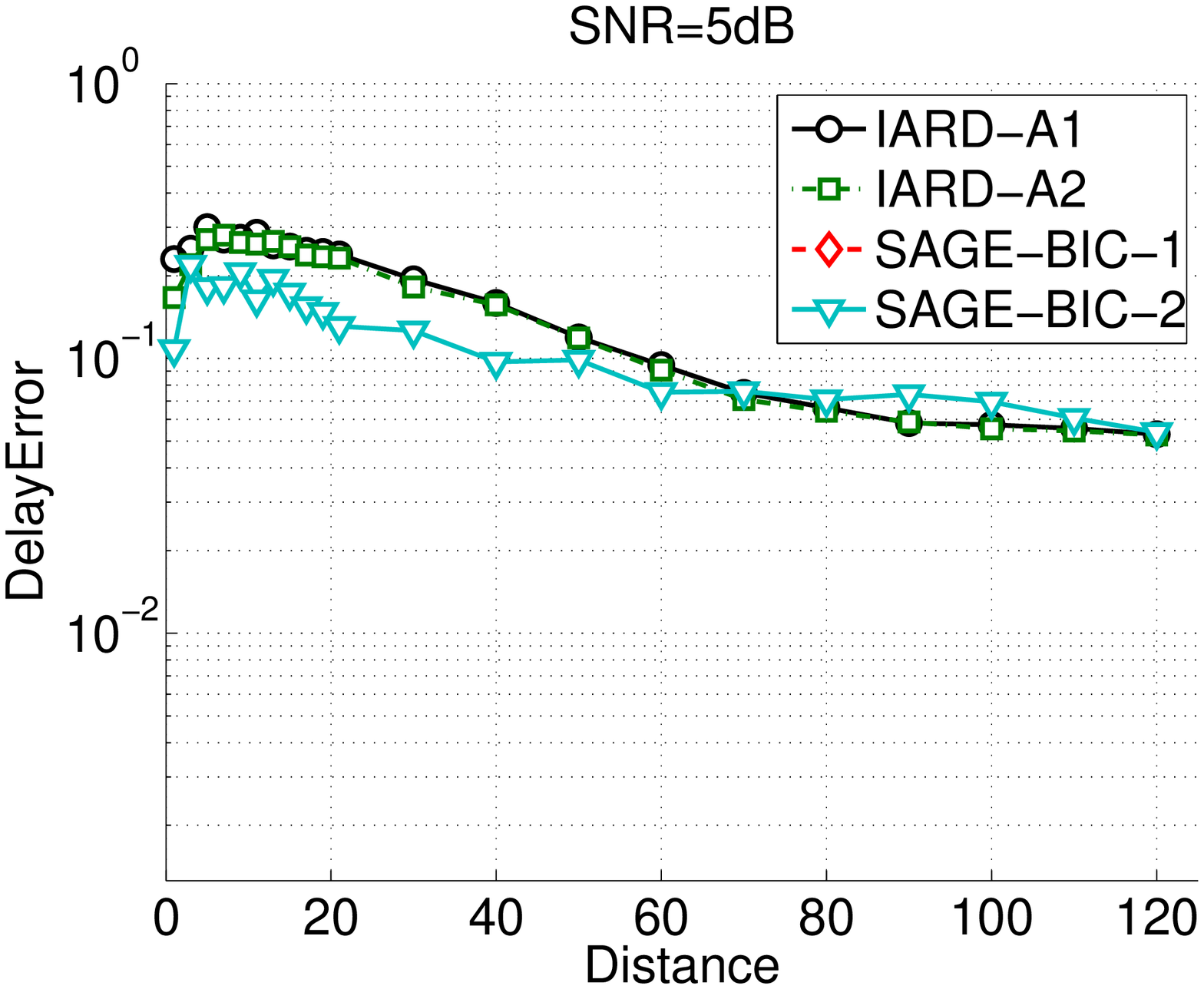}}
\subfigure[\label{fig:SNR10_Delay}]{\includegraphics[width=0.245\linewidth]{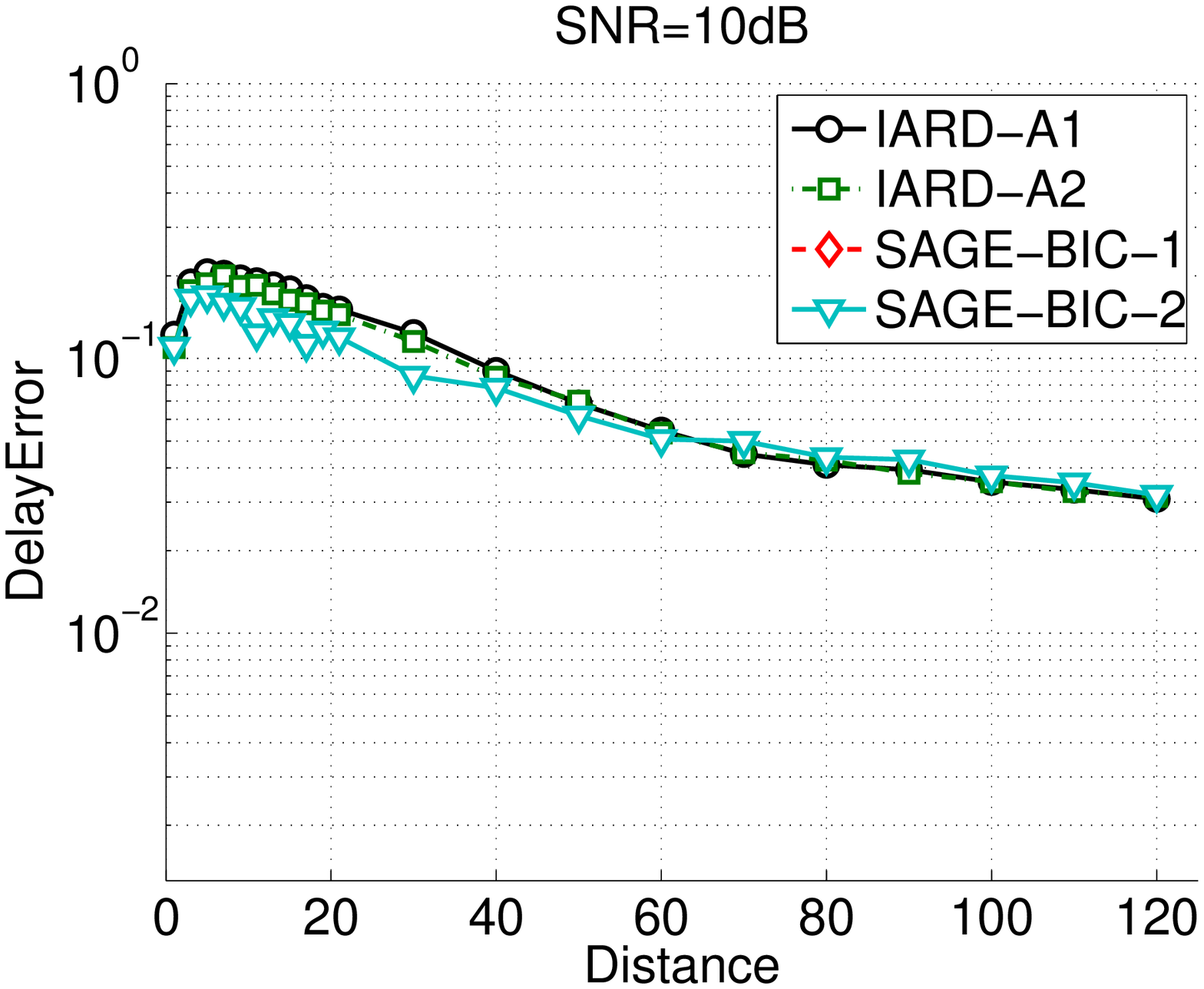}}
\subfigure[\label{fig:SNR20_Delay}]{\includegraphics[width=0.245\linewidth]{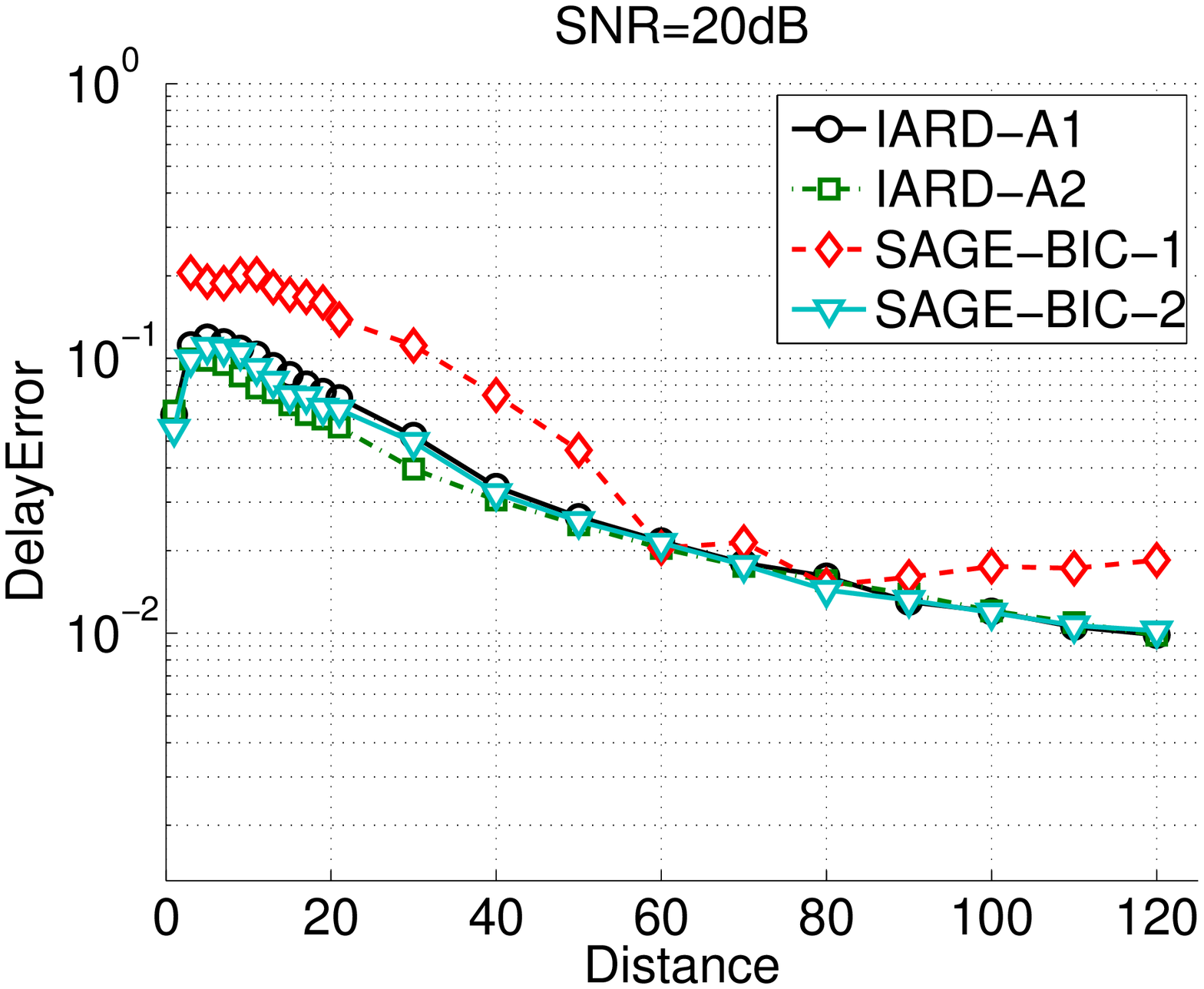}}
\subfigure[\label{fig:SNR30_Delay}]{\includegraphics[width=0.245\linewidth]{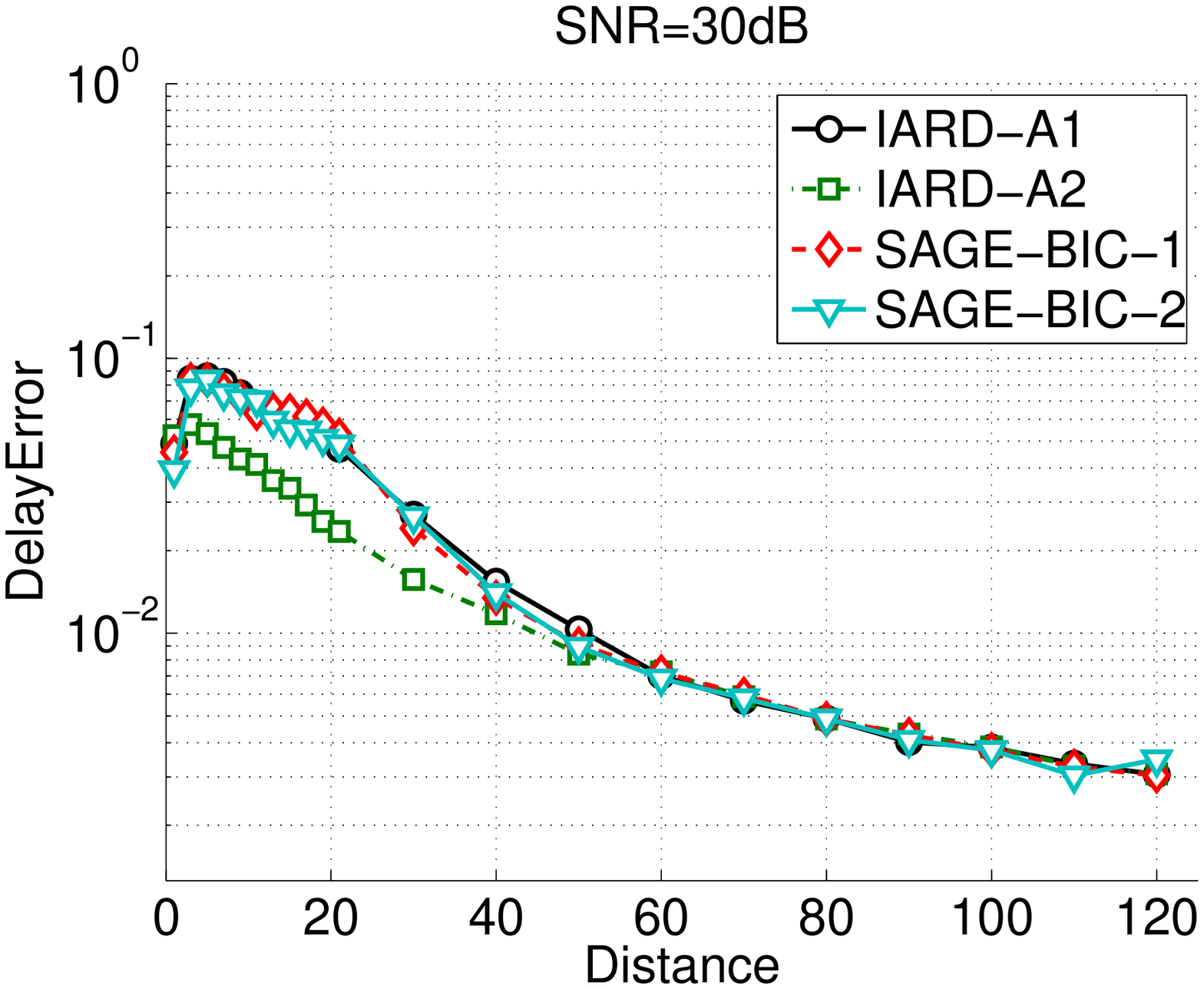}}
}
\centerline{
\subfigure[\label{fig:SNR5_Doppler}]{\includegraphics[width=0.245\linewidth]{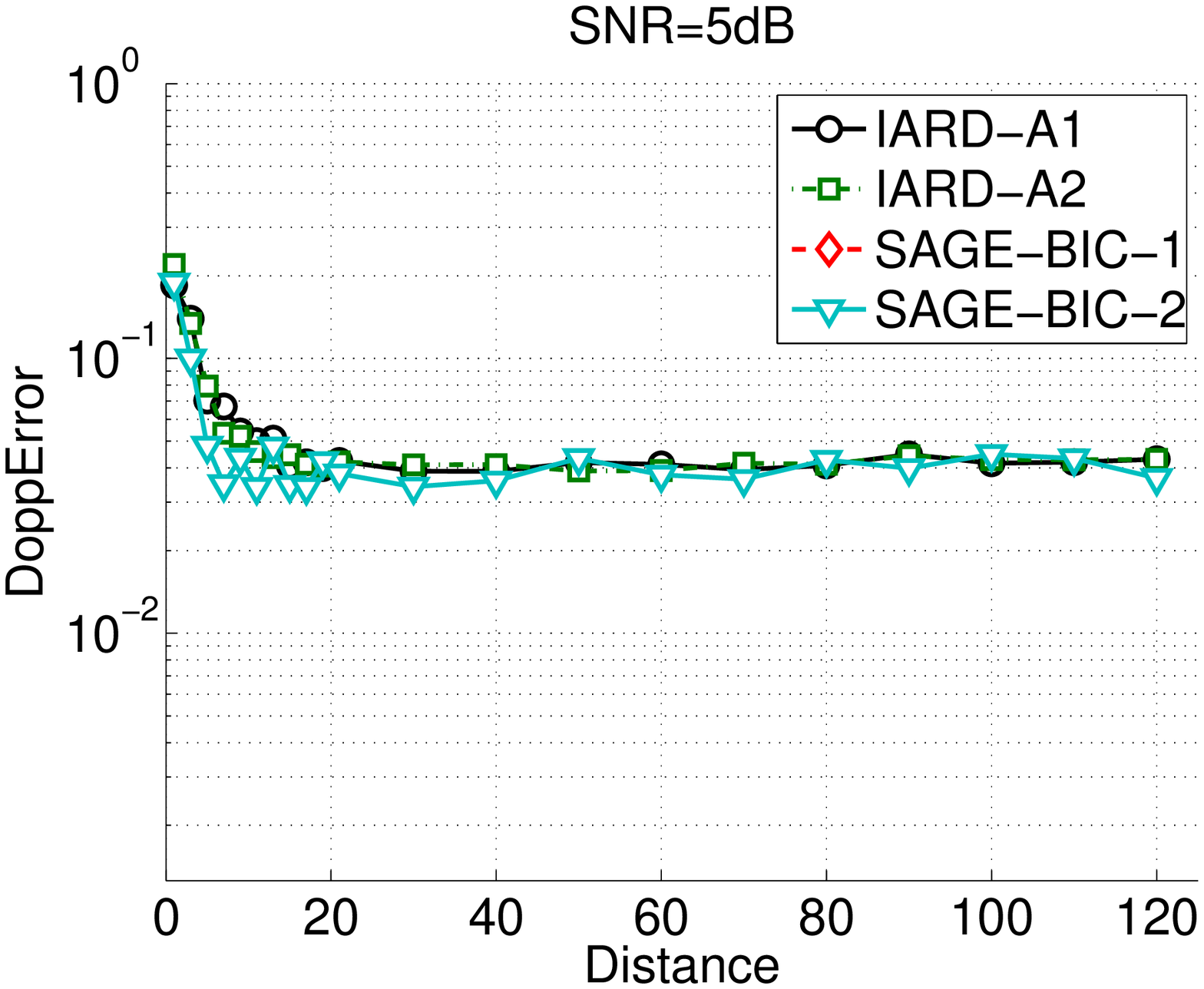}}
\subfigure[\label{fig:SNR10_Doppler}]{\includegraphics[width=0.245\linewidth]{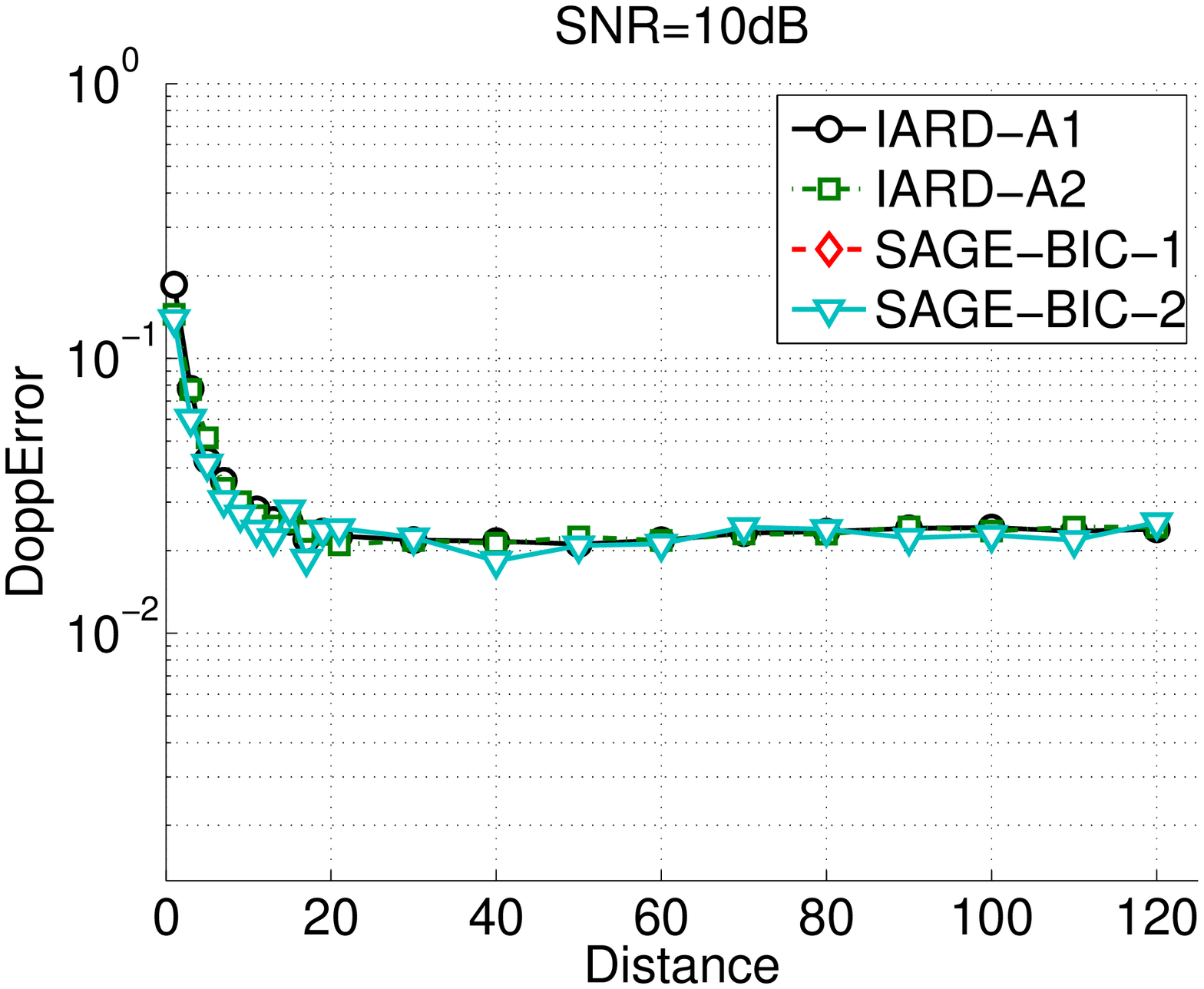}}
\subfigure[\label{fig:SNR20_Doppler}]{\includegraphics[width=0.245\linewidth]{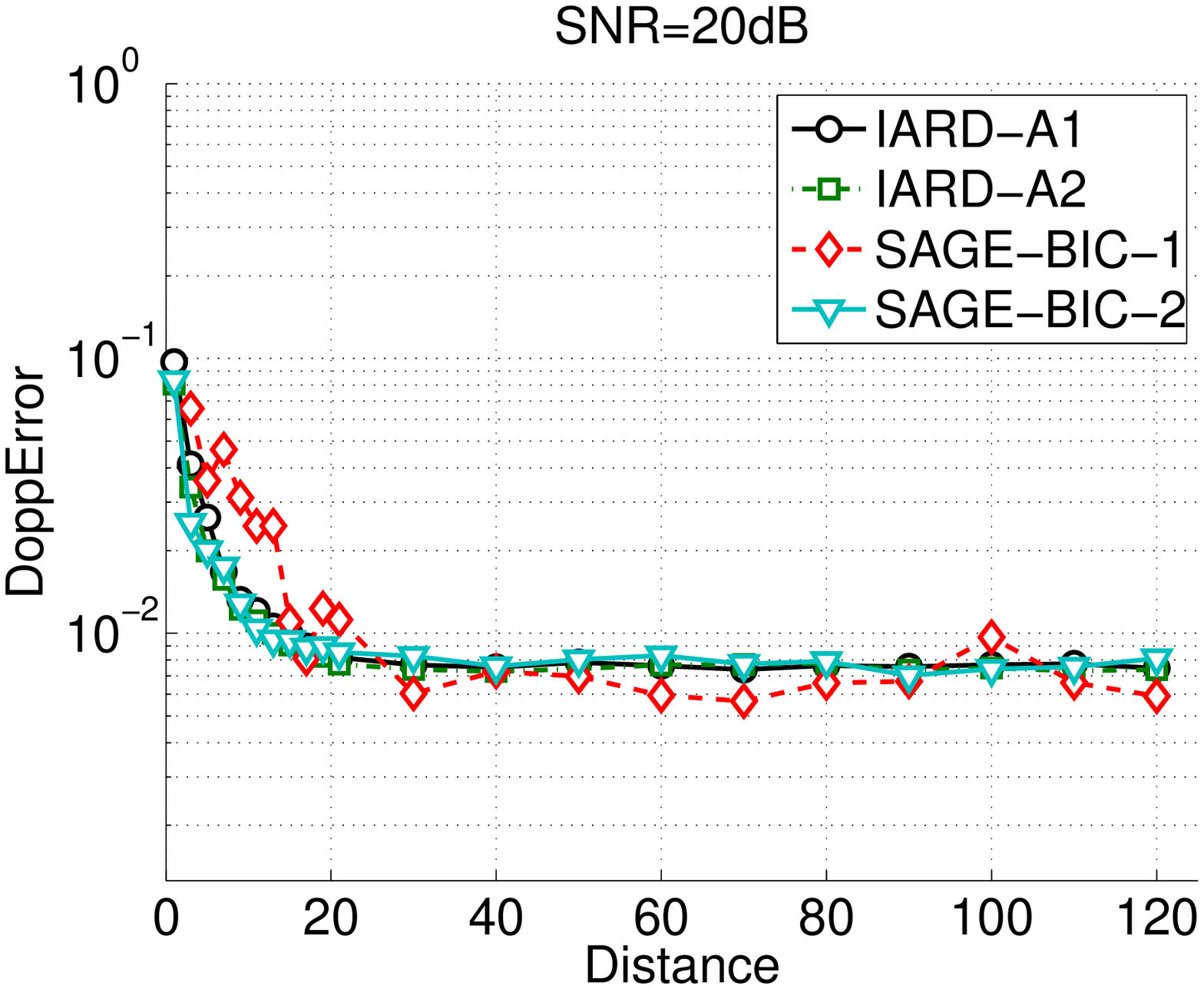}}
\subfigure[\label{fig:SNR30_Doppler}]{\includegraphics[width=0.245\linewidth]{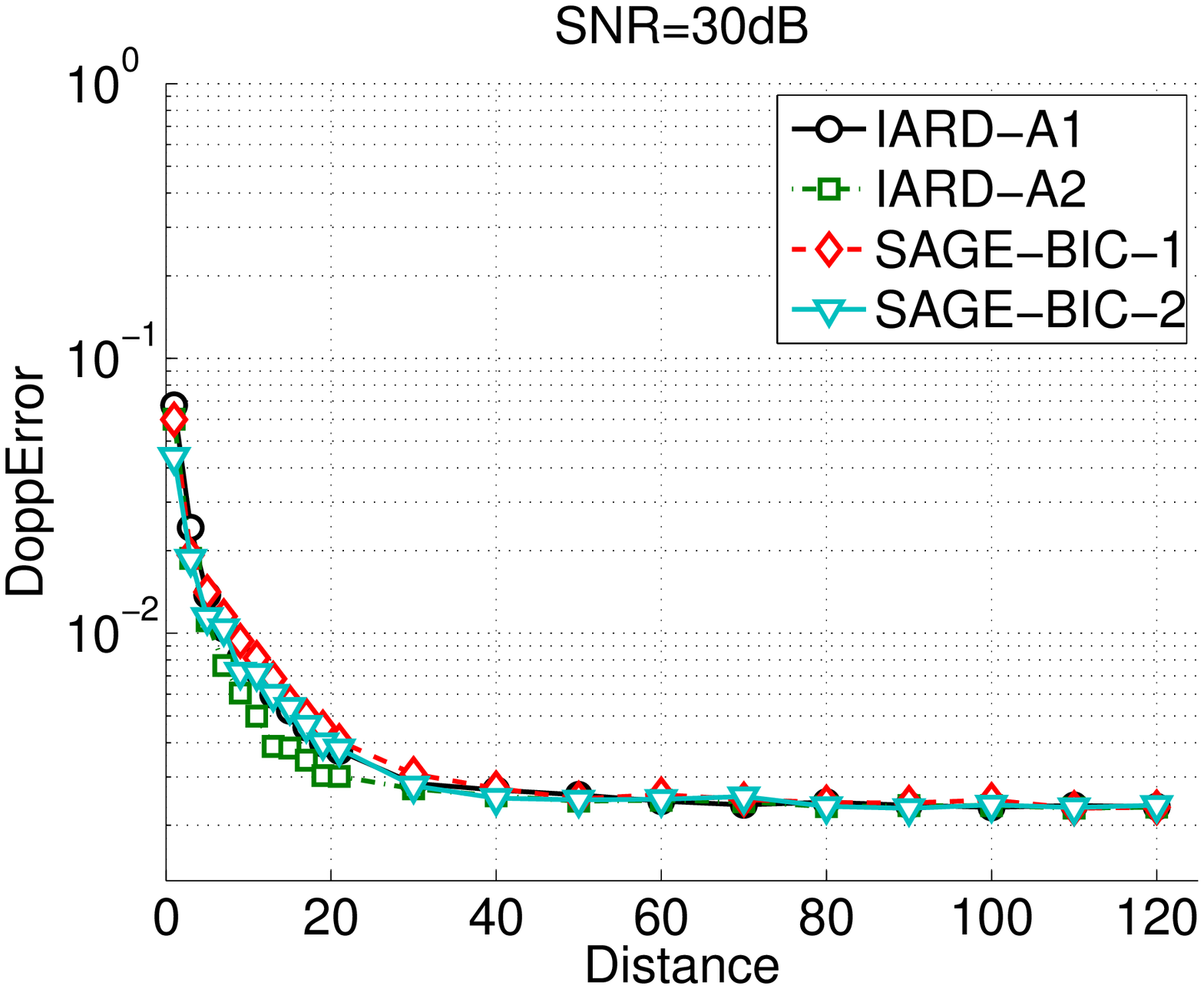}}
}
\centerline{
\subfigure[\label{fig:SNR5_Time}]{\includegraphics[width=0.245\linewidth]{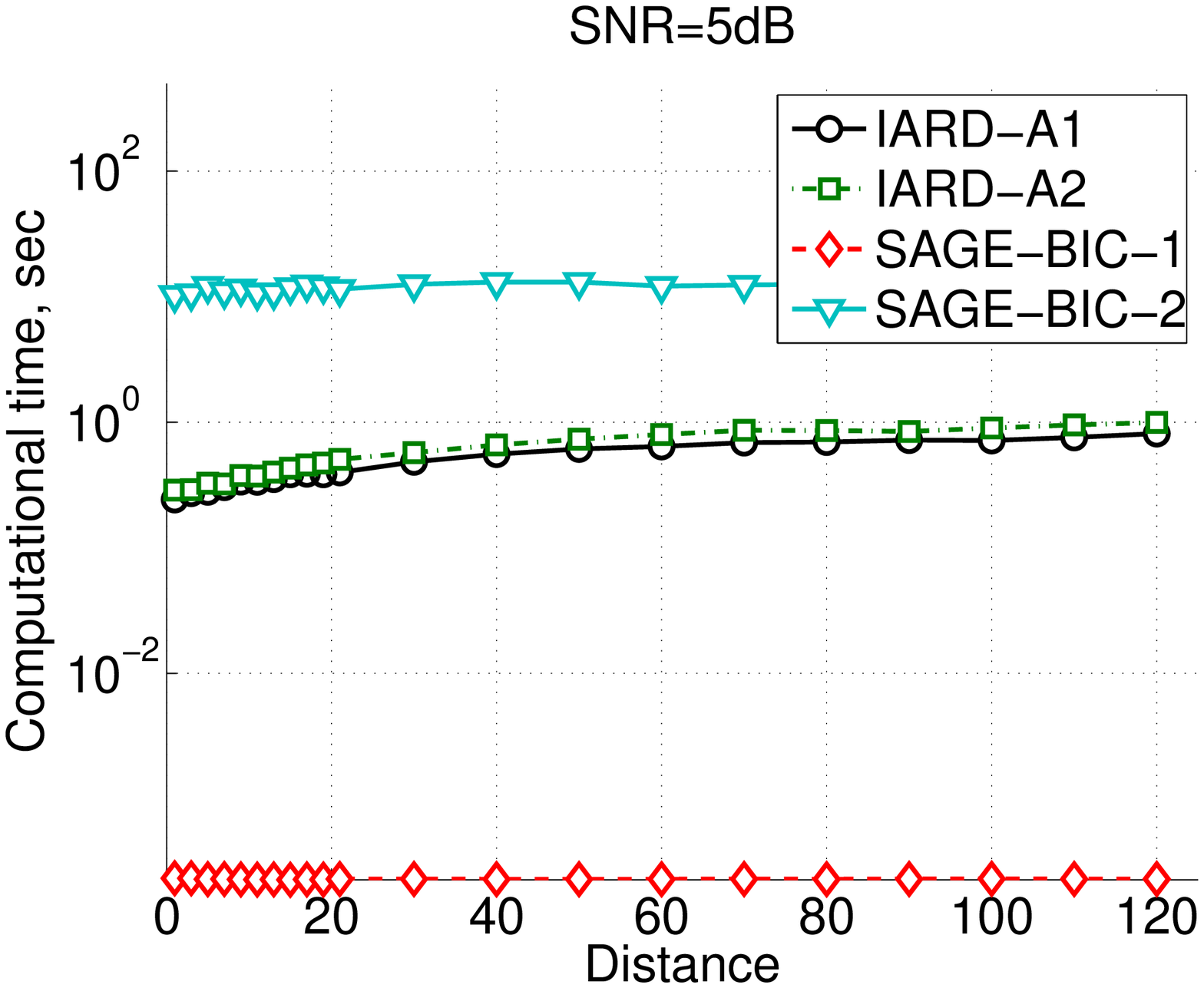}}
\subfigure[\label{fig:SNR10_Time}]{\includegraphics[width=0.245\linewidth]{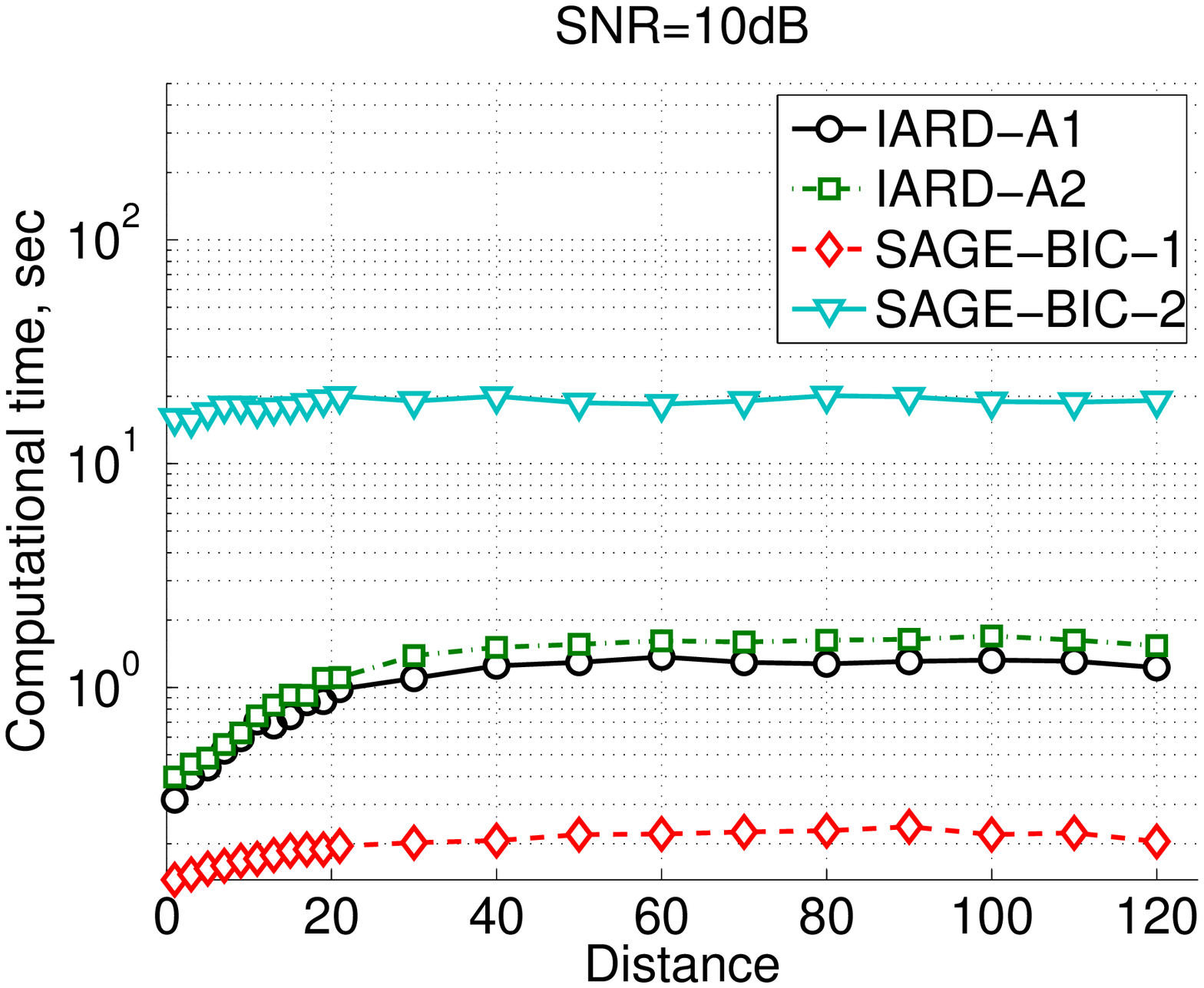}}
\subfigure[\label{fig:SNR20_Time}]{\includegraphics[width=0.245\linewidth]{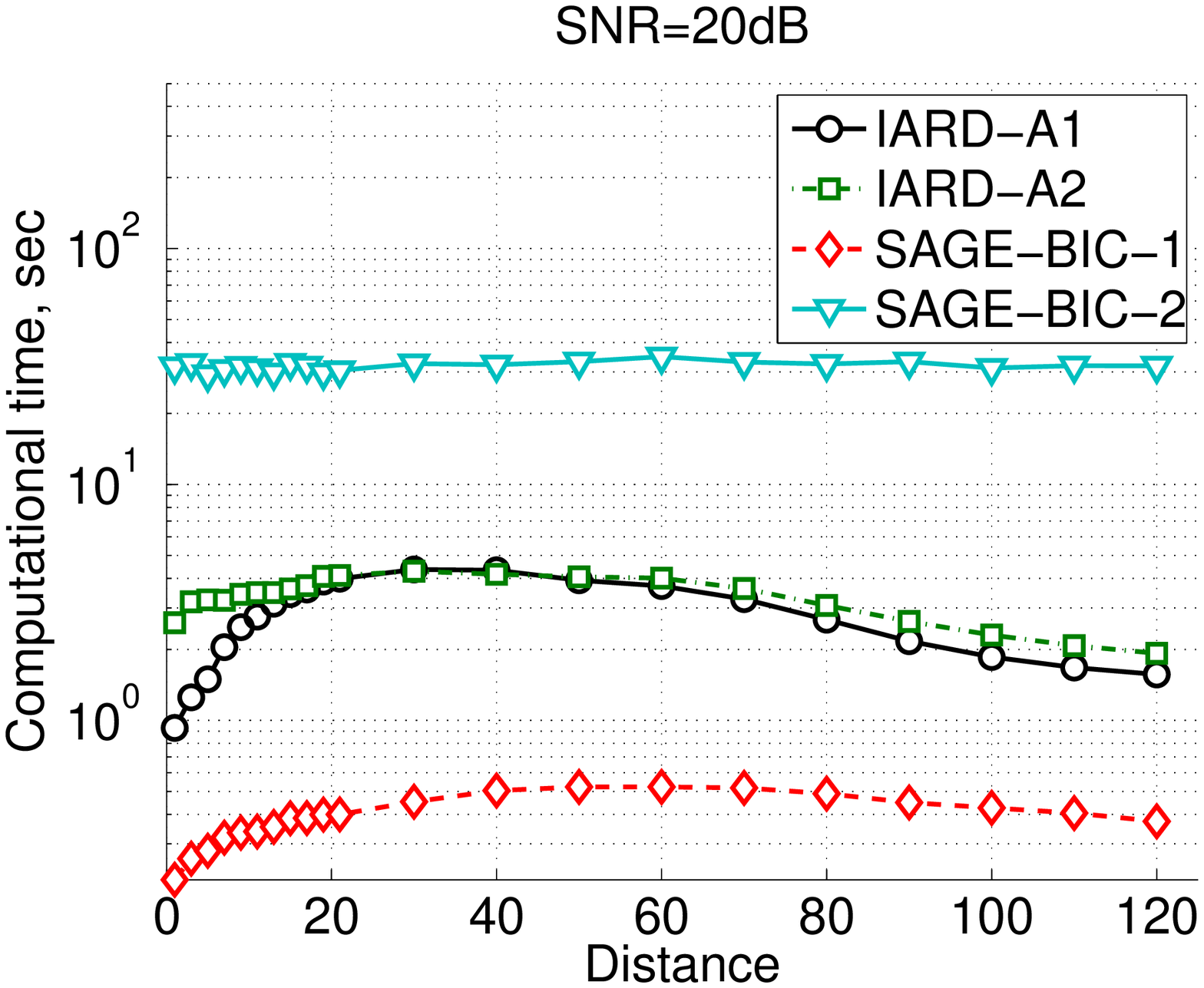}}
\subfigure[\label{fig:SNR30_Time}]{\includegraphics[width=0.245\linewidth]{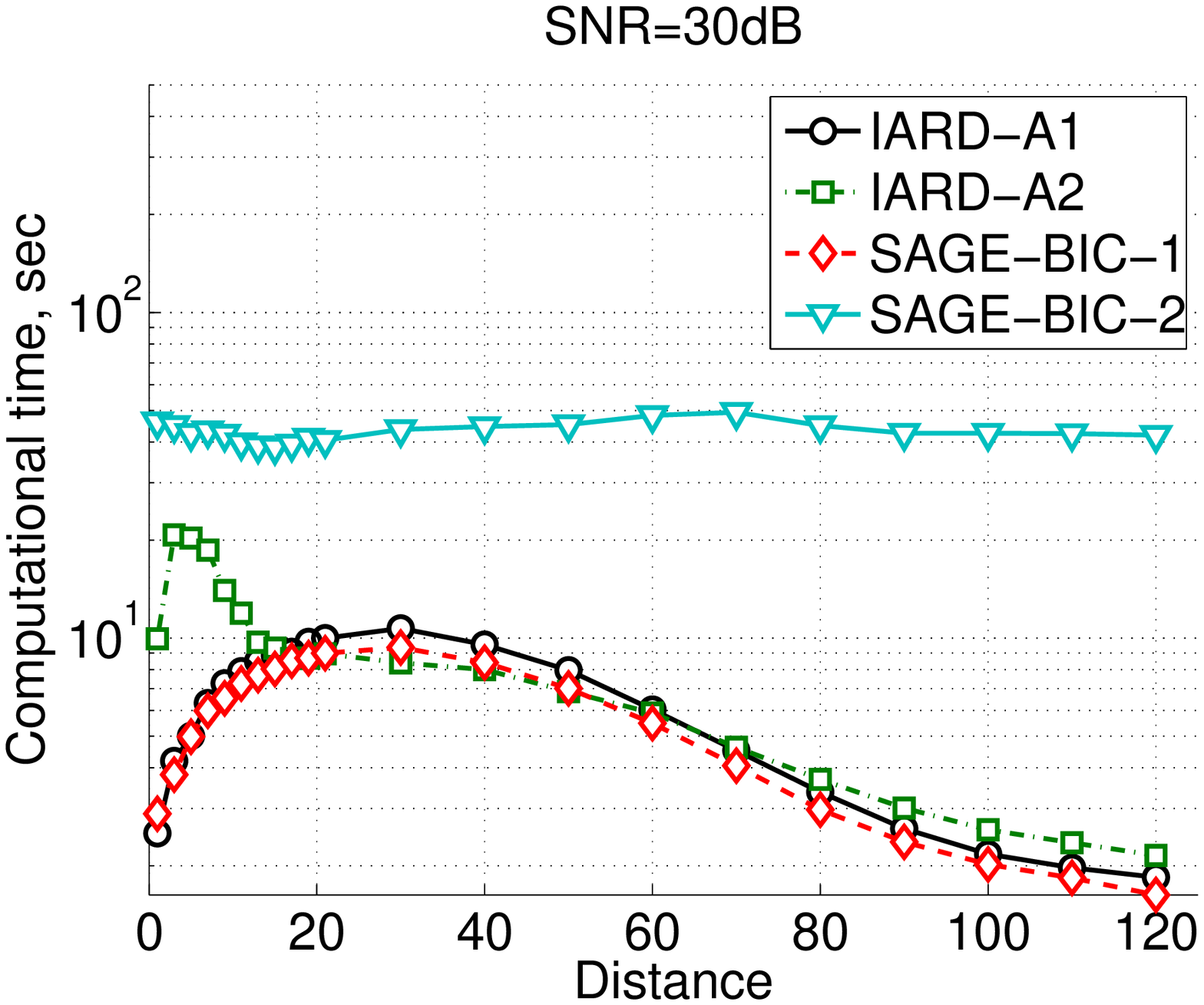}}
}
\caption{Estimation performance of the algorithm in the superresolution regime for (a-e) $5$dB, (f-j) $10$dB, (k-o) $20$dB, and (p-t) $30$dB SNR. Shown are (a,f,k,p) the averaged number of detected components $\widehat{L}$; (b,g,l,q) the probability $P_D^{L=2}$ of the detecting exactly two components;  (c,h,m,r) the normalized delay estimation RMSE, (d,i,n,s) the normalized Doppler estimation RMSE, and (e,j,o,t) the averaged computational time in seconds per single algorithm run.}
\label{fig:Resolution}
\end{figure*}
The corresponding plots for SNR $5$dB, $10$dB, $20$dB, and $30$dB are summarized in Fig.~\ref{fig:Resolution}.
The results are obtained by averaging over $2000$ independent Monte Carlo runs for the IARD-A1, IARD-A2, and SAGE-BIC-1 algorithms.
The statistics for the SAGE-BIC-2 algorithm are averaged over 300 Monte Carlo runs.

In terms of the estimated number of components $\widehat{L}$ (Fig. \ref{fig:SNR5_LEst}\--\ref{fig:SNR30_LEst}), and probabilities of detection $P_D^{(L=2)}$ (Fig. \ref{fig:SNR5_DProb}\--\ref{fig:SNR30_DProb}), the IARD-A1, IARD-A2, and SAGE-BIC-2 algorithms perform quite well, with the latter offering a slightly better performance. 
The SAGE-BIC-1 algorithm performs in contrast quite poorly: it either underestimates the number of components in low SNR regime, or consistently  overestimates the model order in the high SNR regime.
Its performance also seems to be insensitive to the component spacing $\Delta$.
In contrast, the number of correct detection for the other algorithms grows as $\Delta$ and SNR increases.

In terms of accuracy of parameter estimation (Fig. \ref{fig:SNR5_Delay}\--\ref{fig:SNR30_Delay} and \ref{fig:SNR5_Doppler}\--\ref{fig:SNR30_Doppler}) we see that in low SNR regime, SAGE-BIC-2 performs slightly better than the other algorithms.
In the high SNR regime, SAGE-BIC-2 and IARD-A1 perform identically well, with
IARD-A2 outperforming them for small component spacing $\Delta$ \--- the advantage of the assumption $\mathrm{A2}$ over a ``simpler'' assumption $\mathrm{A1}$.
For larger spacing $\Delta$, i.e., when the correlation between the components decreases, this advantage, however, disappears, and SAGE-BIC-2, IARD-A1 and IARD-A2 deliver similar performance.

Finally, let us consider the computational time of the algorithms (\ref{fig:SNR5_Time} - \ref{fig:SNR30_Time}).
It is interesting to note that although SAGE-BIC-2 has better component detection capabilities, its computational time is significantly higher, since multiple models with different number of components have to be estimated.
SAGE-BIC-1 algorithm is the fastest, since the model order selection is done prior to multipath parameter estimation \--- the most time-consuming part of the algorithm.
The IARD-A1 and IARD-A2 algorithms are much faster than SAGE-BIC-2, yet they offer a compatible performance both in terms of component detection probabilities, as well as in the parameter estimation accuracy.
For a higher number of components $L$ the inefficiency of the SAGE-BIC-2 algorithm will constitute itself quite significant.

The difference between the $\mathrm{A1}$ and $\mathrm{A2}$ assumptions exhibits itself only for component spacing $\Delta$ below approx. $60\%$, i.e., in a super-resolution regime.
In terms of the detection rate, both assumptions perform quite similarly.
As expected, the parameter estimation accuracy is better for the $\mathrm{A2}$ assumption, yet at the expense of slightly higher computational time.

\section{Conclusion}
This work discusses a joint sparse estimation and detection of multipath components within variational Bayesian framework.
The approach is based on a variational realization of incremental automatic relevance determination (IARD) algorithm \--- a Bayesian sparse signal reconstruction technique.
The variational Bayesian formulation of the algorithm permits extending the standard IARD algorithm for linear models to a problem of parameters estimation of superimposed signals, which requires nonlinear optimizations.
The sparsity is used to estimate the number of active signals in the model.

However, for the problem of super-resolution multipath component estimation, where an accurate model order selection is of a particular interest, it has been observed that IARD generally overestimates the number of components.
Here we have demonstrated that this can be explained by the model fitting step at which dispersion parameters of propagation paths are estimated.
This steps performs a nonlinear optimization  that adapts the dictionary matrix of the IARD algorithm.
As a consequence, the model overfits the measured signal and artifacts are inserted into the model.                                                                                                                                        

To overcome this we proposed a hypothesis test that exploits statistical structure of the IARD inference expressions.
We have shown that due to the optimization of multipath dispersion parameters, the corresponding sparsity parameters will follow an extreme value distribution under additive Gaussian noise assumption.
This interpretation permits a correction of sparsity-driven model order selection within IARD using binary hypotheses testing.
We have shown that the standard IARD approach is equivalent to a hypothesis test with a very high probability of false alarm, which explains model order overestimation.
By adjusting the IARD pruning conditions to guarantee the desired false alarm probability, the model order selection can be improved and correct order can be estimated even in challenging super-resolution regime.
Simulation studies have demonstrated that this adjustment allows extraction of the true signal sparsity in simulated scenarios and further acceleration of the convergence rate of the algorithm as compared to the classical information-theoretic model order selection schemes.

\bibliographystyle{IEEEtran}
\bibliography{./IEEEabrv,./detection}







\end{document}